\newtheorem{theorem}{Theorem}
\newtheorem{conjecture}{Conjecture}
\newtheorem{lemme}{Lemma}
\newtheorem{proposition}{Proposition}
\newtheorem{fact}{Fact}
\newtheorem{corollary}{Corollary}
\begin{document}

\begin{frontmatter}[classification=text]

\title{An aperiodic set of 11 Wang tiles}

\author[ej]{Emmanuel Jeandel}
\author[mr]{Micha\"el Rao}

\begin{abstract}
We present a new aperiodic tileset containing 11 Wang tiles on 4 colors, and we show that this 
tileset is minimal, in the sense that no Wang set with either fewer than 11
tiles or fewer than 4 colors is aperiodic. This gives a definitive answer to the problem raised by Wang in 1961.
\end{abstract}
\end{frontmatter}


Wang tiles are square tiles with colored edges.
A tiling of the plane by Wang tiles consists of placing a Wang tile in
each cell of the grid $\mathbb{Z}^2$ so that contiguous edges share the
same color.
The formalism of Wang tiles was introduced  by Wang~\cite{wangpatternrecoII} to study decision procedures for a specific
fragment of logic (see Section~\ref{ssec:h1} for details).

Wang posed the question of whether an aperiodic tileset exists: a
set of Wang tiles which tiles the plane but cannot do so periodically.
His student Berger quickly gave an example of such an aperiodic tileset, but it consisted of a
very large number of tiles. 
The number of tiles needed was eventually reduced by Berger himself and then by others, to achieve an aperiodic set of only 13 Wang tiles in 1996
(see Section~\ref{ssec:h2} for an overview of previous aperiodic sets of Wang tiles). Their work in this apparently tedious exercise 
introduced several novel techniques to build aperiodic tilesets, and also to prove aperiodicity.

Other work has established that it is impossible to obtain an
aperiodic tileset with 4 tiles or less~\cite{GrSh}, and that it is also impossible to obtain aperiodic set of Wang tiles 
with fewer than 4 colors~\cite{tiles3}.

In this article, we fill all the gaps: we prove that there is an
aperiodic tileset with 11 Wang tiles and 4 colors, and we also prove that there is no
aperiodic tileset with fewer than 11 Wang tiles.

The discovery of this tileset, and the proof that there is no smaller aperiodic tileset,
was achieved by a computer search, which generated all possible candidates
with 11 tiles or less.

We proved that they were not aperiodic with 10 tiles or fewer.
Surprisingly, it was somewhat easy to do so for all of them except one.
The situation is different for 11 tiles: 
while we have found an aperiodic tileset, we also have a short list of tilesets which we are yet to characterize.
This computer search is described in Section \ref{sec:no10}, along with a result of independent interest:
we show that the tileset by Culik does not tile the plane if one tile is omitted. This section can be skipped by a reader who is only interested in our tileset itself. The tileset is presented in Section~\ref{sec:11}, and the
remaining sections prove that it is indeed an aperiodic tileset.

\section{Aperiodic sets of Wang tiles}

The following is a brief summary of the known aperiodic Wang tilesets, for which
further detail can be found in~\cite{GrSh}.

\subsection{Wang tiles and the \texorpdfstring{$\forall\exists\forall$}{AEA} problem}
\label{ssec:h1}
Wang tiles were introduced by Wang in 1961~\cite{wangpatternrecoII}, to study
the decidability of the $\forall\exists\forall$ fragment of first
order logic. In his article, Wang showed how to build a tileset $\tau$ and a subset
$\tau'\subseteq \tau$ so that there exists a tiling by $\tau$ of the upper
quadrant, with tiles in the first row in $\tau'$ if and only if  $\phi$ is
satisfiable. He builds this tileset starting from a
$\forall\exists\forall$ formula $\phi$.
In his development, the decidability of this particular tiling problem would imply that
the satisfiability of $\forall\exists\forall$ formulas was decidable.

More generally, Wang asked whether the more general tiling problem
(with no particular tiles in the first row) is decidable. He posed 
the \emph{fundamental conjecture}: every tileset either admits a
periodic tiling or it does not tile.

The following year, without having proven the conjecture above, Kahr, Moore
and Wang~\cite{KahrMooreWang} proved that the $\forall\exists\forall $ 
problem is indeed undecidable. They did so by reducing it to another tiling problem:
 we fix a subset $\tau'$ of tiles so that every tile on the
diagonal of the first quadrant is in $\tau'$.
This proof was later simplified by Hermes~\cite{Hermes,Hermes2}.
From the point of view of first order logic, the problem is therefore solved.
Formally speaking, the tiling problem with this diagonal constraint is reduced to a formula
of the form $\forall x \exists y \forall z \phi(x,y,z)$ where $\phi$
contains a binary predicate $P$ and occurrences of the subformula
$P(x,x)$, which code the diagonal constraint.
If we look at $\forall \exists\forall$ formulas that do not contain
the subformula $P(x,x)$ and $P(z,z)$, the decidability of this
particular fragment remained open. 

However, Berger proved a few years later~\cite{BergerPhD} both that the domino problem is
undecidable, and that an aperiodic tileset exists.
This implies that the fragment of
$\forall x\exists y\forall z$ where the only occurrence of the binary
predicates $P$ are of the form $P(x,z), P(y,z), P(z,y), P(z,x)$ is undecidable.

Over the years, other subcases of $\forall\exists\forall$ were described.
In 1975, Aanderaa and Lewis~\cite{Aanderaa} proved the undecidability of the fragment
of $\forall\exists\forall$ where the binary predicates $P$ can only appear in the form $P(x,z)$ and $P(z,y)$.
Their proof has the consequence that: the domino problem for
\emph{deterministic} tilesets is undecidable.

\subsection{Aperiodic tilesets}
\label{ssec:h2}
The first set of Wang tiles was provided by Berger in 1964.
The set contained in the 1966 AMS publication~\cite{Berger2}, has
20426 tiles, but Berger's original PhD Thesis~\cite{BergerPhD} also contains a simplified
version with 104 tiles. It should be noted that there is a mistake in Berger's paper: namely that 3
tiles are missing and 4 tiles are unneeded, bringing the 
actual tileset to 103 tiles.
This tileset is of a substitutive nature.
Knuth~\cite{KnuthWang} gave another simplified version of Berger's
original set, with 92 tiles (6 of which are actually unneeded,
bringing the number 86).

In 1966, Lauchli obtained an aperiodic set of 40 Wang tiles, which is published in a 1975 paper
by Wang~\cite{Wang2}.

In 1967, Robinson found an aperiodic set of 104 tiles, which was
mentioned only in a Notice of the AMS summary~\cite{robinson1967seven}.
Two simplifications of this
tileset exist: first, Poizat describes a tileset of 52 tiles~\cite{Poizat}.
Second, a tileset of 56 tiles was published by Robinson in 1971~\cite{Robinson} and is
probably his most well-known tileset. In that paper, he hints at a set of 35 Wang
tiles.

Later, Robinson managed to lower the number of tiles again to 32 using an idea
by Roger Penrose.
The same idea is used by Grunbaum and Shephard to obtain an aperiodic
set of 24 tiles~\cite{GrSh}.
In 1977, Robinson obtained a set of 24 tiles from a tiling method by Ammann.
For a long time the record was held by Ammann, who obtained, in 1978, 
a set of 16 Wang tiles.
When available, details on these tilesets are provided in~\cite{GrSh}.

In 1975, Aanderaa and Lewis~\cite{Aanderaa} build the first aperiodic \emph{deterministic} tileset. No details about the tileset are
provided but it is possible to extract one from the exposition by Lewis~\cite{Lewis}.
This construction was somehow forgotten in the literature, and the
first aperiodic deterministic tileset is usually attributed to Kari in
1992~\cite{KariNil}.

In 1989, Mozes showed a 
general method that can be used to translate any substitution tiling
into a set of Wang tiles~\cite{Mozes:1989}, which will be, of course, aperiodic.
There are multiple generalizations of this result (depending on the
exact definition of ``substitution tiling''), of which we cite
only a few~\cite{GS,OlFe,Gloa}. For a specific example, Socolar built
such a representation~\cite{Senechal} of the chair tiling, which, in
our vocabulary, can be done using 64 tiles.

The story stopped until 1996, when Kari invented a new method to build
aperiodic tileset, obtaining an aperiodic set of 14 tiles~\cite{Kari14}. This was reduced to 13 tiles by Culik~\cite{Culik}
using the same method. There was speculation that one of the 13 tiles was
unnecessary, and an unpublished manuscript by Kari and Culik hints at a method to show it. 
However, this is not true: the method developed
in the present article will show this is not the case.

In 1999, Kari and Papasoglu~\cite{KariPapa} presented the first 4-way
deterministic aperiodic set. The construction was later adapted by
Lukkarilla to provide a proof of undecidability of the 4-way domino
problem~\cite{Lukka}.

The construction by Robinson was later analyzed~\cite{Salon,AD,PuttingPieces,Savi} and simplified. In 2004, Durand, Levin, and Shen
presented~\cite{BLS2} a way to simplify exposition of proofs
of aperiodicity of such tilesets. Ollinger used this method in 2008 to
obtain an aperiodic tileset with 104 tiles~\cite{Ollinger}, which is
likely a rediscovery of the unpublished tileset by Robinson.
Other simplifications of Robinson constructions were given by Levin
in 2005~\cite{LevinAp} and Poupet in 2010 ~\cite{Poupet}, using ideas
similar to Robinson.

In 2008, Durand, Romashchenko, and Shen provided a new construction
based on the classical fixed point construction from computability
theory~\cite{BSR,DuRoSh}.

\def\ws{\mathcal{T}}
\def\ds{\mathcal{D}}
\def\Ds{{\mathcal{D}}}
\def\tr#1{{#1}^{\operatorname{tr}}}
\def\sc#1{\operatorname{s}({#1})}

\def\hiso{\sim}

\section{Preliminaries}

\subsection{Wang tiles}
A \emph{Wang tile} is a unit square with colored edges.
Formally, let $H,V$ be two finite sets (the horizontal and vertical
colors, respectively). A wang tile $t=(t_w,t_e,t_s,t_n)$ (for west, east, south and north) is an element of $H^2 \times V^2$.
Thus, the horizontal colors are used on west and east sides (that is, horizontal edges of the square), and vertical colors are used in north and south side (that is, vertical edges).

A \emph{Wang set} is a set of Wang tiles, formally viewed as a tuple
$(H,V,T)$, where $T\subseteq H^2\times V^2$ is the set of \emph{tiles}.
Figure~\ref{fig:ws13} presents a well-known example of a Wang set.
A Wang set is said to be \emph{empty} if $T=\emptyset$.

Let $\ws=(H,V,T)$ be a Wang set.
Let $X\subseteq \mathbb{Z}^2$.
A \emph{tiling of $X$ by $\ws$} is a mapping from $X$ to  $\ws$
so that contiguous edges have the same color; that is, 
it is a function $f : X \to T$ such that $f(x,y)_e=f(x+1,y)_w$ and
$f(x,y)_n=f(x,y+1)_s$ for every $(x,y)\in \mathbb{Z}^2$ when the function is defined.
We are especially interested in the tilings of $\mathbb{Z}^2$ by a
Wang set $\ws$. When we say a \emph{tiling of the plane by $\ws$}, or
simply a \emph{tiling by $\ws$}, we mean a tiling of $\mathbb{Z}^2$ by $\ws$.

A tiling $f$ is \emph{periodic} if there is a $(u,v) \in \mathbb{Z}^2 \setminus (0,0)$ 
such that $f(x,y)=f(x+u,y+v)$ for every $(x,y)\in \mathbb{Z}^2$. A
tiling is \emph{aperiodic} if it is not periodic.

A Wang set \emph{tiles $X$} (resp. \emph{tiles the plane}) if there exists a tiling of $X$ (resp. the plane) by $\ws$.
A Wang set is \emph{finite} if there is no tiling of the plane by $\ws$.
A Wang set is \emph{periodic} if there is a tiling $t$ by $\ws$ which is periodic.
A Wang set is \emph{aperiodic} if it tiles the plane, and every tiling by $\ws$ is not periodic. 

To quote a few well-known folklore results:
\begin{lemme}\label{lm:per}
If $\ws$ is periodic, then there is a tiling $t$ by $\ws$ with two
linearly independent translation vectors (in particular a tiling $t$
with  vertical and horizontal translation vectors).
\end{lemme}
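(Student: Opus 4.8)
The plan is to produce, from an arbitrary periodic tiling, a new tiling that is invariant under two linearly independent translations; the ``horizontal and vertical'' clause will then follow from a short lattice computation. So let $f$ be a tiling by $\ws$ with a period $\vec p = (u,v)\neq (0,0)$. I would write $\vec p = d\,\vec p_0$ with $\vec p_0$ primitive and $d=\gcd(u,v)\geq 1$, and complete $\vec p_0$ to a unimodular basis $(\vec p_0,\vec q)$ of $\mathbb{Z}^2$ (possible by B\'ezout, since $\vec p_0$ is primitive). Re-indexing the tiling in these coordinates by $F(s,t)=f(s\,\vec p_0+t\,\vec q)$ turns periodicity of $f$ under $\vec p$ into the simple statement $F(s+d,t)=F(s,t)$, i.e.\ $F$ is $d$-periodic in its first coordinate.

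The heart of the argument will be a one-dimensional pumping step. Under the change of basis, the east and north adjacency conditions of a Wang tiling become color-matching constraints relating the \emph{slice} $G_t=(F(s,t))_{s}$ at height $t$ to the slices at heights $t\pm u_0$ and $t\pm v_0$; in particular each constraint has bounded range $L$ in the $t$-direction, and each slice is $d$-periodic in $s$, hence takes one of at most $|T|^{d}$ values. Viewing a length-$L$ window of consecutive slices as a letter of the finite alphabet $(T^{d})^{L}$, the windows indexed by $t\in\mathbb{Z}$ must by the pigeonhole principle repeat: there are $t_1<t_2$ with identical windows. Looping the block of slices between heights $t_1$ and $t_2$ then yields a configuration $F'$ that is $d$-periodic in $s$ and $Q$-periodic in $t$, with $Q=t_2-t_1$. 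That $F'$ still satisfies every adjacency constraint is checked exactly as when one stacks two identical rows of an ordinary horizontally periodic tiling: inside a block the constraints are inherited from $f$, and at the seam they hold because the two glued windows agree and cover the full range $L$ of the constraints.

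Transporting $F'$ back through the change of basis will give a genuine tiling $f'$ by $\ws$, where $f'(x,y)$ is $F'$ evaluated at the $(\vec p_0,\vec q)$-coordinates of $(x,y)$. This $f'$ is invariant under the two translations $\vec p=d\,\vec p_0$ and $Q\,\vec q$, which are linearly independent because $\vec p_0,\vec q$ form a basis. Finally, whenever a tiling admits two independent periods $\vec a,\vec b$ it admits every integer combination $n\vec a+m\vec b$ as a period; taking the combinations that annihilate one coordinate produces the nonzero horizontal period $(\det(\vec a,\vec b),0)$ and the nonzero vertical period $(0,\det(\vec a,\vec b))$, which establishes the parenthetical claim for the single tiling $f'$.

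The main obstacle I expect is purely bookkeeping: carefully writing the two Wang adjacency relations in the sheared coordinates $(s,t)$ and pinning down the window length $L$ so that the pigeonhole-and-loop step provably preserves all constraints across the seam. Conceptually nothing deep happens — after the coordinate change the statement is just the familiar fact that a bi-infinite sequence over a finite alphabet obeying finite-range local rules can be made periodic — but the index arithmetic of the change of basis is where an error would most easily creep in. (In the special case $v=0$ one may simply take $\vec p_0=(1,0)$ and $\vec q=(0,1)$, and the whole argument collapses to the textbook ``find two identical rows and stack them''.)
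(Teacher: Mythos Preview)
The paper does not actually prove this lemma; it is simply quoted as one of ``a few well known folklore results'' and left at that. Your argument is sound: the unimodular change of basis turns the given period into pure periodicity in the first new coordinate, after which the problem is genuinely one-dimensional and the pigeonhole-and-loop step goes through; the bookkeeping you flag (pinning down the range $L$ of the transformed local rules and sizing the window so the seam is covered) is routine.

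For comparison with how this folklore is usually presented, a common variant avoids the basis change entirely. Assuming $v>0$, view the horizontal strip of height $v$ as a one-dimensional configuration over the finite alphabet $T^{v}$, subject to the ordinary nearest-neighbour matching together with one extra constraint of range $|u|+1$ encoding the wrap-around $n(f(x,v-1))=s(f(x-u,0))$ forced by the period. Pigeonhole on $(|u|+1)$-blocks then produces a horizontally $P$-periodic strip satisfying the wrap-around, which unrolls via $(u,v)$ to a tiling with the independent periods $(P,0)$ and $(u,v)$. The two routes are the same idea in different packaging: yours straightens the period first and pays with sheared local rules of range $L$, while the direct version keeps the standard local rules but carries a single long-range wrap constraint; your slice alphabet $|T|^{\gcd(u,v)}$ is smaller than the direct version's $|T|^{|v|}$, which is a mild bonus.
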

\begin{lemme}\label{lm:comp}
If, for every $k\in \mathbb{N}$, there exists a tiling of $[0,\ldots, k]\times [0,\ldots, k]$ by $\ws$, then $\ws$ tiles the plane.
\end{lemme}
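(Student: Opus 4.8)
The plan is to use a compactness argument, realizing a tiling of the whole plane as a ``limit'' of the given finite tilings; this is exactly the setting of König's lemma (equivalently, of the compactness of $T^{\mathbb{Z}^2}$ in the product topology). The crucial structural fact is that $T$ is \emph{finite}: since $T\subseteq H^2\times V^2$ with $H,V$ finite, there are only finitely many distinct tilings of any fixed finite region, and this is what powers the pigeonhole step below.

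First I would reduce to squares centered at the origin. Applying the hypothesis with $k=2n$, a tiling of $\{0,\ldots,2n\}^2$ translated by $(-n,-n)$ is a tiling of $S_n:=\{-n,\ldots,n\}^2$. Hence for every $n\in\mathbb{N}$ there is a valid tiling $f_n$ of $S_n$. Moreover, restricting a tiling to a sub-square still satisfies all adjacency conditions on that sub-square, so for $m\le n$ the restriction $f_n|_{S_m}$ is a valid tiling of $S_m$.

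Next comes the diagonal extraction. Set $N_{-1}=\mathbb{N}$, and suppose inductively that $N_{m-1}$ is an infinite set of indices. Since $S_m$ is finite and $|T|<\infty$, the restrictions $\{\,f_n|_{S_m}: n\in N_{m-1},\ n\ge m\,\}$ take only finitely many values, so by the pigeonhole principle one value is attained for an infinite subset $N_m\subseteq N_{m-1}$. This yields a decreasing chain $N_0\supseteq N_1\supseteq N_2\supseteq\cdots$ of infinite index sets such that all $f_n|_{S_m}$ with $n\in N_m$ coincide; call this common value $g_m$. Because $N_{m+1}\subseteq N_m$, we have $g_{m+1}|_{S_m}=g_m$, so the $g_m$ are consistent and glue to a single function $f:\mathbb{Z}^2\to T$ with $f|_{S_m}=g_m$ for every $m$.

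Finally I would verify that $f$ is a genuine tiling of $\mathbb{Z}^2$. Any horizontal or vertical adjacency constraint involves two neighboring cells, and these both lie in $S_m$ for some $m$; there $f$ agrees with $f_n|_{S_m}$ for any $n\in N_m$, and $f_n$ satisfies the adjacency conditions on $S_m$, so the constraint holds for $f$. As every constraint is thereby satisfied, $f$ tiles the plane. I expect no genuine obstacle: the argument is routine once the tile set's finiteness is invoked, and the only point demanding care is the bookkeeping of the nested infinite index sets $N_m$ and the check that the diagonal limit is well defined and locally valid.
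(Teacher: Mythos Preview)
Your argument is correct and is exactly the standard König/compactness proof one would expect here. Note, however, that the paper does not actually prove this lemma: it is quoted as a ``well known folklore result'' and left without proof, so there is nothing in the paper to compare against beyond observing that your write-up supplies precisely the classical justification the authors are implicitly invoking.
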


\subsection{Transducers}
One of the simplest but most crucial observations we will use in this article is that a Wang set may be viewed as a finite state transducer: a finite state automata with an input tape and an output tape.
In the entire paper, we use the same notation for Wang sets and for transducers, that is, a \emph{transducer} is a triplet $(H,V,T)$, where $H$ is the set of states, $V$ is the (input and output) alphabet, and $T$ is the set of transitions. 
Each $t=(w,e,s,n)\in T$ is a transition (in other words, a tile in the Wang set formalism), and the transducer authorizes the transition from 
the state $w$ to the state $e$, reading the letter $s$ on the input tape and writing $n$ on the output state.
It should be noted that, unlike usual automata and finite state transducers, we do not have either initial or final states: we work on biinfinite words.

Figure~\ref{fig:ws13} presents the popular set of Wang
tiles introduced by Culik from both points of view.
Note that we choose to label transitions with $s | n$ instead of, for example, $\frac{n}{s}$. This choice is intended to adhere to the classical way of depicting finite state transducers. 

\begin{figure}[htbp]
\begin{minipage}{.9\linewidth}
\center
\begin{tikzpicture}[scale=1.3]
\tikzstyle{every state}=[inner sep=.5mm]
\tikzstyle{tr}=[inner sep=.5mm]
\tikzstyle{trans}=[->,thick,auto,swap]
\node[state] (nm1) at (2.912000,5.600000) {$1$};
\node[state] (nm2) at (4.032000,4.256000) {$0$};
\node[state] (n0) at (1.792000,4.256000) {$2$};
\node[state] (n0p) at (5.00000,4.256000) {$0'$};
\node[state] (n12) at (5.00000,5.600000) {$\frac{1}{2}$};
\path[trans] (nm2) edge[bend right=25] node[tr] {$1|2$} (nm1);
\path[trans] (nm2) edge[bend right=0] node[tr] {$1|1$} (n0);
\path[trans] (nm1) edge[bend right=25] node[tr] {$1|2$} (n0);
\path[trans] (nm1) edge[bend right=0] node[tr] {$0|1$} (nm2);
\path[trans] (n0) edge[bend right=25] node[tr] {$0|2$} (nm2);
\path[trans] (n0) edge[bend right=0] node[tr] {$0|1$} (nm1);
\path[trans] (n0p)[loop right] edge node[tr,align=right] {$0'|0$\\$2|1$} (n0p);
\path[trans] (n0p) edge[bend right=15] node[tr,align=left] {$1|0$\\$1|0'$} (n12);
\path[trans] (n12) edge[loop right] node[tr,align=right] {$0'|0$\\$2|1$} (n12);
\path[trans] (n12) edge[bend right=15] node[tr] {$1|1$} (n0p);
 \end{tikzpicture}
\end{minipage}

\center
\begin{tikzpicture}[scale=0.9]
\tikzstyle{every node}=[font=\footnotesize]
\draw (0.000000,0.000000) -- (1.000000,0.000000) ;
\draw (0.000000,0.000000) -- (1.000000,1.000000) ;
\draw (0.000000,0.000000) -- (0.000000,1.000000) ;
\draw (1.000000,0.000000) -- (1.000000,1.000000) ;
\draw (0.000000,1.000000) -- (1.000000,1.000000) ;
\draw (0.000000,1.000000) -- (1.000000,0.000000) ;
\draw (0.430000,0.500000) node[left]{$0$} ;
\draw (0.570000,0.500000) node[right]{$1$} ;
\draw (0.500000,0.540000) node[above]{$2$} ;
\draw (0.500000,0.460000) node[below]{$1$} ;
\draw (1.300000,0.000000) -- (2.300000,0.000000) ;
\draw (1.300000,0.000000) -- (2.300000,1.000000) ;
\draw (1.300000,0.000000) -- (1.300000,1.000000) ;
\draw (2.300000,0.000000) -- (2.300000,1.000000) ;
\draw (1.300000,1.000000) -- (2.300000,1.000000) ;
\draw (1.300000,1.000000) -- (2.300000,0.000000) ;
\draw (1.730000,0.500000) node[left]{$0$} ;
\draw (1.870000,0.500000) node[right]{$2$} ;
\draw (1.800000,0.540000) node[above]{$1$} ;
\draw (1.800000,0.460000) node[below]{$1$} ;
\draw (2.600000,0.000000) -- (3.600000,0.000000) ;
\draw (2.600000,0.000000) -- (3.600000,1.000000) ;
\draw (2.600000,0.000000) -- (2.600000,1.000000) ;
\draw (3.600000,0.000000) -- (3.600000,1.000000) ;
\draw (2.600000,1.000000) -- (3.600000,1.000000) ;
\draw (2.600000,1.000000) -- (3.600000,0.000000) ;
\draw (3.030000,0.500000) node[left]{$1$} ;
\draw (3.170000,0.500000) node[right]{$2$} ;
\draw (3.100000,0.540000) node[above]{$2$} ;
\draw (3.100000,0.460000) node[below]{$1$} ;
\draw (3.900000,0.000000) -- (4.900000,0.000000) ;
\draw (3.900000,0.000000) -- (4.900000,1.000000) ;
\draw (3.900000,0.000000) -- (3.900000,1.000000) ;
\draw (4.900000,0.000000) -- (4.900000,1.000000) ;
\draw (3.900000,1.000000) -- (4.900000,1.000000) ;
\draw (3.900000,1.000000) -- (4.900000,0.000000) ;
\draw (4.330000,0.500000) node[left]{$1$} ;
\draw (4.470000,0.500000) node[right]{$0$} ;
\draw (4.400000,0.540000) node[above]{$1$} ;
\draw (4.400000,0.460000) node[below]{$0$} ;
\draw (5.200000,0.000000) -- (6.200000,0.000000) ;
\draw (5.200000,0.000000) -- (6.200000,1.000000) ;
\draw (5.200000,0.000000) -- (5.200000,1.000000) ;
\draw (6.200000,0.000000) -- (6.200000,1.000000) ;
\draw (5.200000,1.000000) -- (6.200000,1.000000) ;
\draw (5.200000,1.000000) -- (6.200000,0.000000) ;
\draw (5.630000,0.500000) node[left]{$2$} ;
\draw (5.770000,0.500000) node[right]{$0$} ;
\draw (5.700000,0.540000) node[above]{$2$} ;
\draw (5.700000,0.460000) node[below]{$0$} ;
\draw (6.500000,0.000000) -- (7.500000,0.000000) ;
\draw (6.500000,0.000000) -- (7.500000,1.000000) ;
\draw (6.500000,0.000000) -- (6.500000,1.000000) ;
\draw (7.500000,0.000000) -- (7.500000,1.000000) ;
\draw (6.500000,1.000000) -- (7.500000,1.000000) ;
\draw (6.500000,1.000000) -- (7.500000,0.000000) ;
\draw (6.930000,0.500000) node[left]{$2$} ;
\draw (7.070000,0.500000) node[right]{$1$} ;
\draw (7.000000,0.540000) node[above]{$1$} ;
\draw (7.000000,0.460000) node[below]{$0$} ;
\draw (0.000000,1.300000) -- (1.000000,1.300000) ;
\draw (0.000000,1.300000) -- (1.000000,2.300000) ;
\draw (0.000000,1.300000) -- (0.000000,2.300000) ;
\draw (1.000000,1.300000) -- (1.000000,2.300000) ;
\draw (0.000000,2.300000) -- (1.000000,2.300000) ;
\draw (0.000000,2.300000) -- (1.000000,1.300000) ;
\draw (0.480000,1.800000) node[left]{$0'$} ;
\draw (0.520000,1.800000) node[right]{$0'$} ;
\draw (0.500000,1.840000) node[above]{$0$} ;
\draw (0.500000,1.760000) node[below]{$0'$} ;
\draw (1.300000,1.300000) -- (2.300000,1.300000) ;
\draw (1.300000,1.300000) -- (2.300000,2.300000) ;
\draw (1.300000,1.300000) -- (1.300000,2.300000) ;
\draw (2.300000,1.300000) -- (2.300000,2.300000) ;
\draw (1.300000,2.300000) -- (2.300000,2.300000) ;
\draw (1.300000,2.300000) -- (2.300000,1.300000) ;
\draw (1.780000,1.800000) node[left]{$0'$} ;
\draw (1.820000,1.800000) node[right]{$0'$} ;
\draw (1.800000,1.840000) node[above]{$1$} ;
\draw (1.800000,1.760000) node[below]{$2$} ;
\draw (2.600000,1.300000) -- (3.600000,1.300000) ;
\draw (2.600000,1.300000) -- (3.600000,2.300000) ;
\draw (2.600000,1.300000) -- (2.600000,2.300000) ;
\draw (3.600000,1.300000) -- (3.600000,2.300000) ;
\draw (2.600000,2.300000) -- (3.600000,2.300000) ;
\draw (2.600000,2.300000) -- (3.600000,1.300000) ;
\draw (3.080000,1.800000) node[left]{$0'$} ;
\draw (3.170000,1.800000) node[right]{$\frac{1}{2}$} ;
\draw (3.100000,1.840000) node[above]{$0$} ;
\draw (3.100000,1.760000) node[below]{$1$} ;
\draw (3.900000,1.300000) -- (4.900000,1.300000) ;
\draw (3.900000,1.300000) -- (4.900000,2.300000) ;
\draw (3.900000,1.300000) -- (3.900000,2.300000) ;
\draw (4.900000,1.300000) -- (4.900000,2.300000) ;
\draw (3.900000,2.300000) -- (4.900000,2.300000) ;
\draw (3.900000,2.300000) -- (4.900000,1.300000) ;
\draw (4.380000,1.800000) node[left]{$0'$} ;
\draw (4.470000,1.800000) node[right]{$\frac{1}{2}$} ;
\draw (4.400000,1.840000) node[above]{$0'$} ;
\draw (4.400000,1.760000) node[below]{$1$} ;
\draw (5.200000,1.300000) -- (6.200000,1.300000) ;
\draw (5.200000,1.300000) -- (6.200000,2.300000) ;
\draw (5.200000,1.300000) -- (5.200000,2.300000) ;
\draw (6.200000,1.300000) -- (6.200000,2.300000) ;
\draw (5.200000,2.300000) -- (6.200000,2.300000) ;
\draw (5.200000,2.300000) -- (6.200000,1.300000) ;
\draw (5.630000,1.800000) node[left]{$\frac{1}{2}$} ;
\draw (5.770000,1.800000) node[right]{$\frac{1}{2}$} ;
\draw (5.700000,1.840000) node[above]{$0$} ;
\draw (5.700000,1.760000) node[below]{$0'$} ;
\draw (6.500000,1.300000) -- (7.500000,1.300000) ;
\draw (6.500000,1.300000) -- (7.500000,2.300000) ;
\draw (6.500000,1.300000) -- (6.500000,2.300000) ;
\draw (7.500000,1.300000) -- (7.500000,2.300000) ;
\draw (6.500000,2.300000) -- (7.500000,2.300000) ;
\draw (6.500000,2.300000) -- (7.500000,1.300000) ;
\draw (6.930000,1.800000) node[left]{$\frac{1}{2}$} ;
\draw (7.070000,1.800000) node[right]{$\frac{1}{2}$} ;
\draw (7.000000,1.840000) node[above]{$1$} ;
\draw (7.000000,1.760000) node[below]{$2$} ;
\draw (7.800000,1.300000) -- (8.800000,1.300000) ;
\draw (7.800000,1.300000) -- (8.800000,2.300000) ;
\draw (7.800000,1.300000) -- (7.800000,2.300000) ;
\draw (8.800000,1.300000) -- (8.800000,2.300000) ;
\draw (7.800000,2.300000) -- (8.800000,2.300000) ;
\draw (7.800000,2.300000) -- (8.800000,1.300000) ;
\draw (8.230000,1.800000) node[left]{$\frac{1}{2}$} ;
\draw (8.320000,1.800000) node[right]{$0'$} ;
\draw (8.300000,1.840000) node[above]{$1$} ;
\draw (8.300000,1.760000) node[below]{$1$} ;
\end{tikzpicture}
 \caption{The aperiodic set of 13 tiles obtained by Culik from an idea by Kari: the transducer view and the tile view.}
\label{fig:ws13}
\end{figure}

A \emph{biinfinite word} (or \emph{biinfinite sequence}) on the alphabet $A$ is a sequence $(w_i)_{i\in \mathbb{Z}}$ such that, for every $i\in \mathbb{Z}$, $w_i\in A$.
If $w$ and $w'$ are biinfinite words over the alphabet $V$, we will
write $w \ws w'$ if $w'$ is the image of $w$ by the transducer. More formally, $w\ws w'$ if there is a biinfinite sequence $ (q_i)_{i\in \mathbb{Z}}$ of states such that for every $i\in \mathbb{Z}$, $(q_i,q_{i+1},w_i,w'_i) \in T$.
In Wang tile formalism, $w \ws w'$ if one can tile a row such that $w$ are the sequence of colors on south edges, and $w'$ the color on north edges.
The transducer is usually nondeterministic, so this is indeed a
partial relation, not a function.

A \emph{run} of a transducer $\ws$ is a (possibly infinite or biinfinite) sequence of biinfinite words $(w_i)_{i\in I}$ (where $I$ is an interval of $\mathbb{Z}$) such that, for all $\{i,i+1\} \subset I$, $w_i\ws w_{i+1}$.
In this formalism, tilings of the plane correspond exactly to biinfinite runs of
the transducer, and $(H,V,T)$ tiles the plane if and only if there exists a biinfinite run of $(H,V,T)$.
Note also that, by compactness, there exists a biinfinite run of $(H,V,T)$ if and only if there exists an infinite run of $(H,V,T)$.

The composition of Wang sets, seen as transducers, is straightforward:
let $\ws=(H,V,T)$ and $\ws'=(H,V',T')$ be two Wang sets. Then $\ws\circ \ws'$ is the Wang set $(H\times H',V,T'')$, where $$T''=\{((w,w'),(e,e'),s,n') : (w,e,s,n) \in T, (w',e',s',n')\in T' \text{ and } n=s'\}.$$
Let $\ws^k$, $k\in \mathbb{N}\setminus \{0\}$ be $\ws$ if $k=1$, $\ws^{k-1} \circ \ws$ otherwise.

A reformulation of the original question is as follows:
\begin{lemme}
	A Wang set $\ws$ is finite if there is no infinite run of the
	transducer $\ws$: there is no biinfinite sequence $(w_k)_{k \in \mathbb{N}}$ so that
	$w_k \ws w_{k+1}$ for all $k$.
	
	A Wang set $\ws$ is periodic if and only if  there exists a biinfinite word $w$ and a positive integer $k$ 
	so that $w \ws^k w$.		
\end{lemme}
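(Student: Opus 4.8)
The plan is to prove the two statements separately, and in each case to translate the tiling condition into the transducer language via the dictionary already set up in the excerpt. Recall that a tiling $f$ of $\mathbb{Z}^2$ is equivalent to a biinfinite sequence of rows $(w_k)_{k\in\mathbb{Z}}$, where $w_k$ is the biinfinite word read along row $k$ (the sequence of vertical colors, say, the south edges of that row), and the vertical matching condition $n(f(x,y))=s(f(x,y+1))$ is precisely the statement that $w_k \ws w_{k+1}$: the transducer reads the south word of a row and writes its north word, which must equal the south word of the row above. So a tiling of the plane is exactly a biinfinite run of the transducer.

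\emph{Finiteness.} For the first claim I would prove the contrapositive in both directions. If $\ws$ tiles the plane, then by the correspondence above there is a biinfinite sequence $(w_k)$ with $w_k \ws w_{k+1}$, i.e.\ an infinite run. Conversely, suppose there is an infinite run $(w_k)_{k\in\mathbb{Z}}$. Then stacking the corresponding rows gives a valid tiling of $\mathbb{Z}^2$, so $\ws$ tiles the plane. Hence ``$\ws$ is finite'' (no tiling of the plane) is equivalent to ``no infinite run exists.'' The only delicate point is whether a run merely infinite in one direction, or a long finite run, forces a genuine biinfinite run; but the statement as phrased already asks for a biinfinite sequence, matching the biinfinite rows of a plane tiling, so this matches directly. (If one instead started from finite runs, Lemma~\ref{lm:comp} together with a compactness/König argument would promote them to a biinfinite run.)

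\emph{Periodicity.} For the second claim I would again argue both directions using the composition operation. First note, by an easy induction on $k$ using the definition of $\ws\circ\ws'$, that $w \ws^k w'$ holds iff there is a chain $w=v_0 \ws v_1 \ws \cdots \ws v_k = w'$; the intermediate states of $\ws^k$ record exactly the intermediate rows $v_1,\dots,v_{k-1}$. Now suppose $w \ws^k w$ for some word $w$ and $k\ge 1$. Then we obtain a vertically periodic chain of rows, and repeating it periodically in both directions yields a tiling that is invariant under the vertical translation $(0,k)$; this is a periodic tiling, so $\ws$ is periodic. For the converse, suppose $\ws$ is periodic. By Lemma~\ref{lm:per} there is a tiling invariant under two linearly independent vectors, and in particular one invariant under a nonzero vertical translation $(0,k)$ with $k\ge 1$. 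Reading off the row $w=w_0$, vertical $k$-periodicity says $w_k=w_0=w$, and the chain $w_0 \ws w_1 \ws \cdots \ws w_k$ witnesses $w \ws^k w$.

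The main obstacle I anticipate is the direction ``periodic $\Rightarrow$ $w\ws^k w$.'' A tiling periodic with some arbitrary vector $(u,v)$ need not be vertically periodic, and a single biinfinite word $w$ need not be a well-defined finite object, so one must be careful that $\ws^k$ acts on biinfinite words. Lemma~\ref{lm:per} is exactly what removes this difficulty: it guarantees a tiling with a purely \emph{vertical} translation vector, which is what lets us close the loop $w\ws^k w$ with a genuinely vertical period $k$. The remaining steps are routine once the row/run dictionary and the inductive characterization of $\ws^k$ are in place.
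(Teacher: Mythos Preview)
Your proof is correct and matches the paper's treatment: the paper presents this lemma as a direct ``reformulation of the original question'' and gives no explicit proof, relying on precisely the tiling/run dictionary you spell out and on Lemma~\ref{lm:per} for the nontrivial direction of the periodicity equivalence. Your identification of Lemma~\ref{lm:per} as the crux of ``periodic $\Rightarrow w\ws^k w$'' is exactly right, and the remaining steps are the routine unfolding the paper leaves implicit.
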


We will also use the  following operations on tilesets (or transducers):

\begin{description}
	\item[rotation]
Let $\tr{\ws}$ be $(V,H,T')$ where $T'=\{(s,n,e,w): (w,e,s,n)\in T\}$.
This operation corresponds to a rotation of the tileset by $90$ degrees.
\item[simplification]
Let $\sc{\ws}$ be the operation that deletes from $\ws$ 
any tile that cannot be used in a tiling of a (biinfinite) line row by
$\ws$. From the point of view of transducers, this corresponds to
eliminating sources and sinks from $\ws$. In particular, $\sc{\ws}$ is empty if and only if there are no
biinfinite words $w,w'$ s.t. $w \ws w'$.
\item[union] $\ws \cup \ws'$ is the disjoint union of transducers
  $\ws$ and $\ws'$: we first rename the states of both transducers so
  that they are all different, and then we take the union of the
  transitions of both transducers.
  Thus $w (\ws \cup \ws') w'$ if and only if $w \ws w'$ or $w \ws' w'$.
\end{description}
\paragraph{Equivalence of Wang sets.}

Once Wang sets are seen as transducers, it is easy to
see that the problems under consideration do not actually depend on $\ws$,
but only on the relation induced by $\ws$:
We say that two Wang sets $\ws=(H,V,T)$ and $\ws'=(H',V,T')$ are \emph{equivalent}
if they are equivalent as relations. In other words: 
for every pair of biinfinite words $(w,w')$ over $V$, $w\ws w' \Leftrightarrow w\ws' w'$.

In the course of the following proofs and algorithms, it will be useful 
to switch between equivalent Wang sets (transducers), in particular by
trying to simplify the sets as much as possible. For example, we can 
apply the operator $\sc{\ws}$ to
trim the colors/states (and thus the tiles/transitions) that cannot
appear in a biinfinite row (e.g., sources/terminals of the transducer
seen as a graph), or reduce the size of the transducer by coalescing
``equivalent'' states.

There are a few algorithms to simplify Wang sets.
First, as our transducers are nothing but (nondeterministic) finite automata
over the alphabet $V \times V$, it is tempting to try to \emph{minimize} them.
However, state (or transition) minimization of nondeterministic
automata is PSPACE-complete~\cite{MS72}.
Another plausible strategy, building the minimal deterministic automaton, has also proven to be
inefficient in practice.
The algorithm we use is based on the notion of \emph{strong
bisimulation equivalence} (or bisimulation, for short) of
labeled transitions systems ~\cite{Kan,Paige,Valmari,Ilie}.

A \emph{simulation} on the transducer $(H,V,T)$
is a relation $R \subseteq H^2$ such that
for every $u,u',v \in H$ and $a,b\in V$ such that $(u,u')\in R$ and $(u,v,a,b) \in T$, there exists $v'\in V$ such that $(u',v',a,b)\in T$.
A \emph{bisimulation} is a relation $R$ such that $R$ and $R^{-1}$ are simulations.
The \emph{bisimilarity relation}, which is the largest possible bisimulation,
is an equivalence relation, can be computed in linear time~\cite{Paige}.
The computation of the bisimilarity relation can be thought of as the non-deterministic
equivalent of Hopcroft's~\cite{Hopcroft} classical minimization algorithm for deterministic
automata.
Note that if we collapse equivalence classes in the transducer, we obtain a new transducer
which is equivalent to the previous one.

Another interesting option to simplify a transducer is the {simulation relation}, but the best known algorithm to compute it is in $O(n'm)$ time~\cite{Cc2017FoundationFA}, with $n'$ the number of equivalence classes, and $m$ the number of transitions,
which makes it impractical to use on large transducers. More so in our case, which sees transducers of up to several billions of transitions. 

\section{There is no aperiodic Wang sets with 10 tiles or less}
\label{sec:no10}
In this section, we present a
computer-assisted proof that there is no aperiodic Wang set
with 10 tiles or less.
The computer program can be found here:~\cite{prog}.

The general idea of the algorithm is straightforward: generate all Wang sets
with 10 tiles or less, and test each one to see whether it is aperiodic.
This method presents two difficulties here: first, there are a large number of
Wang sets with 10 tiles: for maximum efficiency, we have to discard  as soon as possible Wang
sets that are provably not aperiodic.
We then have to test the remaining sets for aperiodicity. Because aperiodicity
is an undecidable problem, our algorithm will not succeed on
all Wang sets; the remaining sets will have to be examined by hand.
\subsection{Generating all Wang sets with 10 tiles or less}

According to the general principle above, we do not actually have to
generate all Wang sets: we can refrain from generating sets that we
know to be aperiodic.

Let $\ws$ be a Wang set.
We say that $\ws$ is minimally aperiodic if $\ws$ is aperiodic and no
proper subset of $\ws$ is aperiodic (that is no proper subset of $\ws$
tiles the plane).
We will introduce criteria proving that some Wang sets are not
minimally aperiodic, and thus that we do not need to test them.

The key idea is to look at the graph $G$ underlying the transducer, that is, the transducer in which we neglect the labels of transitions.
Note that this is actually a \emph{multigraph}: there might be
multiple edges (transitions) joining two given vertices (states), and
there might also be self-loops.

This approach was introduced in~\cite{rolin2012}, and the following
lemma is more or less implicit in this article: 

\begin{lemme}
\label{lm:graph}
	Let $\ws$ be a Wang set, and $G$ the corresponding graph.
	\begin{itemize}
		\item Suppose there exist two vertices/states/colors $u,v \in G$ 
		  so that there is an edge
		  (hence a tile/transition) from  $u$ to $v$ and no path from $v$ to $u$.
		  Then $\ws$ is not minimal aperiodic.
		\item Suppose $G$ contains a strongly connected component
		  which is a cycle. Then $\ws$ is not minimal aperiodic.
\item If $G$ has only one vertex, then $\ws$ is not aperiodic.
		\item If the difference between the number of edges and the
		  number of vertices in $G$ is less than 2, then $\ws$ is not
		  minimal aperiodic.
	\end{itemize}		
\end{lemme}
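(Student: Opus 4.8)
The plan is to prove each of the three bullets by the same strategy: show that if $\ws$ is aperiodic, then one can find a proper subset of $\ws$ that still tiles the plane (hence $\ws$ is not minimally aperiodic), or else that $\ws$ cannot tile the plane at all. The unifying idea is that a tiling of $\mathbb{Z}^2$ corresponds, via the transducer viewpoint, to a biinfinite sequence of biinfinite words $(w_k)$ with $w_k \ws w_{k+1}$, and each individual row is a biinfinite walk in the multigraph $G$. So structural restrictions on $G$ force structural restrictions on which tiles can ever appear in a tiling.

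For the first bullet, I would argue that the edge (tile) $t$ from $u$ to $v$ can never be used in any tiling. Since a row of a tiling is a biinfinite walk in $G$, if $t$ appears at some position then the walk must eventually leave $v$ and, to continue biinfinitely and consistently, the edges to the left and right must link up; but more to the point, I would show that any biinfinite walk using $t$ would have to return from $v$ back to $u$ to be part of a consistent structure — and no path from $v$ to $u$ exists. The cleanest route is to invoke the simplification operator $\sc{\ws}$: if $t$ lies on no biinfinite walk it is deleted, giving an equivalent strictly smaller Wang set, so $\ws$ is not minimal. The case where $t$ does lie on a biinfinite walk (so it is not trivially removable as a source/sink issue) is where I expect the real work: here I would argue that because there is no return path $v \to u$, the walk through $t$ is "one-directional," and I would try to show that deleting $t$ preserves the existence of a tiling of the plane, contradicting minimality.

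For the second bullet, suppose a strongly connected component $C$ of $G$ is a single cycle of length $\ell$. Every biinfinite walk that enters $C$ is trapped there (it is a terminal or source-like SCC after trimming), so any row visiting $C$ must traverse the cycle forever, reading and writing a periodic word of period $\ell$. I would then argue that such a tiling is forced to be periodic: a row confined to a pure cycle is eventually periodic in both directions, and by Lemma~\ref{lm:comp} together with a pigeonhole argument on the finitely many states one extracts a genuinely periodic tiling, so $\ws$ cannot be aperiodic (equivalently, one exhibits $w$ with $w\ws^k w$, witnessing periodicity by the reformulation lemma).

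For the third bullet, let $m$ be the number of edges and $n$ the number of vertices, and suppose $m - n \le 1$. After applying the first two bullets, I may assume $G$ has no vertex pair violating the reachability condition and no SCC that is a bare cycle; in particular I may assume $G$ is strongly connected (otherwise the first bullet applies). A strongly connected multigraph satisfies $m \ge n$, and the cyclomatic quantity $m - n + 1$ counts independent cycles; the condition $m - n \le 1$ forces $G$ to have at most two independent cycles, which I would argue is too rigid to support an aperiodic structure — essentially the graph is a single cycle with at most one extra chord, and I would reduce this to the cycle case of the second bullet. The hard part throughout, and especially here, is the passage from a combinatorial statement about the graph $G$ to a statement about \emph{all} tilings of the plane: a walk existing in $G$ does not by itself produce a consistent biinfinite stack of rows, so I must be careful to use the vertical matching constraints and Lemma~\ref{lm:per} (reducing to tilings with horizontal and vertical periods) to convert low cycle-rank into an explicit periodic tiling or a removable tile.
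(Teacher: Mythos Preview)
Your overall strategy is right, but there are real gaps in the first two bullets.

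For the first bullet, you never isolate the crucial quantitative fact: the tile $t$ can appear \emph{at most once} in each row. A biinfinite walk certainly \emph{can} use $t$ (take any biinfinite path arriving at $u$, then $t$, then continuing from $v$), so $t$ need not be deleted by $\sc{\ws}$, and your first sub-case does not cover the situation. What matters is that once the walk traverses $t$ it can never return to $u$, so $t$ occurs at most once per row. From this the paper concludes by compactness: in any tiling of the plane one finds arbitrarily large rectangles avoiding $t$ entirely (since $t$ is that sparse), hence $\ws \setminus \{t\}$ tiles the plane and $\ws$ is not minimal. Your phrase ``the walk through $t$ is one-directional'' gestures toward this but does not extract the finiteness statement needed to run the compactness argument.

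For the second bullet, your claim that a single row lying in the cycle $C$ already forces (or allows one to extract) a periodic tiling is not justified, and in general it is false: the transducer is nondeterministic, so a periodic input word can be rewritten to a non-periodic output, and the rows above and below the periodic one need not be periodic at all. The paper's argument has an extra case split that you are missing. Let $S$ be the set of tiles in the cycle. If in some tiling tiles of $S$ occur in \emph{two} distinct rows $y_1 < y_2$, then both of those rows are the same periodic word up to a horizontal shift, and the finite strip between them can be stacked (with that shift) to produce a periodic tiling; so if $\ws$ is aperiodic this cannot happen. Hence tiles of $S$ occur in \emph{at most one} row of every tiling, and then the same compactness argument as in the first bullet shows that $\ws \setminus S$ already tiles the plane. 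Either way $\ws$ is not minimal aperiodic. Your pigeonhole-on-one-row idea does not produce this dichotomy.

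For the third bullet the paper simply cites \cite{rolin2012}; your plan to reduce to the first two bullets and then analyse a strongly connected multigraph with cyclomatic number at most $2$ is reasonable, but it is only an outline and would still need to be carried out.
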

\begin{proof}
In terms of tiles, the first case corresponds to a tile $t$ which can
appear at most once in each row. If $\ws$ tiles the plane, $\ws$ tiles
arbitrarily large regions without using the tile $t$. By compactness
(Lemma \ref{lm:comp}), $\ws \setminus \{t\}$ tiles the plane.

For the second case, suppose such a component exists.
This means there exist some tiles $ S\subseteq \ws$ so that
every time one of the tiles in $S$  appears, then the whole row is
periodic (of period the size of the cycle).
If $\ws$ is aperiodic, we cannot have a tiling where tiles of $S$
appear in two different rows, as we could deduce from it a periodic
tiling.
As a consequence, tiles from $S$ appear in at most one row, and using
the same compactness argument as before we deduce that $\ws \setminus
S$ tiles the plane.

For the third case, if $G$ has only one vertex and the Wang set tiles the plane, then one can construct a periodic tiling of the plane such that every column is the same column.

The proof of the fourth case can be found in~\cite{rolin2012}.
\end{proof}

We also suppose w.l.o.g. that there are no isolated vertices.
The number of graphs with the property of Lemma~\ref{lm:graph} are: 6 for 4 edges, 26 for 5 edges, 122 for 6 edges, 516 for 7 edges, 2517 for 8 edges, 13276 for 9 edges and 77809 for 10 edges.
The computer program \verb!gengraphs__N! generates the set of such graphs with \verb!N! edges.

This lemma gives a bird's-eye view of the program:
for a given $n \leq 10$, generate the set $\mathcal{G}$ of all
graphs with $n$ edges and at most $n-2$ vertices satisfying the
hypotheses of the lemma. Then for every $G_1$ and $G_2$ in $\mathcal{G}$,
we test all Wang sets for which the
first underlying graph (in west/east sides) is $G_1$,
and the underlying graph of $\tr{\ws}$ (that is, the north/south sides of $\ws$) is $G_2$.
To do so, we test every bijection between the edges of $G_1$ and the edges of $G_2$.
In terms of Wang tiles, a graph corresponds to
a specific assignment of colors to the east/west side: for this
particular assignment, we test all possible assignments of colors to
the north/south side.
The exact approach used in the software follows this principle, trying
as much as possible not to generate isomorphic tilesets.

\clearpage
\subsection{Testing Wang sets for aperiodicity}

In the previous section, we described how we generated Wang sets to
test. We now describe how we tested them for aperiodicity.

\subsubsection{Main program}

Recall that a Wang set is \emph{not} aperiodic if
\begin{itemize}
	\item either there exists $k$ so that $\sc{\ws^k}$ is empty: there are
	  no word $w,w'$ so that $w \ws^k w'$,
	\item or there exists $k$ so that $\ws^k$ is periodic: there exists
 	 a word $w$ so that $w \ws^k w$.
\end{itemize}	

The general algorithm to test for aperiodicity is therefore clear:
for each $k$, generate $\ws^k$, and then test whether one of the two cases
occurs. If it does, the set is not aperiodic. Otherwise, we go to the
next $k$. The algorithm stops when the computer program runs out of
memory. In that case, the algorithm was not able to decide if the
Wang set was aperiodic (it is after all an undecidable problem), and we
have to examine the Wang set carefully.

This approach works quite well in practice: when launched on a computer
with a reasonable amount of memory, it eliminates a very large number
of tilesets.
Of course, the key idea is to simplify $\ws^k$ as much as possible,
before computing $\ws^{k+1}$. 
Note that this should be done as fast as possible, as this will be
done for {all} Wang sets.
It turns out
that the easy simplification of deleting at each step the tiles that
cannot appear in a tiling of a row (i.e., vertices that are
sources/terminals) is already sufficient.

It is important to note that this approach, relying on transducers
(test whether the Wang set tiles $k$ consecutive rows, and whether it does
so periodically) turned out in practice to be much more efficient than
the naive approach of using tilings of squares (test whether the Wang set
tiles a square of size $k$, and whether it does so periodically).

\medskip

At this point, several possible improvements become apparent.
For example, the simplification of the transducers by bisimulation can be significant. 

However, we have to be careful about a few things. Firstly, some techniques can paradoxically waste more time than they save: the large majority of tilesets are quickly discarded by a simple and naive algorithm,
and the time spent on non-trivial cases represents only a tiny part of the overall time, even with this simple algorithm.
Secondly, these optimizations can make the program more difficult to read, to understand, and to check.

Where this is the case, we have chosen to opt for program clarity rather than computational efficiency:
without other improvements, one can show, in a reasonable time, that there is no aperiodic set of Wang tiles with at most 10 tiles.
For example, we do not even try to remove duplicate tiles, since this operation would require sorting the tiles.

\subsection{Computation}

Two independent programs~\cite{prog} are available in the folder \textit{src}, and another in the folder \textit{alternative}.
The result from both programs is the same:
we find only one hard case up to isomorphism, which is discussed in Section~\ref{hardest}.

We now give some details on the computation for the first program (the second is quite similar).
The computation for 10 tiles was done using the PSMN cluster (Pôle Scientifique de Modélisation Numérique) of the ENS de Lyon, and the computing resources of the LIP (Laboratoire d'Informatique du Parallélisme) of the ENS de Lyon, and required approximately 23 CPU years, that is roughly one week on 1000 cores.
For 9 tiles and less, the computations required approximately 38 CPU days.

For 10 tiles, there are $(77809\times77810)/2 \  \sim 3\times10^9$ cases. (By a case, we mean the test of all possible bijections between the edges of two graphs.) 
Most of the cases ($99.8\%$) take less than 1 second: the average time is 242ms and the median time is 155ms.
Except for the hardest case discussed below, the largest power we have to compute is $\ws^{126}$, and the largest transducer has $\sim18\times10^6$ transitions -- recall, however, that the program does not try to keep the transducer small.

It is difficult to recheck the result without substantial computing power.
As a kind of certificate, we provide all the hardest cases for 10 tiles: cases where either we have to compute at least $\ws^{30}$, or cases for which we get a transducer with at least $10^4$ edges.
We also give the hardest cases for sets of 5 up to 9 tiles.

\subsubsection{The hardest case}
\label{hardest}

Among all Wang sets, only 4 sets cannot be proven to be not aperiodic by the computer program.
All these 4 sets are isomorphic to the set $T_h$ presented in Figure~\ref{fig:10fail}.

It turned out that this particular Wang set is a special case of a
 general construction introduced by Kari~\cite{Kari14} of aperiodic Wang sets, save for the fact that 
a few tiles are missing. At this point, the situation could have
become desperate: it is not known whether Wang sets which were obtained by the method of
Kari less a few tiles actually tile the plane. In fact, the question was open: whether it was
possible to delete a tile from Culik's~\cite{Culik} 13 tileset to
obtain a set that still tiles the plane\footnote{You will find many experts on tilings that recollect this
  story wrongly and think that the (13) Wang set by Culik is the (14) Wang set
  from Kari with one tile removed. This is not the case. What happened
  is that there is one tile from the (13) Wang set by Culik that
  seemed likely to be unnecessary.}. It was conjectured by
both Kari and Culik that it was indeed possible.

\begin{figure}
	
\begin{center}
\begin{tikzpicture}[->,node distance=3cm,auto,initial text=]
        \tikzstyle{every state}=[minimum size=10mm]
\node[state] (A)                    {$1$};
\node[state] (B)  [right of=A]  {$0$};
\path (A) edge [loop above]   node[above] {$1|2$} (A);
\path (B) edge [loop above]   node[above] {$1|2$} (B);
\path (A) edge [bend left]   node[above] {$1|3, 0|1$} (B);
\path (B) edge [bend left]  node[below] {$2|3, 1|1$} (A);
\end{tikzpicture}
\end{center}
\begin{center}
\begin{tikzpicture}[->,node distance=3cm,auto,initial text=]
        \tikzstyle{every state}=[minimum size=10mm]
\node[state] (A)                    {$2/3$};
\node[state] (B)  [right of=A]  {$0'$};
\path (A) edge [loop above]   node[above] {$3|1$} (A);
\path (B) edge [loop above]   node[above] {$3|1$} (B);
\path (A) edge [bend left]   node[above] {$1|1$} (B);
\path (B) edge [bend left]  node[below] {$2|0$} (A);
\end{tikzpicture}
\end{center}
	\caption{The set $T_h$ of 10 tiles that comes very close, but fails to tile
	  the plane. It tiles a square of size $212 \times 212$}
\label{fig:10fail}	
\end{figure}	

We were able to prove that this tileset does not in fact tile
the plane.
Wang sets belonging to the family identified by Kari all work in the same way:
the biinfinite words that appear on each row can be thought of as reals, by taking the
average of all numbers (between $0$ and $3$ in our example)
that appear on the row.
Then, what the tileset does is apply a given piecewise affine
map to the real number.
In the case of our set of 10 tiles, the map $f$ is as follows:
\begin{itemize}
        \item if $1/2 \leq x < 3/2$, then $f(x) = 2x$,
        \item if $3/2 < x \leq 3$, then $f(x) = x /3$.
\end{itemize}

As can be seen from the first transducer, there cannot be two
consecutive 0 in $x$. This guarantees that $x \geq 1/2$, therefore $x \not= 0$, and, in particular, that this tileset has no periodic tiling.

If we used Kari's method to code this particular tileset,
the transducer that divides by $3$ would have $8$ tiles.
However, our particular set of 10 tiles does so with only $4$ tiles.
There is a way to explain how the division by $3$ works. First, we
treat it like a multiplication by $3$ by reversing the process.
Recall that the Beatty expansion of a real $x$ is given by 
$\beta_n(x)  = \left\lceil (n+1) x\right\rceil  - \left\lceil n x\right\rceil$.
Then one has \begin{fact}
        Let $0 < x \leq 1$ and define $b_n(x) = 2\beta_n(2x) - \beta_n(x)$.
        Then the second transducer transforms $(\beta_n)_{n \in \mathbb{N}}$ into
        $(b_n)_{n \in \mathbb{N}}$.
\end{fact}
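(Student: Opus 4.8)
The plan is to prove the \textbf{Fact} by exhibiting, for the given $x$, an explicit bi-infinite run of the second transducer whose input and output labels are, position by position, the sequences $(\beta_n(x))_n$ and $(b_n)_n$. Since the transducer has only two states and both $\beta_n(x)$ and the data controlling a transition take finitely many values, the whole statement reduces to (i) naming the state the run occupies before position $n$, and (ii) checking that each of the four transitions fires exactly when it should. Because the identities below are \emph{exact}, no limiting argument (Lemma~\ref{lm:comp}) is needed: the run is produced directly.

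The engine is a single bookkeeping quantity. Writing $\lceil 0\rceil=0$, the telescoping identity $\sum_{k=0}^{n-1}\beta_k(y)=\lceil ny\rceil$ gives $\sum_{k<n}\beta_k(x)=\lceil nx\rceil$ and $\sum_{k<n}\beta_k(2x)=\lceil 2nx\rceil$, hence $B_n:=\sum_{k<n}b_k=2\lceil 2nx\rceil-\lceil nx\rceil$. Introduce the defect (``carry'') $d_n:=2\lceil nx\rceil-\lceil 2nx\rceil$. A one-line computation with $nx=\lfloor nx\rfloor+f_n$, $f_n=\{nx\}$, shows $d_n\in\{0,1\}$ with $d_n=1\iff f_n\in(0,\tfrac12]$, so $d_n$ can serve as the name of the current state: one state for $d_n=0$, the other for $d_n=1$. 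Substituting $\lceil 2nx\rceil=2\lceil nx\rceil-d_n$ yields $B_n=3\lceil nx\rceil-2d_n$, and therefore the crucial per-step identity
\[ b_n \;=\; B_{n+1}-B_n \;=\; 3\,\beta_n(x)\;-\;2\,(d_{n+1}-d_n). \]

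With this in hand, I would read off the four transitions in the orientation relevant to the statement (inputting $(\beta_n)$, outputting $(b_n)$, i.e.\ the reversal of the division-by-$3$ process as drawn), recording for each its state change $d_n\to d_{n+1}$ together with the pair $(\beta_n(x),b_n)$ it carries: the two self-loops keep $d_{n+1}=d_n$ and carry $(1,3)$; the cross-transition from the $d=0$ state to the $d=1$ state carries $(1,1)$; the cross-transition from the $d=1$ state to the $d=0$ state carries $(0,2)$. Each is \emph{consistent} with the displayed identity, e.g.\ $(\beta_n,d_{n+1}-d_n)=(1,0)$ gives $b_n=3$, $(1,+1)$ gives $b_n=1$, and $(0,-1)$ gives $b_n=2$. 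The real content is the converse: that along the Beatty sequence \emph{only} these four (state,\,$\beta_n$) situations occur, so the run never calls for a nonexistent transition (which would force an impossible output such as $b_n=0$ or $b_n=5$). Unwinding the table, this is exactly the pair of implications (I) $d_n=0\Rightarrow\beta_n(x)=1$ and (II) $d_n=1\Rightarrow\bigl(\beta_n(x)=1\iff d_{n+1}=1\bigr)$.

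Both implications I would prove from the elementary dynamics of $f_n$: one has $f_{n+1}=\{f_n+x\}$ and, for $f_n>0$, $\beta_n(x)=1\iff f_n+x>1$, while $\beta_n(x)=1$ whenever $f_n=0$. Feeding in $d_n=1\iff f_n\in(0,\tfrac12]$ turns (I) and (II) into a short interval case-analysis, and \textbf{this case-analysis is where I expect the only real obstacle to lie}. It is also where the hypothesis on $x$ is used essentially: the argument closes up precisely because $x\ge\tfrac12$ forces $f_n+x>\tfrac12$, so that after a step the fractional part lands in the correct half of $[0,1)$; the boundary values $f_n=\tfrac12$, $f_n+x=1$ (where $f_{n+1}=0$ and $\beta_n(x)=0$), and $f_n=0$ must each be checked against the half-open interval $(0,\tfrac12]$ with the right strictness. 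In particular the statement genuinely needs $x\ge\tfrac12$; for $x<\tfrac12$ implication (I) fails and $b_n=0$ appears. Once (I) and (II) hold, setting the run's state before position $n$ to the one indexed by $d_n$ yields a legal bi-infinite run inputting $(\beta_n(x))_n$ and outputting $(b_n)_n$, which is the Fact. A secondary, purely bookkeeping point is to fix the input/output orientation — the ``reverse the process, read division by $3$ as multiplication by $3$'' remark preceding the statement — and to match the two printed states to the two values of $d_n$.
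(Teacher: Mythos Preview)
The paper does not supply a proof of this Fact; it is stated as an assertion (``Then it can be proven:'') and left to the reader. So there is no paper argument to compare against, and the question is simply whether your approach is sound. It is.

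Your carry variable $d_n = 2\lceil nx\rceil - \lceil 2nx\rceil \in \{0,1\}$ is exactly the right state label, and the identity $b_n = 3\beta_n(x) - 2(d_{n+1}-d_n)$ reduces everything to the two implications (I) and (II), which are indeed the short interval chase you describe. For the record, here is how the cases close when $x\ge\tfrac12$: if $d_n=0$ then either $f_n=0$ (giving $\beta_n=\lceil x\rceil=1$) or $f_n\in(\tfrac12,1)$ (giving $f_n+x>1$, hence $\beta_n=1$); this is (I). If $d_n=1$, i.e.\ $f_n\in(0,\tfrac12]$, then $f_n+x>\tfrac12$; when $f_n+x>1$ one gets $\beta_n=1$ and $f_{n+1}=f_n+x-1\in(0,\tfrac12]$ so $d_{n+1}=1$, and when $f_n+x\le1$ one gets $\beta_n=0$ and $f_{n+1}\in(\tfrac12,1)\cup\{0\}$ so $d_{n+1}=0$; this is (II). Your remark that the run is then obtained \emph{directly}, with no compactness needed, is also correct.

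You are also right that the stated hypothesis $0<x\le1$ is too weak: for $x<\tfrac12$ implication (I) fails (take $f_n\in(\tfrac12,1-x)$, which is nonempty), and then $b_n=0$ is forced, which is not in the transducer's output alphabet $\{1,2,3\}$. The intended range is $\tfrac12\le x\le1$, matching the image $[\tfrac12,1]$ of the division-by-$3$ branch of the Kari map; the printed hypothesis is a slip.
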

Therefore, the second transducer multiplies by $3$ by doing $2\times 2
\times x - x$
somehow. It can be seen as a composition of a transducer that
transforms $(\beta_n)_{n \in \mathbb{N}}$ into $(\beta_n, b_n)_{n \in
  \mathbb{N}}$ (this can be done with only two states, using Kari's method)
and a transducer mapping each symbol $(x,y)$ into $2y-x$,
which can be done using only one state (this is just a relabeling).

There is no reason that doing the
transformation this way would work (in particular the equations given by Kari cannot
be applied to this particular transducer and prove that there is
a tiling of the plane), and indeed it doesn't: we were able to prove that
this particular Wang set does not tile the plane.

Once this tileset was identified as belonging to the
family of Kari tilesets, it is easy to see that, should it tile the
plane, it tiles a half plane starting from a word consisting only of
the symbol $3$.

\begin{fact}
If $T_h$ tiles the plane, then it tiles a half plane starting from a word consisting only of
the symbol $3$.
\end{fact}
\begin{proof}
  If one has a tiling of the plane, there is a biinfinite run $(w_i)$ of $T_h$.
  For $i,n\in \mathbb{N}$, let $x_{i,n}$ be the sum of the letters in $w_i$ (considering the letters as real numbers) between the positions $-n$ and $n$.
  The sequence $(x_{0,n}/(2n+1))_{n\in \mathbb{N}}$ is bounded, thus one can find an increasing function $\phi:\mathbb{N} \mapsto \mathbb{N}$ such that $x_{0,\phi(n)}/(2\phi(n)+1)$ converges. Let $y_0=\lim_{n\to \infty} x_{0,\phi(n)}/(2\phi(n)+1)$.
  Then $(x_{1,\phi(n)}/(2\phi(n)+1))_{n\in \mathbb{N}}$ converges, and by induction, for every $i\in \mathbb{N}$, $(x_{i,\phi(n)}/(2\phi(n)+1))_{n\in \mathbb{N}}$ converges. Let $y_i=\lim_{i\to \infty} x_{i,\phi(n)}/(2\phi(n)+1)$.

  Moreover, for every $i$, one has either $y_{i+1} = 2 y_i$ or $y_{i+1} = y_i/3$.
  One can suppose w.l.o.g. that for every $k\in\mathbb{N}$, $y_k\ne 3/2$. Otherwise, it is impossible that $y_{k'}=3/2$ with $k'>k$, and one can remplace the initial sequence by $(w_{i+k+1})_i)$.

  Thus, $y_i=f^i(y_0)$. Since $\log(2)/\log(3)$ is irrationnal, the set $\{y_i\}$ is dense, and for every $\epsilon >0$ there is an $i$ such that $|y_i-3| < \epsilon$.
  That is, for every $n\in \mathbb{N}$, the factor $3^n$ appears as a factor if one $w_i$.
  By compactness, there is a tiling of the half plane starting from a word consisting only of the symbol $3$.
\end{proof}

In our approach, we started from a transducer $\ws'$ which outputs a configuration
with only the symbol $3$, and built recursively $t_k = \ws' \ws^k$.

It turns out that $t_{31}$is empty, once reduced, which means that we cannot tile
$31$ consecutive rows starting from a word consisting only of $3$.
This fact is verified by the program \verb!hard10!.
Here, the naive approach --to remove sources and terminals-- takes too long for $t_{31}$, and we opted to use Tarjan's algorithm to find strongly connected components.

\begin{fact}
  $T_h$ does not tiles the plane.
\end{fact}

Thus, we get:

\begin{theorem}
	There is no aperiodic  Wang set with 10 tiles or less.
\end{theorem}	

\medskip

\sloppy
The fact that everything falls apart for $k = 31$ can be explained intuitively,
if we identify $([0.5,3]_{0.5\sim 3},\times)$ with the unit circle
$([0,1]_{0 \sim 1},+)$. What $f$ is doing is now just an addition
(modulo $1$) of $\frac{\log 2}{\log 2 + \log 3}$.
Now $31\frac{\log 2}{\log 2 + \log 3} = 11.992$ is near an integer, which means
that $\ws^{31}$ is ``almost'' the identity map. During the 30 first
steps, our map $\ws$ is able to deceive us, and it appears as if it would
tile the plane by using the degrees of freedom we have in the coding of the
reals. For $k = 31$, this is not possible anymore.

\fussy

Before removing unused transitions, $t_{31}$ contains a path of $212$
symbols $3$.
This means in particular that there exists a tiling of a rectangle of
size $212 \times 31$ where the top and the bottom sides are equal, thus
a tiling of a biinfinite vertical strip of width 212 by this tiling, and
thus a tiling of a square of size $212\times 212$.

\medskip

It turns out that the exact same method can be used for the set of 12
tiles obtained from Culik's set by removing one tile.
It corresponds to the same rotation, and we observe indeed the same
behavior: starting from a configuration of all $2$, it is not
possible to tile $31$ consecutive rows:
\begin{theorem}
The set of 13 tiles by Culik is minimal aperiodic: if any tile is
removed from this set, it does not tile the plane anymore.
\end{theorem}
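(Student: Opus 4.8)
The plan is to reduce the statement to the finite transducer computation already used for the $10$-tile set of Fig.~\ref{fig:10fail}. Culik's set $\ws$ is known to tile the plane aperiodically, so establishing minimality only requires showing that no proper subset of $\ws$ tiles the plane. Since any tiling by a subset is a fortiori a tiling by the whole set, if a set $\ws_i := \ws \setminus \{t_i\}$ fails to tile then so does every subset of it; and every proper subset of $\ws$ is contained in one of the thirteen single-deletion sets $\ws_1,\dots,\ws_{13}$. It therefore suffices to prove that each $\ws_i$ fails to tile the plane.

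The structural input is that $\ws$ lies in Kari's family: the bi-infinite word carried by each row codes a real number of a fixed interval (through its running average), and the transducer sends one row to the next by the piecewise affine map that multiplies by $2$ or divides by $3$. Each $\ws_i$ is a sub-transducer of $\ws$, hence its relation is contained in that of $\ws$ and every tiling by $\ws_i$ realizes exactly the same real dynamics, only with some transitions forbidden. Arguing as in the $10$-tile case, the symbol $2$ sits at the extreme of the admissible range, and a compactness/supremum argument shows that if $\ws_i$ tiles the plane then it tiles a half-plane whose first row is the constant configuration equal to $2$. The question thus becomes whether $\ws_i$ can tile $31$ consecutive rows starting from this all-$2$ row.

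To decide this I would proceed exactly as before: let $\ws'$ be the transducer emitting the constant configuration equal to $2$, form $t_k^{(i)} = \ws'\,(\ws_i)^k$, and at each step simplify aggressively with $\sc{\cdot}$ and with the bisimulation reduction of Section~\ref{sec:no10}. The assertion is that $t_{31}^{(i)}$ reduces to the empty Wang set for every $i$, which contradicts the half-plane tiling of the previous paragraph and completes the proof. The value $31$ is forced by the same arithmetic as for the $10$-tile set: on the circle $([0,1],+)$ the map acts as the rotation by $\frac{\log 2}{\log 2+\log 3}$, and since $31\frac{\log 2}{\log 2+\log 3}\approx 11.99$ falls just short of an integer, $\ws^{31}$ is almost the identity and the freedom left in the coding of the reals is spent at exactly this step; removing a tile only deletes transitions, so at most $31$ rows are ever needed. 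The real difficulties are twofold. First, the reduction to the all-$2$ start must be made rigorous for a proper subset, i.e. one must check that the extremal-value argument survives the deletion of an arbitrary transition and that the extreme value $2$ is still actually attained. Second, the emptiness of each $t_{31}^{(i)}$ is a machine verification whose tractability depends entirely on the bisimulation simplification; without it the intermediate transducers blow up, exactly as recounted for the $10$-tile computation.
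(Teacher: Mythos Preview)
Your proposal correctly identifies the core technique—seed the computation with the constant-$2$ configuration and check that the composed transducer empties out after $31$ steps—and this is exactly what the paper does. But you over-apply it. The paper does \emph{not} run the all-$2$ argument for all thirteen single-tile deletions. Twelve of the thirteen deletions are disposed of by the generic procedure of Section~\ref{sec:no10} (iterating $\sc{\ws_i^k}$ until it becomes empty); only the one tile that Kari and Culik had conjectured to be redundant survives that procedure, and it is for \emph{that} single $12$-tile set that the all-$2$ seeding plus $31$ iterations is invoked.

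This matters because both difficulties you flag at the end are artifacts of your uniform strategy and simply do not arise in the paper's approach. First, the reduction ``if $\ws_i$ tiles the plane then it tiles a half-plane from the all-$2$ row'' relies on the Kari real-number dynamics, and there is no reason it should survive the deletion of an arbitrary transition; the paper only needs it for the one deletion that leaves the Kari structure intact. Second, your argument that $31$ rows suffice \emph{uniformly} is not sound: you write that ``removing a tile only deletes transitions, so at most $31$ rows are ever needed,'' but the monotonicity points the wrong way—the full $13$-tile set tiles infinitely many rows from all-$2$, so the inclusion $\ws'\ws_i^{k}\subseteq\ws'\ws^{k}$ gives no bound at all, and the different $\ws_i$ are pairwise incomparable. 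The number $31$ is not a universal threshold; it is the value observed for the one hard case, explained post~hoc by the near-integrality of $31\frac{\log 2}{\log 2+\log 3}$.

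In short: keep the all-$2$/$31$-row computation, but apply it only to the single deletion that resists the standard $\sc{\ws_i^k}$ test; the remaining twelve deletions are genuinely easy and should be stated as such.
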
	
Note that the situation is still not well understood, and we
consider ourselves lucky to have obtained the result: first, we have to
execute the transducers in the right direction: $\ws' \ws^{-31}$ is
nonempty. Furthermore, the next step when $\ws^k$ returns near an
integer is for $k = 106$, and no computer, using our technique, has
enough memory to hope computing  $\ws^{106}$.

\begin{conjecture}
Every aperiodic tileset obtained by Kari's method is minimal
aperiodic.
\end{conjecture}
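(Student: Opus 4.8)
The plan is to translate the combinatorial minimality statement into a dynamical one and then exploit the rigidity of the irrational rotation underlying every Kari tileset. First I would fix a precise definition of ``the method of Kari'': a tileset $\ws$ in this family is obtained by composing finitely many transducers, each encoding a piecewise-affine map of a real interval through Beatty/Sturmian codings, so that the induced self-map $f$ of the interval becomes, under the logarithmic conjugacy used in the excerpt, a rotation of the circle by an angle such as $\frac{\log 2}{\log 2 + \log 3}$. Aperiodicity of $\ws$ is exactly the statement that $f$ has no periodic orbit; the stronger facts I want to use are that such a rotation is \emph{minimal} and \emph{uniquely ergodic}.

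The key reduction is the following. As noted after Lemma~\ref{lm:graph}, minimal aperiodicity of $\ws$ means that for every tile $t$ the set $\ws \setminus \{t\}$ does not tile the plane. Since any tiling by $\ws \setminus \{t\}$ is in particular a tiling by $\ws$ that never uses $t$, it is equivalent to prove that \emph{every} tiling of the plane by $\ws$ uses \emph{every} tile; this is the ``no tile is superfluous'' statement I would target. Concretely I would proceed in three steps. (i) As in the excerpt's treatment of the $10$-tile and Culik examples, show that any plane tiling by $\ws$ yields a biinfinite orbit $(x_k)_{k\in\mathbb{Z}}$ with $x_{k+1}=f(x_k)$, the $k$-th row being a Beatty-type coding of $x_k$. (ii) Identify each individual tile with a digit-transition of the coding transducer, active precisely when the coded real lies in a fixed subinterval $I_t$ and the carry of the Beatty expansion is in a fixed state $s_t$. (iii) Use minimality of the rotation to guarantee that $(x_k)$ is dense, hence enters every $I_t$, and use Weyl equidistribution of the Beatty coding within the corresponding row to realize the carry-state $s_t$ making $t$ active. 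Together these force every tile to appear in the tiling, which is the claim.

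The hard part will be twofold. First, the Beatty coding of a real by a row is \emph{not unique}: the excerpt stresses that a tiling has ``degrees of freedom'' in how it codes reals, and there are degenerate codings at the endpoints of the interval (the constant rows of all $3$'s or all $2$'s from which the excerpt's computations start). One must rule out that some tile is needed only for a coding that a boundary-hugging tiling contrives to avoid; this is exactly the phenomenon that made $k=31$, then $k=106$, the decisive return times in the computer search, so a proof has to control \emph{all} codings uniformly rather than a single constant starting word. Second, and more seriously, ``the method of Kari'' is an entire family, and aperiodicity only guarantees that $f$ has no periodic orbit, \emph{not} that its dynamics is minimal or uniquely ergodic. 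A general member could in principle produce an aperiodic $f$ whose orbits fail to equidistribute, collapsing the frequency argument of step (iii). Proving that every aperiodic member of the family is in fact conjugate to an irrational rotation (so that minimality and unique ergodicity come for free) is, I expect, the genuine obstacle, and presumably the reason this is stated as a conjecture rather than a theorem.
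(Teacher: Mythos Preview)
The statement you are attempting is a \emph{conjecture}: the paper offers no proof whatsoever. Its only supporting evidence consists of the two computer verifications described in Section~\ref{sec:no10} (the $10$-tile set of Fig.~\ref{fig:10fail} and Culik's $13$-tile set with one tile removed), both established by brute-force transducer computations rather than by any structural argument. So there is nothing in the paper to compare your attempt against, and your closing sentence correctly diagnoses why.

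Your outline is a reasonable sketch of how one might \emph{attack} the conjecture, and you honestly flag the two main obstacles. I would add one sharpening. Your step~(iii) implicitly assumes that once the orbit $(x_k)$ enters the interval $I_t$ associated with a tile~$t$, the row coding is \emph{forced} to realise the carry-state $s_t$ somewhere. But the paper's own narrative shows this is precisely where the difficulty lies even in the favourable irrational-rotation case: Kari and Culik themselves conjectured that a tile of Culik's set was removable, and the paper settles this \emph{negatively} only by exhausting $\ws'\ws^{31}$ on a computer, after noting that $\ws'\ws^{-31}$ is nonempty and that the next useful return time $k=106$ is out of reach. In other words, even when the underlying dynamics is a genuine irrational rotation (so minimality and unique ergodicity are free), the ``degrees of freedom in the coding'' are strong enough that no one currently knows how to rule out a coding that dodges a given tile, short of direct computation. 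Your Weyl-equidistribution heuristic for the carries is therefore not yet an argument; it is exactly the missing piece, and it is missing already in the simplest instances of the family, not only in the hypothetical non-minimal members you worry about at the end.
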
	

\clearpage
\section{An aperiodic Wang set of 11 tiles - Proof sketch }\label{sec:11}

Using a similar method to the one presented in the last section, we were able to
enumerate and test sets of 11 tiles and find a few potential
candidates.
This computation took approximately one year on several hundred cores, using again the  PSMN cluster and the LIP cluster.

\medskip

Of these few candidates, three of them were extremely promising, and we
will prove that they are aperiodic sets.
These three sets look very similar, and the core of the proof of their aperiodicity is the same.

One of these three sets, $\ws'$ (Figure~\ref{fig:ws4}), uses only four colors, which is also minimal because no aperiodic set exists with only three colors~\cite{tiles3}.
To prove that $\ws'$ is aperiodic, we first show that $\ws$ (Figure~\ref{fig:ws}) is aperiodic, and then show that $\ws'$, which is a simple modification of $\ws$, is aperiodic.
The aperiodicity of the last set $\ws''$ (Figure~\ref{fig:ws4third}) is discussed in Section~\ref{sec:third}.

\begin{figure}[h!]
\center
\begin{tikzpicture}[scale=0.9]
\tikzstyle{every state}=[]
\tikzstyle{tr}=[inner sep=.5mm]
\tikzstyle{trans}=[->,thick,auto]
\node[state] (n0) at (1.600000,0.000000) {0};
\node[state] (n1) at (1.600000,3.200000) {1};
\node[state] (n2) at (3.200000,1.600000) {2};
\node[state] (n3) at (0.000000,1.600000) {3};
\path[trans] (n0) edge[loop right] node[tr] {$1|0$} (n0);
\path[trans] (n0) edge[bend left=10] node[tr] {$2|1$} (n3);
\path[trans] (n1) edge[bend left=10] node[tr] {$2|2$} (n0);
\path[trans] (n1) edge[loop right] node[tr] {$4|2$} (n1);
\path[trans] (n1) edge[bend left=10] node[tr] {$2|3$} (n3);
\path[trans] (n3) edge[bend left=10] node[tr] {$1|1$} (n0);
\path[trans] (n3) edge[bend left=10] node[tr,align=left] {$1|1$\\$2|2$} (n1);
\path[trans] (n3) edge[loop left] node[tr] {$3|1$} (n3);
\path[trans] (n2) edge[loop right] node[tr,align=left] {$1|4$\\$0|2$} (n2);
 \end{tikzpicture}

\medskip

\begin{tikzpicture}
\draw (0.000000,0.000000) -- (1.000000,0.000000) ;
\draw (0.000000,0.000000) -- (1.000000,1.000000) ;
\draw (0.000000,0.000000) -- (0.000000,1.000000) ;
\draw (1.000000,0.000000) -- (1.000000,1.000000) ;
\draw (0.000000,1.000000) -- (1.000000,1.000000) ;
\draw (0.000000,1.000000) -- (1.000000,0.000000) ;
\draw (0.430000,0.500000) node[left]{$0$} ;
\draw (0.570000,0.500000) node[right]{$0$} ;
\draw (0.500000,0.540000) node[above]{$0$} ;
\draw (0.500000,0.460000) node[below]{$1$} ;
\draw (1.300000,0.000000) -- (2.300000,0.000000) ;
\draw (1.300000,0.000000) -- (2.300000,1.000000) ;
\draw (1.300000,0.000000) -- (1.300000,1.000000) ;
\draw (2.300000,0.000000) -- (2.300000,1.000000) ;
\draw (1.300000,1.000000) -- (2.300000,1.000000) ;
\draw (1.300000,1.000000) -- (2.300000,0.000000) ;
\draw (1.730000,0.500000) node[left]{$0$} ;
\draw (1.870000,0.500000) node[right]{$3$} ;
\draw (1.800000,0.540000) node[above]{$1$} ;
\draw (1.800000,0.460000) node[below]{$2$} ;
\draw (2.600000,0.000000) -- (3.600000,0.000000) ;
\draw (2.600000,0.000000) -- (3.600000,1.000000) ;
\draw (2.600000,0.000000) -- (2.600000,1.000000) ;
\draw (3.600000,0.000000) -- (3.600000,1.000000) ;
\draw (2.600000,1.000000) -- (3.600000,1.000000) ;
\draw (2.600000,1.000000) -- (3.600000,0.000000) ;
\draw (3.030000,0.500000) node[left]{$1$} ;
\draw (3.170000,0.500000) node[right]{$0$} ;
\draw (3.100000,0.540000) node[above]{$2$} ;
\draw (3.100000,0.460000) node[below]{$2$} ;
\draw (3.900000,0.000000) -- (4.900000,0.000000) ;
\draw (3.900000,0.000000) -- (4.900000,1.000000) ;
\draw (3.900000,0.000000) -- (3.900000,1.000000) ;
\draw (4.900000,0.000000) -- (4.900000,1.000000) ;
\draw (3.900000,1.000000) -- (4.900000,1.000000) ;
\draw (3.900000,1.000000) -- (4.900000,0.000000) ;
\draw (4.330000,0.500000) node[left]{$1$} ;
\draw (4.470000,0.500000) node[right]{$1$} ;
\draw (4.400000,0.540000) node[above]{$2$} ;
\draw (4.400000,0.460000) node[below]{$4$} ;
\draw (5.200000,0.000000) -- (6.200000,0.000000) ;
\draw (5.200000,0.000000) -- (6.200000,1.000000) ;
\draw (5.200000,0.000000) -- (5.200000,1.000000) ;
\draw (6.200000,0.000000) -- (6.200000,1.000000) ;
\draw (5.200000,1.000000) -- (6.200000,1.000000) ;
\draw (5.200000,1.000000) -- (6.200000,0.000000) ;
\draw (5.630000,0.500000) node[left]{$1$} ;
\draw (5.770000,0.500000) node[right]{$3$} ;
\draw (5.700000,0.540000) node[above]{$3$} ;
\draw (5.700000,0.460000) node[below]{$2$} ;
\draw (6.500000,0.000000) -- (7.500000,0.000000) ;
\draw (6.500000,0.000000) -- (7.500000,1.000000) ;
\draw (6.500000,0.000000) -- (6.500000,1.000000) ;
\draw (7.500000,0.000000) -- (7.500000,1.000000) ;
\draw (6.500000,1.000000) -- (7.500000,1.000000) ;
\draw (6.500000,1.000000) -- (7.500000,0.000000) ;
\draw (6.930000,0.500000) node[left]{$3$} ;
\draw (7.070000,0.500000) node[right]{$0$} ;
\draw (7.000000,0.540000) node[above]{$1$} ;
\draw (7.000000,0.460000) node[below]{$1$} ;
\draw (0.000000,1.300000) -- (1.000000,1.300000) ;
\draw (0.000000,1.300000) -- (1.000000,2.300000) ;
\draw (0.000000,1.300000) -- (0.000000,2.300000) ;
\draw (1.000000,1.300000) -- (1.000000,2.300000) ;
\draw (0.000000,2.300000) -- (1.000000,2.300000) ;
\draw (0.000000,2.300000) -- (1.000000,1.300000) ;
\draw (0.430000,1.800000) node[left]{$3$} ;
\draw (0.570000,1.800000) node[right]{$1$} ;
\draw (0.500000,1.840000) node[above]{$1$} ;
\draw (0.500000,1.760000) node[below]{$1$} ;
\draw (1.300000,1.300000) -- (2.300000,1.300000) ;
\draw (1.300000,1.300000) -- (2.300000,2.300000) ;
\draw (1.300000,1.300000) -- (1.300000,2.300000) ;
\draw (2.300000,1.300000) -- (2.300000,2.300000) ;
\draw (1.300000,2.300000) -- (2.300000,2.300000) ;
\draw (1.300000,2.300000) -- (2.300000,1.300000) ;
\draw (1.730000,1.800000) node[left]{$3$} ;
\draw (1.870000,1.800000) node[right]{$1$} ;
\draw (1.800000,1.840000) node[above]{$2$} ;
\draw (1.800000,1.760000) node[below]{$2$} ;
\draw (2.600000,1.300000) -- (3.600000,1.300000) ;
\draw (2.600000,1.300000) -- (3.600000,2.300000) ;
\draw (2.600000,1.300000) -- (2.600000,2.300000) ;
\draw (3.600000,1.300000) -- (3.600000,2.300000) ;
\draw (2.600000,2.300000) -- (3.600000,2.300000) ;
\draw (2.600000,2.300000) -- (3.600000,1.300000) ;
\draw (3.030000,1.800000) node[left]{$3$} ;
\draw (3.170000,1.800000) node[right]{$3$} ;
\draw (3.100000,1.840000) node[above]{$1$} ;
\draw (3.100000,1.760000) node[below]{$3$} ;
\draw (3.900000,1.300000) -- (4.900000,1.300000) ;
\draw (3.900000,1.300000) -- (4.900000,2.300000) ;
\draw (3.900000,1.300000) -- (3.900000,2.300000) ;
\draw (4.900000,1.300000) -- (4.900000,2.300000) ;
\draw (3.900000,2.300000) -- (4.900000,2.300000) ;
\draw (3.900000,2.300000) -- (4.900000,1.300000) ;
\draw (4.330000,1.800000) node[left]{$2$} ;
\draw (4.470000,1.800000) node[right]{$2$} ;
\draw (4.400000,1.840000) node[above]{$4$} ;
\draw (4.400000,1.760000) node[below]{$1$} ;
\draw (5.200000,1.300000) -- (6.200000,1.300000) ;
\draw (5.200000,1.300000) -- (6.200000,2.300000) ;
\draw (5.200000,1.300000) -- (5.200000,2.300000) ;
\draw (6.200000,1.300000) -- (6.200000,2.300000) ;
\draw (5.200000,2.300000) -- (6.200000,2.300000) ;
\draw (5.200000,2.300000) -- (6.200000,1.300000) ;
\draw (5.630000,1.800000) node[left]{$2$} ;
\draw (5.770000,1.800000) node[right]{$2$} ;
\draw (5.700000,1.840000) node[above]{$2$} ;
\draw (5.700000,1.760000) node[below]{$0$} ;
\end{tikzpicture}
 
\caption{Wang set $\ws$.} \label{fig:ws}
\end{figure}

\begin{figure}[h!]
\center
\begin{tikzpicture}[scale=0.9]
\tikzstyle{every state}=[]
\tikzstyle{tr}=[inner sep=.5mm]
\tikzstyle{trans}=[->,thick,auto]
\node[state] (n0) at (1.600000,0.000000) {0};
\node[state] (n1) at (1.600000,3.200000) {1};
\node[state] (n2) at (3.200000,1.600000) {2};
\node[state] (n3) at (0.000000,1.600000) {3};
\path[trans] (n0) edge[loop right] node[tr] {$1|0$} (n0);
\path[trans] (n0) edge[bend left=10] node[tr] {$2|1$} (n3);
\path[trans] (n1) edge[bend left=10] node[tr] {$2|2$} (n0);
\path[trans] (n1) edge[loop right] node[tr] {$0|2$} (n1);
\path[trans] (n1) edge[bend left=10] node[tr] {$2|3$} (n3);
\path[trans] (n3) edge[bend left=10] node[tr] {$1|1$} (n0);
\path[trans] (n3) edge[bend left=10] node[tr,align=left] {$1|1$\\$2|2$} (n1);
\path[trans] (n3) edge[loop left] node[tr] {$3|1$} (n3);
\path[trans] (n2) edge[loop right] node[tr,align=left] {$1|0$\\$0|2$} (n2);
 \end{tikzpicture}

\medskip

\center
\begin{tikzpicture}
\draw (0.000000,0.000000) -- (1.000000,0.000000) ;
\draw (0.000000,0.000000) -- (1.000000,1.000000) ;
\draw (0.000000,0.000000) -- (0.000000,1.000000) ;
\draw (1.000000,0.000000) -- (1.000000,1.000000) ;
\draw (0.000000,1.000000) -- (1.000000,1.000000) ;
\draw (0.000000,1.000000) -- (1.000000,0.000000) ;
\draw (0.430000,0.500000) node[left]{$0$} ;
\draw (0.570000,0.500000) node[right]{$0$} ;
\draw (0.500000,0.540000) node[above]{$0$} ;
\draw (0.500000,0.460000) node[below]{$1$} ;
\draw (1.300000,0.000000) -- (2.300000,0.000000) ;
\draw (1.300000,0.000000) -- (2.300000,1.000000) ;
\draw (1.300000,0.000000) -- (1.300000,1.000000) ;
\draw (2.300000,0.000000) -- (2.300000,1.000000) ;
\draw (1.300000,1.000000) -- (2.300000,1.000000) ;
\draw (1.300000,1.000000) -- (2.300000,0.000000) ;
\draw (1.730000,0.500000) node[left]{$0$} ;
\draw (1.870000,0.500000) node[right]{$3$} ;
\draw (1.800000,0.540000) node[above]{$1$} ;
\draw (1.800000,0.460000) node[below]{$2$} ;
\draw (2.600000,0.000000) -- (3.600000,0.000000) ;
\draw (2.600000,0.000000) -- (3.600000,1.000000) ;
\draw (2.600000,0.000000) -- (2.600000,1.000000) ;
\draw (3.600000,0.000000) -- (3.600000,1.000000) ;
\draw (2.600000,1.000000) -- (3.600000,1.000000) ;
\draw (2.600000,1.000000) -- (3.600000,0.000000) ;
\draw (3.030000,0.500000) node[left]{$1$} ;
\draw (3.170000,0.500000) node[right]{$0$} ;
\draw (3.100000,0.540000) node[above]{$2$} ;
\draw (3.100000,0.460000) node[below]{$2$} ;
\draw (3.900000,0.000000) -- (4.900000,0.000000) ;
\draw (3.900000,0.000000) -- (4.900000,1.000000) ;
\draw (3.900000,0.000000) -- (3.900000,1.000000) ;
\draw (4.900000,0.000000) -- (4.900000,1.000000) ;
\draw (3.900000,1.000000) -- (4.900000,1.000000) ;
\draw (3.900000,1.000000) -- (4.900000,0.000000) ;
\draw (4.330000,0.500000) node[left]{$1$} ;
\draw (4.470000,0.500000) node[right]{$1$} ;
\draw (4.400000,0.540000) node[above]{$2$} ;
\draw (4.400000,0.460000) node[below]{$0$} ;
\draw (5.200000,0.000000) -- (6.200000,0.000000) ;
\draw (5.200000,0.000000) -- (6.200000,1.000000) ;
\draw (5.200000,0.000000) -- (5.200000,1.000000) ;
\draw (6.200000,0.000000) -- (6.200000,1.000000) ;
\draw (5.200000,1.000000) -- (6.200000,1.000000) ;
\draw (5.200000,1.000000) -- (6.200000,0.000000) ;
\draw (5.630000,0.500000) node[left]{$1$} ;
\draw (5.770000,0.500000) node[right]{$3$} ;
\draw (5.700000,0.540000) node[above]{$3$} ;
\draw (5.700000,0.460000) node[below]{$2$} ;
\draw (6.500000,0.000000) -- (7.500000,0.000000) ;
\draw (6.500000,0.000000) -- (7.500000,1.000000) ;
\draw (6.500000,0.000000) -- (6.500000,1.000000) ;
\draw (7.500000,0.000000) -- (7.500000,1.000000) ;
\draw (6.500000,1.000000) -- (7.500000,1.000000) ;
\draw (6.500000,1.000000) -- (7.500000,0.000000) ;
\draw (6.930000,0.500000) node[left]{$3$} ;
\draw (7.070000,0.500000) node[right]{$0$} ;
\draw (7.000000,0.540000) node[above]{$1$} ;
\draw (7.000000,0.460000) node[below]{$1$} ;
\draw (0.000000,1.300000) -- (1.000000,1.300000) ;
\draw (0.000000,1.300000) -- (1.000000,2.300000) ;
\draw (0.000000,1.300000) -- (0.000000,2.300000) ;
\draw (1.000000,1.300000) -- (1.000000,2.300000) ;
\draw (0.000000,2.300000) -- (1.000000,2.300000) ;
\draw (0.000000,2.300000) -- (1.000000,1.300000) ;
\draw (0.430000,1.800000) node[left]{$3$} ;
\draw (0.570000,1.800000) node[right]{$1$} ;
\draw (0.500000,1.840000) node[above]{$1$} ;
\draw (0.500000,1.760000) node[below]{$1$} ;
\draw (1.300000,1.300000) -- (2.300000,1.300000) ;
\draw (1.300000,1.300000) -- (2.300000,2.300000) ;
\draw (1.300000,1.300000) -- (1.300000,2.300000) ;
\draw (2.300000,1.300000) -- (2.300000,2.300000) ;
\draw (1.300000,2.300000) -- (2.300000,2.300000) ;
\draw (1.300000,2.300000) -- (2.300000,1.300000) ;
\draw (1.730000,1.800000) node[left]{$3$} ;
\draw (1.870000,1.800000) node[right]{$1$} ;
\draw (1.800000,1.840000) node[above]{$2$} ;
\draw (1.800000,1.760000) node[below]{$2$} ;
\draw (2.600000,1.300000) -- (3.600000,1.300000) ;
\draw (2.600000,1.300000) -- (3.600000,2.300000) ;
\draw (2.600000,1.300000) -- (2.600000,2.300000) ;
\draw (3.600000,1.300000) -- (3.600000,2.300000) ;
\draw (2.600000,2.300000) -- (3.600000,2.300000) ;
\draw (2.600000,2.300000) -- (3.600000,1.300000) ;
\draw (3.030000,1.800000) node[left]{$3$} ;
\draw (3.170000,1.800000) node[right]{$3$} ;
\draw (3.100000,1.840000) node[above]{$1$} ;
\draw (3.100000,1.760000) node[below]{$3$} ;
\draw (3.900000,1.300000) -- (4.900000,1.300000) ;
\draw (3.900000,1.300000) -- (4.900000,2.300000) ;
\draw (3.900000,1.300000) -- (3.900000,2.300000) ;
\draw (4.900000,1.300000) -- (4.900000,2.300000) ;
\draw (3.900000,2.300000) -- (4.900000,2.300000) ;
\draw (3.900000,2.300000) -- (4.900000,1.300000) ;
\draw (4.330000,1.800000) node[left]{$2$} ;
\draw (4.470000,1.800000) node[right]{$2$} ;
\draw (4.400000,1.840000) node[above]{$0$} ;
\draw (4.400000,1.760000) node[below]{$1$} ;
\draw (5.200000,1.300000) -- (6.200000,1.300000) ;
\draw (5.200000,1.300000) -- (6.200000,2.300000) ;
\draw (5.200000,1.300000) -- (5.200000,2.300000) ;
\draw (6.200000,1.300000) -- (6.200000,2.300000) ;
\draw (5.200000,2.300000) -- (6.200000,2.300000) ;
\draw (5.200000,2.300000) -- (6.200000,1.300000) ;
\draw (5.630000,1.800000) node[left]{$2$} ;
\draw (5.770000,1.800000) node[right]{$2$} ;
\draw (5.700000,1.840000) node[above]{$2$} ;
\draw (5.700000,1.760000) node[below]{$0$} ;
\end{tikzpicture}
 
\caption{Wang set $\ws'$, obtained from $\ws$ by collapsing the colors 4 and~0.} \label{fig:ws4}
\end{figure}

\begin{theorem}
  The Wang sets of Figure~\ref{fig:ws}, \ref{fig:ws4} and \ref{fig:ws4third} are aperiodic.
\end{theorem}
In this section, we sketch the proof of this result for the first
set $\ws$.

$\ws$ is the union of two Wang sets, $\ws_0$ and $\ws_1$, of 9 and 2 tiles respectively.
For $w\in \{0,1\}^*\setminus\{\epsilon\}$, let $\ws_{w} = {\ws_{w[1]} \circ \ws_{w[2]} \circ \ldots \ws_{w[\vert w\vert]}}$.

It can be seen by an easy computer verification that every tiling by $\ws$ can
be decomposed into a tiling by transducers
$\ws_1\ws_0\ws_0\ws_0\ws_0$ and  $\ws_1\ws_0\ws_0\ws_0$.

The simplifications of these two transducers, called $\ws_\texttt{a}$ and
$\ws_\texttt{b}$ will be obtained in Section~\ref{ssec:D}, and are depicted in Figure~\ref{fig:wsdd}.

We then study the transducer $\ws_D$ formed by the two transducers
$\ws_\texttt{a}$ and $\ws_\texttt{b}$ and prove that there exists a tiling by $\ws_D$,
and that any tiling by $\ws_D$ is aperiodic.
\begin{figure}[htbp]
\begin{minipage}{.9\linewidth}
\center
\begin{tikzpicture}[auto,scale=1]
\tikzstyle{every state}=[shape=ellipse,scale=0.7]
\tikzstyle{trans}=[->,line width=.5]
\tikzstyle{tr}=[inner sep=0.5mm,scale=0.7]
\node[state] (na) at (9.408000,4.928000) {a};
\node[state] (nb) at (8.512000,3.584000) {b};
\node[state] (nc) at (3.584000,4.928000) {c};
\node[state] (nd) at (4.480000,3.584000) {d};
\node[state] (ne) at (5.824000,3.584000) {e};
\node[state] (nf) at (7.182000,3.584000) {f};
\node[state] (ng) at (9.408000,2.240000) {g};
\node[state] (nh) at (3.584000,2.240000) {h};
\node[state] (ni) at (5.824000,4.928000) {i};
\node[state] (nj) at (7.168000,4.928000) {j};
\path[trans] (na) edge node[tr,swap] {$1|0$} (nb);
\path[trans] (nf) edge node[tr,swap] {$1|1$} (nb);
\path[trans] (ng) edge node[tr,swap] {$1|1$} (na);
\path[trans] (ne) edge node[tr,swap] {$1|1$} (nf);
\path[trans] (nc) edge node[tr,pos=.7,swap] {$1|0$} (nd);
\path[trans] (nb) edge[out=150,in=-20] node[tr,swap,pos=.15] {$1|0$} (nc);
\path[trans] (nb) edge node[tr,swap] {$1|1$} (ng);
\path[trans] (nd) edge node[tr,swap] {$1|1$} (ne);
\path[trans] (ni) edge node[tr] {$0|0$} (nj);
\path[trans] (nc) edge node[tr] {$0|0$} (ni);
\path[trans] (nd) edge node[tr] {$0|0$} (nh);
\path[trans] (nj) edge node[tr] {$0|0$} (na);
\path[trans] (nh) edge node[tr] {$0|0$} (nc);
\path[trans] (nh) edge node[tr,swap] {$0|1$} (ng);
 \end{tikzpicture}
\end{minipage}

\medskip

\begin{minipage}{.9\linewidth}
\center 
\begin{tikzpicture}[auto,scale=0.75]
\tikzstyle{every state}=[shape=ellipse,scale=0.7]
\tikzstyle{trans}=[->,line width=.5]
\tikzstyle{tr}=[inner sep=0.5mm,scale=0.7,swap]
\node[state] (nK) at (2.240000,6.720000) {K};
\node[state] (nL) at (2.240000,4.032000) {L};
\node[state] (nM) at (4.032000,4.032000) {M};
\node[state] (nM') at (4.032000,6.720000) {M'};
\node[state] (nN) at (5.824000,4.032000) {N};
\node[state] (nO) at (7.616000,4.032000) {O};
\node[state] (nO') at (7.616000,6.720000) {O'};
\node[state] (nP) at (9.408000,4.032000) {P};
\node[state] (nQ) at (9.408000,6.720000) {Q};
\node[state] (nR) at (5.824000,6.720000) {R};
\path[trans] (nP) edge node[tr,pos=.4] {$1|1$} (nR);
\path[trans] (nM') edge node[tr] {$1|1$} (nK);
\path[trans] (nR) edge node[tr,pos=.5,swap] {$1|1$} (nM);
\path[trans] (nQ) edge node[tr] {$0|0$} (nO');
\path[trans] (nN) edge node[tr] {$0|1$} (nO);
\path[trans] (nP) edge node[tr] {$0|1$} (nQ);
\path[trans] (nM) edge node[tr] {$0|1$} (nN);
\path[trans] (nR) edge node[tr,pos=.6] {$0|0$} (nL);
\path[trans] (nK) edge node[tr] {$0|1$} (nL);
\path[trans] (nO) edge node[tr,pos=.5,swap] {$0|0$} (nR);
\path[trans] (nO) edge node[tr] {$0|1$} (nP);
\path[trans] (nL) edge node[tr] {$0|1$} (nM);
\path[trans] (nO') edge node[tr] {$0|0$} (nR);
\path[trans] (nR) edge node[tr] {$1|1$} (nM');
 \end{tikzpicture}
\end{minipage}
\caption{$\ws_D$, the union of $\ws_\texttt{a}$ (top) and $\ws_\texttt{b}$ (bottom).}
\label{fig:wsdd}
\vspace{1cm}
\end{figure}

We will prove that the tileset is aperiodic, by proving that any tiling
is \emph{substitutive}. 
Let $u_{-2}= \epsilon$,$u_{-1} = \texttt{a}$, $u_0 = \texttt{b}, u_{n+2} =
u_n u_{n-1} u_n$.
For reference, here are the first values of $u$:
\[
\epsilon,\texttt{a},\texttt{b},\texttt{a}\texttt{a},\texttt{b}\texttt{a}\texttt{b},\texttt{a}\texttt{a}\texttt{b}\texttt{a}\texttt{a},\texttt{b}\texttt{a}\texttt{b}\texttt{a}\texttt{a}\texttt{b}\texttt{a}\texttt{b},\texttt{a}\texttt{a}\texttt{b}\texttt{a}\texttt{a}\texttt{b}\texttt{a}\texttt{b}\texttt{a}\texttt{a}\texttt{b}\texttt{a}\texttt{a},\texttt{b}\texttt{a}\texttt{b}\texttt{a}\texttt{a}\texttt{b}\texttt{a}\texttt{b}\texttt{a}\texttt{a}\texttt{b}\texttt{a}\texttt{a}\texttt{b}\texttt{a}\texttt{b}\texttt{a}\texttt{a}\texttt{b}\texttt{a}\texttt{b}
\]

\begin{figure}[htbp]
$T_n$ for $n$ odd:
\begin{center}
\begin{tikzpicture}[->,node distance=5cm,auto,initial text=]

        \tikzstyle{every state}=[minimum size=10mm]

\node[state] (A) at (0,0)                   {};
\node[state] (F) at (8,0){};
\path[ultra thick] (A) edge[bend left]   node[above] {$1^{g(n+2)-3}|0^{g(n+2)-3}$} (F);
\path[ultra thick] (F) edge[bend left]   node[below] {
$  \begin{array}{l@{|}l}
 1^{g(n+1)+3}&(110)0^{g(n+1)}\\
1^{g(n+3)+3}&0^{g(n+2)}(111)0^{g(n+1)}\\
1^{g(n+1)}(000)1^{g(n+2)}&0^{g(n+3)+3}\\
1^{g(n+1)}(100)&0^{g(n+1)+3}\\
1^{g(n+3)}(100)1^{g(n+1)}&0^{g(n+1)}(110)0^{g(n+3)}
\end{array}$} (A);
	
\end{tikzpicture}
\end{center}

$T_{n}$ for $n$ even:
\begin{center}
\begin{tikzpicture}[->,node distance=5cm,auto,initial text=]
        
        \tikzstyle{every state}=[minimum size=10mm]

\node[state] (A) at (0,0)                   {};
\node[state] (F) at (8,0) {};

\path[ultra thick] (A) edge[bend left]   node[above] {$0^{g(n+2)-3}|1^{g(n+2)-3}$} (F);
\path[ultra thick] (F) edge[bend left]   node[below] {
$  \begin{array}{l@{|}l}
0^{g(n+1)+3}&(100)1^{g(n+1)}\\
0^{g(n+3)+3}&1^{g(n+2)}(000)1^{g(n+1)}\\
0^{g(n+1)}(111)0^{g(n+2)}&1^{g(n+3)+3}\\
0^{g(n+1)}(110)&1^{g(n+1)+3}\\
0^{g(n+3)}(110)0^{g(n+1)}&1^{g(n+1)}(100)1^{g(n+3)}\\
\end{array}$} (A);
\end{tikzpicture}
\end{center}
\caption{The family of transducers $T_n$}
\label{fig:family}
\end{figure}

Let $g(n)$, $n\in\mathbb{N}$ be the $(n+1)$-th Fibonacci number, that is $g(0)=1$, $g(1)=2$ and $g(n+2)=g(n)+g(n+1)$ for every $n\in\mathbb{N}$.
Remark that $u_n$ is of size $g(n)$.
Then we will  prove:
\begin{proposition}
Any tiling of the plane by $\ws_D$ can be divided into strips of
vertical width $g(n), g(n+1)$ or $g(n+2)$ so that each strip is a tiling by
$\ws_{u_n},
          \ws_{u_{n+1}}$ or $\ws_{u_{n+2}}$.
        \end{proposition}  
Remark that, by definition,
$\ws_{u_{n+3}} = \ws_{u_{n+1}} \circ \ws_{u_{n}} \circ \ws_{u_{n+1}}$.

We will prove this by induction on $n$. For this, we introduce a
family of transducers, presented in Figure~\ref{fig:family}, and we will
prove the following:
\begin{itemize}
\item
We show in Proposition \ref{prop:010} that for any tiling of the plane by $\ws_D$, the words in each row avoid the factors $010$ and $101$.
  
  \item We prove (Section~\ref{ssec:seq}) that every tiling by $\ws_D =
	\ws_\texttt{a} \cup \ws_\texttt{b}$ can be seen as a tiling by $\ws_{u_0} \cup \ws_{u_1} \cup
	\ws_{u_2} = \ws_\texttt{b} \cup \ws_\texttt{aa} \cup \ws_{\texttt{bab}}$.
      \item We prove (Section~\ref{ssec:seq}) that for words $u,v \in W$,
        $u\ws_{u_i} v  \iff u T_i v$. This means that we can
        interchangeably replace the Wang sets $\ws_{u_0}, \ws_{u_1},
        \ws_{u_2}$ by $T_0, T_1, T_2$ without changing the tilings of
        the plane.
  \item At this point, it becomes obvious that $\ws$ is aperiodic if and only if the
    Wang set $T_0 \cup T_1 \cup T_2$ is aperiodic.    
  \item We prove (Section~\ref{sec:recur}) that $T_{n+3} = T_{n+1} \circ T_n \circ
		  T_{n+1}$ for all $n$. As $\ws_{u_{n+3}} =
                  \ws_{u_{n+1}} \circ \ws_{u_{n}} \circ
                  \ws_{u_{n+1}}$, we obtain by
                  an easy induction\footnote{To be rigorous, one also
                    needs to use that if $r \ws_{u_{n+1}} s
                    \ws_{u_{n}} t \ws_{u_{n+1}} v$ with $r,v \in W$,
                  then $s,t \in W$ which is a clear consequence of
                  Proposition \ref{prop:010}.}
                that for all $u,v \in W$,  $u \ws_{u_n} v$ if and only if $u T_n v$.
                  
                \item We then prove (Section~\ref{sec:end}) that
                  any tiling by $T_n, T_{n+1},$ $T_{n+2}$ can be
                  rewritten as a tiling by $T_{n+1}, T_{n+2},
                  T_{n+3}$.                  
                  As a consequence, any tiling by $\ws_{u_n},
          \ws_{u_{n+1}}$ and $\ws_{u_{n+2}}$ can be rewritten as a tiling
          by $\ws_{u_{n+1}},\ws_{u_{n+2}},\ws_{u_{n+3}}$, by replacing any
          block $\ws_{u_{n+1}}\ws_{u_n}\ws_{u_{n+1}}$ by $\ws_{u_{n+3}}$
          (the difficulty is to prove that by doing this, there is no
          remaining occurrence of $\ws_{u_n}$).
        \end{itemize}

        This proves the proposition and the theorem.

Finally, we explain in Section~\ref{sec:end} how the same proof gives
us also the aperiodicity of the set $\ws'$.

\section{From \texorpdfstring{$\ws$ to $\ws_D$ then to $T_0,T_1,T_2$}{T to TD then T0,T1,T2}}

\subsection{From \texorpdfstring{$\ws$ to $\ws_D$}{T to TD}}
\label{ssec:D}
Recall that our Wang set $\ws$ can be seen as the union of two Wang sets, $\ws_0$ and $\ws_1$, of 9 and 2 tiles respectively.

For $w\in \{0,1\}^*\setminus\{\epsilon\}$, let $\ws_{w} = {\ws_{w[1]} \circ \ws_{w[2]} \circ \ldots \ws_{w[\vert w\vert]}}$.
The following facts can be easily checked by computer or by hand:
\begin{fact}
The transducers $\sc{\ws_{11}}$, $\sc{\ws_{101}}$, $\sc{\ws_{1001}}$ and $\sc{\ws_{00000}}$ are empty.
\end{fact}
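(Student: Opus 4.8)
The plan is to reduce each of the four emptiness claims to the absence of a directed cycle in a finite graph, and then to verify that cycle-freeness factor by factor. Recall that $\sc{\ws}$ is empty exactly when the transducer $\ws$ admits no biinfinite run, and that a \emph{finite} transducer admits a biinfinite run if and only if its underlying transition graph contains a directed cycle (a cycle can be repeated in both directions, and conversely any biinfinite walk in a finite graph must revisit a vertex). So for each word $w$ I would form the composed transducer $\ws_w$, write down its transitions, and argue the resulting graph is acyclic. Throughout I use the two alphabet features that make the compositions tractable: $\ws_1$ reads only the colours $\{0,1\}$ on its south side and writes only $\{2,4\}$ on its north side, whereas $\ws_0$ reads $\{1,2,3,4\}$ and writes $\{0,1,2,3\}$; hence the colour shared by two stacked rows must lie in the intersection of the relevant write- and read-alphabets.

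For $\sc{\ws_{11}}$ the argument is immediate: composing $\ws_1$ with $\ws_1$ requires the north colour of the lower copy (in $\{2,4\}$) to equal the south colour of the upper copy (in $\{0,1\}$), and $\{2,4\}\cap\{0,1\}=\emptyset$, so $\ws_{11}$ has no transitions at all. For $\sc{\ws_{101}}$ and $\sc{\ws_{1001}}$ I would use the two enclosing copies of $\ws_1$ to restrict the middle $\ws_0$-factors: each such factor must read a colour in $\{2,4\}$ (forced from below) and the topmost one must write a colour in $\{0,1\}$ (forced from above). For $\ws_{101}$ this leaves the single admissible $\ws_0$-transition $0\to 3$ labelled $2|1$; as it starts at state $0$ and ends at state $3$ with no admissible edge leaving $3$, the graph is one arrow and is acyclic. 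For $\ws_{1001}$ the same bookkeeping produces only a handful of composed edges on state-pairs of the middle $\ws_0\circ \ws_0$; listing them exhibits a DAG with sources $(0,0),(1,0)$ and sinks $(3,1),(3,3)$ and no cycle, so again no biinfinite run exists.

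The genuinely laborious case is $\sc{\ws_{00000}}$, and I expect it to be the main obstacle, since here no one-line disjointness argument is available: a single $\ws_0$—and in fact up to four stacked copies—does admit biinfinite runs, as the graph on states $\{0,1,3\}$ is strongly connected and carries self-loops. The emptiness must therefore come from the coupling across all five levels together with horizontal periodicity, which is precisely a product-automaton computation. My plan is to compute $\sc{\ws_0^{k}}$ for $k=2,3,4,5$ iteratively, at each step forming the composition and trimming states that have no admissible incoming or no admissible outgoing transition, and to check that at $k=5$ no cycle survives. Two structural observations shrink the work: the colour on every interface between consecutive $\ws_0$-rows must lie in $\{0,1,2,3\}\cap\{1,2,3,4\}=\{1,2,3\}$, so the three interior rows use only the seven tiles of $\ws_0$ that both read and write in $\{1,2,3\}$; moreover the tile $0\to 0$ (labelled $1|0$) writes the colour $0$ that no tile reads, hence can occur only in the top row, and the tile $1\to 1$ (labelled $4|2$) reads the colour $4$ that no tile writes, hence can occur only in the bottom row. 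Restricting the interior accordingly keeps the intermediate transducers small, and the iteration is to terminate with $\sc{\ws_{00000}}$ empty.
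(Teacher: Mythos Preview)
Your proposal is correct and is essentially the by-hand version of what the paper does: the paper simply asserts that these four emptiness facts ``can be easily checked by computer or by hand'' and gives no further argument, so your reduction to cycle-freeness of the composed transition graphs, together with the alphabet-intersection shortcuts you use, is exactly the intended verification spelled out in more detail. Your arguments for $\ws_{11}$, $\ws_{101}$, and $\ws_{1001}$ are complete and correct as stated; for $\ws_{00000}$ your structural observations (interior interface colours in $\{1,2,3\}$, the $1|0$ tile only at the top, the $4|2$ tile only at the bottom) are valid and do cut the product computation down to a size where the iterative trimming you describe terminates with an empty transducer, though you should be aware that actually writing this out by hand still involves a few dozen composite transitions before everything collapses.
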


Thus, if $t$ is a tiling by $\ws$, then there exists a biinfinite binary word $w\in \{1000,10000\}^{\mathbb{Z}}$ such that $t(x,y)\in T(\ws_{w[y]})$ for every $x,y\in\mathbb{Z}$.
Let $\ws_A=\sc{\ws_{1000} \cup \ws_{10000}}$ (see Figure~\ref{fig:wsa}).
There is a bijection between the tilings by $\ws$ and the tilings by $\ws_A$, and $\ws$ is aperiodic if and only if $\ws_A$ is aperiodic.

\begin{figure}

\begin{subfigure}[b]{.999\linewidth}
\begin{minipage}{.44\linewidth}
\center
\begin{tikzpicture}[auto,scale=.75]
\tikzstyle{every state}=[shape=ellipse,inner sep=0mm,font=\scriptsize,scale=0.6]
\tikzstyle{trans}=[->,thin]
\tikzstyle{tr}=[inner sep=0.2mm,font=\footnotesize,scale=0.6]
\node[state] (n20330) at (5.824000,4.928000) {20330};
\node[state] (n21030) at (9.408000,4.928000) {21030};
\node[state] (n21033) at (7.182000,3.584000) {21033};
\node[state] (n21100) at (10.122000,3.066000) {21100};
\node[state] (n21103) at (9.408000,2.240000) {21103};
\node[state] (n21113) at (5.824000,3.584000) {21113};
\node[state] (n21130) at (3.584000,4.928000) {21130};
\node[state] (n21300) at (8.512000,4.032000) {21300};
\node[state] (n21310) at (8.512000,3.136000) {21310};
\node[state] (n21311) at (7.770000,1.652000) {21311};
\node[state] (n21330) at (4.480000,3.584000) {21330};
\node[state] (n23100) at (7.168000,4.928000) {23100};
\node[state] (n23300) at (3.584000,2.912000) {23300};
\node[state] (n23310) at (4.256000,2.240000) {23310};
\path[trans] (n21030) edge node[tr,swap] {$1|0$} (n21300);
\path[trans] (n21033) edge node[tr,swap] {$1|1$} (n21300);
\path[trans] (n21030) edge node[tr] {$1|0$} (n21310);
\path[trans] (n21033) edge node[tr,swap] {$1|1$} (n21310);
\path[trans] (n21033) edge node[tr,swap,pos=.3] {$1|1$} (n21311);
\path[trans] (n21100) edge node[tr,swap] {$1|0$} (n21030);
\path[trans] (n21103) edge node[tr] {$1|1$} (n21030);
\path[trans] (n21113) edge node[tr] {$1|1$} (n21033);
\path[trans] (n21130) edge node[tr] {$1|0$} (n21330);
\path[trans] (n21300) edge[bend left=5] node[tr] {$1|0$} (n21130);
\path[trans] (n21310) edge node[tr] {$1|1$} (n21103);
\path[trans] (n21311) edge[bend right=55,swap] node[tr] {$1|2$} (n21100);
\path[trans] (n21311) edge node[tr,swap] {$1|3$} (n21103);
\path[trans] (n21330) edge node[tr] {$1|1$} (n21113);
\path[trans] (n20330) edge node[tr] {$0|0$} (n23100);
\path[trans] (n21130) edge node[tr] {$0|0$} (n20330);
\path[trans] (n21330) edge node[tr] {$0|0$} (n23300);
\path[trans] (n21330) edge node[tr] {$0|0$} (n23310);
\path[trans] (n23100) edge node[tr] {$0|0$} (n21030);
\path[trans] (n23300) edge node[tr] {$0|0$} (n21130);
\path[trans] (n23310) edge node[tr,pos=.4] {$0|1$} (n21103);
 \end{tikzpicture}
\end{minipage}
\begin{minipage}{.55\linewidth}
\center 
\begin{tikzpicture}[auto,scale=0.55]
\tikzstyle{every state}=[shape=ellipse,inner sep=0mm,font=\scriptsize,scale=0.6]
\tikzstyle{trans}=[->,thin]
\tikzstyle{tr}=[inner sep=0.2mm,font=\footnotesize,scale=0.6]
\node[state] (n2030) at (9.632000,4.032000) {2030};
\node[state] (n2033) at (5.824000,4.032000) {2033};
\node[state] (n2100) at (10.752000,8.960000) {2100};
\node[state] (n2103) at (9.632000,6.720000) {2103};
\node[state] (n2110) at (1.120000,8.736000) {2110};
\node[state] (n2111) at (10.752000,2.688000) {2111};
\node[state] (n2113) at (4.032000,4.032000) {2113};
\node[state] (n2130) at (5.824000,6.720000) {2130};
\node[state] (n2131) at (3.990000,7.364000) {2131};
\node[state] (n2133) at (2.240000,4.032000) {2133};
\node[state] (n2300) at (7.392000,4.928000) {2300};
\node[state] (n2310) at (7.840000,3.584000) {2310};
\node[state] (n2311) at (9.632000,8.064000) {2311};
\node[state] (n2330) at (2.240000,6.720000) {2330};
\node[state] (n2331) at (1.176000,5.992000) {2331};
\path[trans] (n2100) edge[bend right=25,swap] node[tr] {$1|0$} (n2130);
\path[trans] (n2103) edge node[tr,pos=.7] {$1|1$} (n2130);
\path[trans] (n2103) edge[bend right=5,pos=.85,swap] node[tr] {$1|1$} (n2131);
\path[trans] (n2110) edge[bend left=10] node[tr] {$1|1$} (n2103);
\path[trans] (n2111) edge[swap] node[tr] {$1|2$} (n2100);
\path[trans] (n2111) edge node[tr,swap,pos=.2,swap] {$1|3$} (n2103);
\path[trans] (n2113) edge node[tr,swap] {$1|1$} (n2133);
\path[trans] (n2130) edge node[tr] {$1|1$} (n2113);
\path[trans] (n2131) edge node[tr] {$1|2$} (n2110);
\path[trans] (n2131) edge node[tr] {$1|3$} (n2113);
\path[trans] (n2133) edge[bend right=10,swap] node[tr] {$1|2$} (n2111);
\path[trans] (n2030) edge[pos=.3] node[tr,swap] {$0|0$} (n2300);
\path[trans] (n2033) edge[swap] node[tr] {$0|1$} (n2300);
\path[trans] (n2030) edge node[tr] {$0|0$} (n2310);
\path[trans] (n2033) edge[swap] node[tr] {$0|1$} (n2310);
\path[trans] (n2033) edge node[tr] {$0|1$} (n2311);
\path[trans] (n2100) edge node[tr,pos=.2,swap] {$0|0$} (n2030);
\path[trans] (n2103) edge[swap] node[tr] {$0|1$} (n2030);
\path[trans] (n2113) edge node[tr] {$0|1$} (n2033);
\path[trans] (n2130) edge[pos=.3] node[tr] {$0|0$} (n2330);
\path[trans] (n2133) edge[pos=.75] node[tr] {$0|1$} (n2330);
\path[trans] (n2133) edge node[tr] {$0|1$} (n2331);
\path[trans] (n2300) edge node[tr] {$0|0$} (n2130);
\path[trans] (n2310) edge node[tr] {$0|1$} (n2103);
\path[trans] (n2311) edge node[tr,pos=.3] {$0|2$} (n2100);
\path[trans] (n2311) edge node[tr] {$0|3$} (n2103);
\path[trans] (n2330) edge node[tr] {$0|1$} (n2113);
\path[trans] (n2331) edge node[tr] {$0|2$} (n2110);
\path[trans] (n2331) edge node[tr] {$0|3$} (n2113);
 \end{tikzpicture}
\end{minipage}
\caption{$\ws_A$, the union of $\sc{\ws_{10000}}$ (left) and  $\sc{\ws_{1000}}$ (right).}
\label{fig:wsa}
\vspace{.5cm}
\end{subfigure}

\begin{subfigure}[b]{.999\linewidth}
\begin{minipage}{.44\linewidth}
\center
\begin{tikzpicture}[auto,scale=0.8]
\tikzstyle{every state}=[shape=ellipse,inner sep=0.5mm,font=\footnotesize,scale=0.6]
\tikzstyle{trans}=[->,thin]
\tikzstyle{tr}=[inner sep=0.2mm,font=\footnotesize,scale=0.6]
\node[state] (n20330) at (5.824000,4.928000) {20330};
\node[state] (n21030) at (9.408000,4.928000) {21030};
\node[state] (n21033) at (7.182000,3.584000) {21033};
\node[state] (n21103) at (9.408000,2.240000) {21103};
\node[state] (n21113) at (5.824000,3.584000) {21113};
\node[state] (n21130) at (3.584000,4.928000) {21130};
\node[state] (n21300) at (8.512000,4.032000) {21300};
\node[state] (n21310) at (8.512000,3.136000) {21310};
\node[state] (n21330) at (4.480000,3.584000) {21330};
\node[state] (n23100) at (7.168000,4.928000) {23100};
\node[state] (n23300) at (3.584000,2.912000) {23300};
\node[state] (n23310) at (4.256000,2.240000) {23310};
\path[trans] (n21030) edge node[tr,swap] {$1|0$} (n21300);
\path[trans] (n21033) edge node[tr,swap] {$1|1$} (n21300);
\path[trans] (n21030) edge[bend left=5] node[tr] {$1|0$} (n21310);
\path[trans] (n21033) edge node[tr,swap] {$1|1$} (n21310);
\path[trans] (n21103) edge node[tr,swap] {$1|1$} (n21030);
\path[trans] (n21113) edge node[tr] {$1|1$} (n21033);
\path[trans] (n21130) edge node[tr] {$1|0$} (n21330);
\path[trans] (n21300) edge[bend left=5] node[tr] {$1|0$} (n21130);
\path[trans] (n21310) edge node[tr] {$1|1$} (n21103);
\path[trans] (n21330) edge node[tr] {$1|1$} (n21113);
\path[trans] (n20330) edge node[tr] {$0|0$} (n23100);
\path[trans] (n21130) edge node[tr] {$0|0$} (n20330);
\path[trans] (n21330) edge node[tr] {$0|0$} (n23300);
\path[trans] (n21330) edge node[tr] {$0|0$} (n23310);
\path[trans] (n23100) edge node[tr] {$0|0$} (n21030);
\path[trans] (n23300) edge node[tr] {$0|0$} (n21130);
\path[trans] (n23310) edge node[tr] {$0|1$} (n21103);
 \end{tikzpicture}
\end{minipage}
\begin{minipage}{.55\linewidth}
\center 
\begin{tikzpicture}[auto,scale=0.6]
\tikzstyle{every state}=[shape=ellipse,inner sep=0.5mm,font=\footnotesize,scale=0.6]
\tikzstyle{trans}=[->,thin]
\tikzstyle{tr}=[inner sep=0.2mm,font=\footnotesize,scale=0.6]
\node[state] (n2030) at (9.632000,4.032000) {2030};
\node[state] (n2033) at (5.824000,4.032000) {2033};
\node[state] (n2103) at (9.632000,6.720000) {2103};
\node[state] (n2113) at (4.032000,4.032000) {2113};
\node[state] (n2130) at (5.824000,6.720000) {2130};
\node[state] (n2133) at (2.240000,4.032000) {2133};
\node[state] (n2300) at (7.392000,4.928000) {2300};
\node[state] (n2310) at (7.840000,3.584000) {2310};
\node[state] (n2330) at (2.240000,6.720000) {2330};
\path[trans] (n2103) edge node[tr] {$1|1$} (n2130);
\path[trans] (n2113) edge node[tr] {$1|1$} (n2133);
\path[trans] (n2130) edge node[tr] {$1|1$} (n2113);
\path[trans] (n2030) edge[pos=.3] node[tr] {$0|0$} (n2300);
\path[trans] (n2033) edge node[tr] {$0|1$} (n2300);
\path[trans] (n2030) edge node[tr] {$0|0$} (n2310);
\path[trans] (n2033) edge node[tr] {$0|1$} (n2310);
\path[trans] (n2103) edge node[tr] {$0|1$} (n2030);
\path[trans] (n2113) edge node[tr] {$0|1$} (n2033);
\path[trans] (n2130) edge node[tr] {$0|0$} (n2330);
\path[trans] (n2133) edge node[tr] {$0|1$} (n2330);
\path[trans] (n2300) edge node[tr] {$0|0$} (n2130);
\path[trans] (n2310) edge node[tr] {$0|1$} (n2103);
\path[trans] (n2330) edge node[tr] {$0|1$} (n2113);
 \end{tikzpicture}
\end{minipage}
\caption{$\ws_B$ corresponds to $\ws_A$ when unused transitions are deleted.}
\label{fig:wsb}
\vspace{.5cm}
\end{subfigure}

\begin{subfigure}[b]{.999\linewidth}
\begin{minipage}{.44\linewidth}
\center
\begin{tikzpicture}[auto,scale=0.8]
\tikzstyle{every state}=[shape=ellipse,scale=0.7]
\tikzstyle{trans}=[->,line width=.5]
\tikzstyle{tr}=[inner sep=0.5mm,scale=0.7]
\node[state] (na) at (9.408000,4.928000) {a};
\node[state] (nb) at (8.512000,3.584000) {b};
\node[state] (nc) at (3.584000,4.928000) {c};
\node[state] (nd) at (4.480000,3.584000) {d};
\node[state] (ne) at (5.824000,3.584000) {e};
\node[state] (nf) at (7.182000,3.584000) {f};
\node[state] (ng) at (9.408000,2.240000) {g};
\node[state] (nh) at (3.584000,2.240000) {h};
\node[state] (ni) at (5.824000,4.928000) {i};
\node[state] (nj) at (7.168000,4.928000) {j};
\path[trans] (na) edge node[tr,swap] {$1|0$} (nb);
\path[trans] (nf) edge node[tr,swap] {$1|1$} (nb);
\path[trans] (ng) edge node[tr,swap] {$1|1$} (na);
\path[trans] (ne) edge node[tr,swap] {$1|1$} (nf);
\path[trans] (nc) edge node[tr,pos=.7,swap] {$1|0$} (nd);
\path[trans] (nb) edge[out=150,in=-20] node[tr,swap,pos=.15] {$1|0$} (nc);
\path[trans] (nb) edge node[tr,swap] {$1|1$} (ng);
\path[trans] (nd) edge node[tr,swap] {$1|1$} (ne);
\path[trans] (ni) edge node[tr] {$0|0$} (nj);
\path[trans] (nc) edge node[tr] {$0|0$} (ni);
\path[trans] (nd) edge node[tr] {$0|0$} (nh);
\path[trans] (nj) edge node[tr] {$0|0$} (na);
\path[trans] (nh) edge node[tr] {$0|0$} (nc);
\path[trans] (nh) edge node[tr,swap] {$0|1$} (ng);
 \end{tikzpicture}
\end{minipage}
\begin{minipage}{.55\linewidth}
\center 
\begin{tikzpicture}[auto,scale=0.8]
\tikzstyle{every state}=[shape=ellipse,scale=0.7]
\tikzstyle{trans}=[->,line width=.5]
\tikzstyle{tr}=[inner sep=0.5mm,scale=0.7]
\node[state] (nK) at (2.240000,4.032000) {K};
\node[state] (nL) at (2.240000,6.720000) {L};
\node[state] (nM) at (4.032000,4.032000) {M};
\node[state] (nN) at (5.824000,4.032000) {N};
\node[state] (nO) at (7.616000,4.032000) {O};
\node[state] (nP) at (9.408000,6.720000) {P};
\node[state] (nQ) at (9.408000,4.032000) {Q};
\node[state] (nR) at (5.824000,6.720000) {R};
\path[trans] (nP) edge node[tr] {$1|1$} (nR);
\path[trans] (nM) edge node[tr] {$1|1$} (nK);
\path[trans] (nR) edge node[tr] {$1|1$} (nM);
\path[trans] (nQ) edge node[tr] {$0|0$} (nO);
\path[trans] (nN) edge node[tr] {$0|1$} (nO);
\path[trans] (nP) edge node[tr] {$0|1$} (nQ);
\path[trans] (nM) edge node[tr] {$0|1$} (nN);
\path[trans] (nR) edge node[tr] {$0|0$} (nL);
\path[trans] (nK) edge node[tr] {$0|1$} (nL);
\path[trans] (nO) edge node[tr] {$0|0$} (nR);
\path[trans] (nO) edge node[tr] {$0|1$} (nP);
\path[trans] (nL) edge node[tr] {$0|1$} (nM);
 \end{tikzpicture}
\end{minipage}
\caption{$\ws_C$ is the simplification of $\ws_B$ by bisimulation.} \label{fig:wsc}
\vspace{.5cm}
\end{subfigure}

\begin{subfigure}[b]{.999\linewidth}
\begin{minipage}{.44\linewidth}
\center
\begin{tikzpicture}[auto,scale=0.8]
\tikzstyle{every state}=[shape=ellipse,scale=0.7]
\tikzstyle{trans}=[->,line width=.5]
\tikzstyle{tr}=[inner sep=0.5mm,scale=0.7]
\node[state] (na) at (9.408000,4.928000) {a};
\node[state] (nb) at (8.512000,3.584000) {b};
\node[state] (nc) at (3.584000,4.928000) {c};
\node[state] (nd) at (4.480000,3.584000) {d};
\node[state] (ne) at (5.824000,3.584000) {e};
\node[state] (nf) at (7.182000,3.584000) {f};
\node[state] (ng) at (9.408000,2.240000) {g};
\node[state] (nh) at (3.584000,2.240000) {h};
\node[state] (ni) at (5.824000,4.928000) {i};
\node[state] (nj) at (7.168000,4.928000) {j};
\path[trans] (na) edge node[tr,swap] {$1|0$} (nb);
\path[trans] (nf) edge node[tr,swap] {$1|1$} (nb);
\path[trans] (ng) edge node[tr,swap] {$1|1$} (na);
\path[trans] (ne) edge node[tr,swap] {$1|1$} (nf);
\path[trans] (nc) edge node[tr,pos=.7,swap] {$1|0$} (nd);
\path[trans] (nb) edge[out=150,in=-20] node[tr,swap,pos=.15] {$1|0$} (nc);
\path[trans] (nb) edge node[tr,swap] {$1|1$} (ng);
\path[trans] (nd) edge node[tr,swap] {$1|1$} (ne);
\path[trans] (ni) edge node[tr] {$0|0$} (nj);
\path[trans] (nc) edge node[tr] {$0|0$} (ni);
\path[trans] (nd) edge node[tr] {$0|0$} (nh);
\path[trans] (nj) edge node[tr] {$0|0$} (na);
\path[trans] (nh) edge node[tr] {$0|0$} (nc);
\path[trans] (nh) edge node[tr,swap] {$0|1$} (ng);
 \end{tikzpicture}
\end{minipage}
\begin{minipage}{.55\linewidth}
\center 
\begin{tikzpicture}[auto,scale=0.8]
\tikzstyle{every state}=[shape=ellipse,scale=0.7]
\tikzstyle{trans}=[->,line width=.5]
\tikzstyle{tr}=[inner sep=0.5mm,scale=0.7,swap]
\node[state] (nK) at (2.240000,6.720000) {K};
\node[state] (nL) at (2.240000,4.032000) {L};
\node[state] (nM) at (4.032000,4.032000) {M};
\node[state] (nM') at (4.032000,6.720000) {M'};
\node[state] (nN) at (5.824000,4.032000) {N};
\node[state] (nO) at (7.616000,4.032000) {O};
\node[state] (nO') at (7.616000,6.720000) {O'};
\node[state] (nP) at (9.408000,4.032000) {P};
\node[state] (nQ) at (9.408000,6.720000) {Q};
\node[state] (nR) at (5.824000,6.720000) {R};
\path[trans] (nP) edge node[tr,pos=.4] {$1|1$} (nR);
\path[trans] (nM') edge node[tr] {$1|1$} (nK);
\path[trans] (nR) edge node[tr,pos=.5,swap] {$1|1$} (nM);
\path[trans] (nQ) edge node[tr] {$0|0$} (nO');
\path[trans] (nN) edge node[tr] {$0|1$} (nO);
\path[trans] (nP) edge node[tr] {$0|1$} (nQ);
\path[trans] (nM) edge node[tr] {$0|1$} (nN);
\path[trans] (nR) edge node[tr,pos=.6] {$0|0$} (nL);
\path[trans] (nK) edge node[tr] {$0|1$} (nL);
\path[trans] (nO) edge node[tr,pos=.5,swap] {$0|0$} (nR);
\path[trans] (nO) edge node[tr] {$0|1$} (nP);
\path[trans] (nL) edge node[tr] {$0|1$} (nM);
\path[trans] (nO') edge node[tr] {$0|0$} (nR);
\path[trans] (nR) edge node[tr] {$1|1$} (nM');
 \end{tikzpicture}
\end{minipage}
\caption{$\ws_D$ is the simplification of $\ws_C$, using the fact that
  the successions of symbols $101$ and $010$ cannot appear. 
  The transducers to the left and to the right are called
   $\ws_\texttt{a}$ and $\ws_\texttt{b}$, respectively.}
\label{fig:wsd}
\vspace{.5cm}
\end{subfigure}
\caption{The different steps of simplification of $\ws_A$.}
\end{figure}

We see that the transducer $\ws_A$ never reads $2$, $3$, nor $4$. 
Thus the transitions that write $2$, $3$, or $4$ are never used in a tiling by $\ws$.
Let $\ws_B$  (see Figure~\ref{fig:wsb}) be the transducer $\ws_A$ after removing these unused
transitions, and deleting states that cannot appear in a tiling of a row (i.e., sources and sinks).
Then $t$ is a tiling by $\ws_A$ if and only if $t$ is a tiling by $\ws_B$, and $\ws_B$ is aperiodic if and only if $\ws_A$ is.

Now we use bisimulation to slightly simplify the transducer $\ws_B$.
The states 23300 and 23310 have the same incoming transitions, and
can therefore be coalesced into one state.
The same goes for states 21300 and 21310, and for states 2300 and 2310.
Once we coalesce all those states, we obtain the Wang set $\ws_C$
depicted in Figure~\ref{fig:wsc}.

$\ws_B$ and $\ws_C$ are equivalent.
Therefore, $\ws_B$ is aperiodic if and only if $\ws_C$ is aperiodic.

\begin{proposition}\label{prop:010}
Let $W$ be the set of biinfinite words which do not contain the words 
$010$ and $101$ as factors.
Let $u,v,w$ s.t. $u \ws_C v \ws_C w$. Then $v \in W$.

  In particular, let $(w_i)_{i\in\mathbb{Z}}$ be a biinfinite
  sequence of biinfinite binary words such that $w_i\ws_C w_{i+1}$
  for every $i\in \mathbb{Z}$. Then, for every $i\in \mathbb{Z}$, $w_i
  \in W$.
\end{proposition}
\begin{proof}
A quick inspection shows that the transducer $\ws_C$ does not accept
as input a word which contains $101$, and it does not produce a word which
contains $010$ as an output.
\end{proof}

In a tiling by $\ws_C$, the transition from Q to O is never followed
by a transition from O to P, otherwise it writes a $101$. Similarly, a transition from M to K is never preceded by a transition
from L to M, otherwise it reads a $010$. Thus, there is a bijection
between tilings by $\ws_C$ and tilings by $\ws_D$ (Figure~\ref{fig:wsd}).

We therefore have:
\begin{proposition}
$\ws$ is aperiodic if and only if $\ws_D$ is aperiodic.  
\end{proposition}  

\subsection{From \texorpdfstring{$\ws_D$ to $T_0, T_1, T_2$}{TD to T0,T1,T2}}
\label{ssec:seq}
Let $\ws_\texttt{a}$ and $\ws_\texttt{b}$ be the two connected components of $\ws_D$.
For a word $w\in \{\texttt{a},\texttt{b}\}^*$, let $\ws_{w} = \ws_{w[1]} \circ \ws_{w[2]} \circ\ldots \circ\ws_{w[\vert w\vert]}$.
The following fact can be easily checked by computer or by hand:
\begin{fact}
The transducers $\sc{\ws_{\texttt{bb}}}$, $\sc{\ws_{\texttt{aaa}}}$
and $\sc{\ws_{\texttt{babab}}}$ are empty.
\end{fact}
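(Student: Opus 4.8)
The plan is to turn each emptiness claim into an acyclicity check on a finite directed graph. Recall that $\sc{\ws}$ is empty exactly when there is no pair of bi-infinite words $w,w'$ with $w \ws w'$, i.e. when the transducer $\ws$ admits no bi-infinite run. Viewing the composed transducer as its underlying transition multigraph, a bi-infinite run is the same thing as a bi-infinite walk, and in a finite directed graph a bi-infinite walk exists if and only if the graph contains a cycle (any forward-infinite walk must, by finiteness of the vertex set, revisit a vertex and hence traverse a cycle; conversely any cycle repeated forever is a bi-infinite walk). Consequently, for each $w \in \{bb,\,aaa,\,babab\}$, the statement $\sc{\ws_w}=\emptyset$ is equivalent to: the transition graph of $\ws_w$ is acyclic. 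So it suffices to build each composition and certify acyclicity.

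I would build the compositions directly from the definition of $\circ$ given in the preliminaries. For $\ws_b\circ\ws_b$ the states are the pairs $(p,q)$ of states of $\ws_b$, with a transition reading $s$ and writing $n'$ whenever $\ws_b$ has a transition out of $p$ reading $s$ and writing some intermediate color $c$ and $\ws_b$ has a transition out of $q$ reading $c$ and writing $n'$. I would compute $\ws_{aaa}$ and $\ws_{babab}$ the same way but one factor at a time, forming $\ws_{ba}$, $\ws_{bab}$, $\ws_{baba}$, $\ws_{babab}$, and applying $\sc{\cdot}$ (deletion of sources and sinks, together with the removal of transitions joining distinct strongly connected components) after each composition. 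This is the same incremental ``compose-then-simplify'' strategy used elsewhere in the paper, and it is exactly what keeps the intermediate automata small, since after each step only the states lying on a bi-infinite walk of the partial stack survive.

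For $bb$ and $aaa$ the product automata have at most $10^2$ and $10^3$ states before trimming, and iterated deletion of sinks (equivalently a topological sort) terminates with the empty automaton; I would record either the resulting topological order or simply the empty trimmed automaton as the acyclicity certificate. The main obstacle is $babab$: its naive product has up to $10^5$ states, far too many for a transparent hand check, so the entire point is that trimming sources and sinks after each of the four compositions prevents any blow-up. The substantive step is to verify that after the last composition no cycle remains; a clean certificate here is a rank function $\rho$ on the trimmed state set that strictly decreases along every transition, whose existence is equivalent to acyclicity, so exhibiting such a $\rho$ (or the empty trimmed automaton) closes the argument.

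As a guide to why this should hold at all, note that $bb$, $aaa$ and $babab$ are precisely the shortest words forbidden by the vertical sequence of row types being Sturmian of golden-ratio slope: the Fibonacci word has no two consecutive $b$'s, no three consecutive $a$'s, and — since the gaps between consecutive $b$'s are never both of length $1$ — no factor $babab$. Thus emptiness is the expected outcome, but the proof itself is the finite acyclicity verification described above rather than this semantic heuristic.
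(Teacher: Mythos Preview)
Your proposal is correct and matches the paper's approach: the paper gives no proof beyond the remark that the fact ``can be easily checked by computer or by hand,'' and you have spelled out exactly how that check is performed (compose the transducers, trim, and verify that the resulting transition graph is acyclic, i.e.\ that iterated removal of sources and sinks leaves nothing). One small quibble: $\sc{\cdot}$ as defined in the preliminaries only iteratively removes sources and sinks, not transitions between distinct strongly connected components---the latter is a separate optimization mentioned in the section on harder cases---though folding it in does no harm to correctness here.
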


This implies that if $t$ is a tiling by $\ws_C$, then there
exists a biinfinite binary word $w\in \{\texttt{b},\texttt{aa},\texttt{bab}\}^{\mathbb{Z}}$ such
that $t(x,y)\in T(\ws_{w[y]})$ for every $y\in\mathbb{Z}$. That is,
$t$ is an image of a tiling by $\ws_\texttt{b} \cup \ws_\texttt{aa} \cup \ws_{\texttt{bab}}$. 

We will now simplify the three transducers.
\paragraph{Case of $\ws_\texttt{b}$.}

In $\ws_\texttt{b}$, every path eventually goes to the state ``N''. Thus
$\ws_{b}$ is equivalent to the following transducer (written in a compact form):

\begin{center}
\begin{tikzpicture}[->,node distance=5cm,auto,initial text=]
        
  \tikzstyle{every state}=[minimum size=10mm]
  
  \node[state] (A) at (0,0)                   {N};
  \path[ultra thick] (A) edge[loop right]   node[right] {
    $ \small
 \begin{array}{l@{{}|{}}l}
      00000&10011\\
00000000&11100011\\
00111000&11111111\\
00110&11111\\
0000011000&1110011111\\
0010&1011\\
001000&111011\\
0000010&1110011\\
0011000&1011111\\
     \end{array}$
  } (A);
\end{tikzpicture}
\end{center}

In the previous transducer, the last 4 transitions are never used in a tiling of the plane, since they read 010 or write 101. 
This allows us to simplify the transducer into:

\begin{center}
\begin{tikzpicture}[->,node distance=5cm,auto,initial text=]
        
  \tikzstyle{every state}=[minimum size=10mm]
  
  \node[state] (A) at (0,0)                   {N};
  \path[ultra thick] (A) edge[loop right]   node[right] {
    $ \small
 \begin{array}{l@{{}|{}}l}
      00000&10011\\
00000000&11100011\\
00111000&11111111\\
00110&11111\\
0000011000&1110011111\\
     \end{array}$
  } (A);
\end{tikzpicture}
\end{center}

This transducer is equivalent to $T_{0}$, that recalled here for
comparison:

\begin{center}
\begin{tikzpicture}[->,node distance=5cm,auto,initial text=,scale=0.3]
        
        \tikzstyle{every state}=[minimum size=10mm]

\node[state] (A) at (0,0)                   {};
\node[state] (F) at (10,0) {};

\path[ultra thick] (A) edge[bend left]   node[above] {$\epsilon|\epsilon$} (F);
\path[ultra thick] (F) edge[bend left]   node[below] {
$  \begin{array}{l@{|}l}
0^{5}&(100)1^{2}\\
0^{5+3}&1^{3}(000)1^{2}\\
0^{2}(111)0^{3}&1^{5+3}\\
0^{2}(110)&1^{2+3}\\
0^{5}(110)0^{2}&1^{2}(100)1^{5}\\
\end{array}$} (A);
\end{tikzpicture}
\end{center}

\begin{figure}
\begin{subfigure}[b]{.51\linewidth}
\center
\begin{tikzpicture}[auto,scale=0.85]
\tikzstyle{every state}=[shape=ellipse,inner sep=0.5mm,font=\footnotesize,scale=0.6]
\tikzstyle{trans}=[->,thin]
\tikzstyle{tr}=[inner sep=0.2mm,font=\footnotesize,scale=0.6]
\node[state] (nba) at (4.5,2) {ba};
\node[state] (nch) at (7,0) {ch};
\node[state] (ncj) at (2.5,2) {cj};
\node[state] (ndc) at (0,0) {dc};
\node[state] (neb) at (3.5,0.8) {eb};
\node[state] (ngb) at (3.5,3.2) {gb};
\path[trans] (ngb) edge node[tr,swap] {$111|000$} (ncj);
\path[trans] (neb) edge node[tr,pos=0.4] {$111|000$} (nch);
\path[trans] (neb) edge node[tr,swap] {$11|11$} (nba);
\path[trans] (ncj) edge node[tr,swap] {$11|00$} (neb);
\path[trans] (nch) edge node[tr] {$1|0$} (ndc);
\path[trans] (nch) edge[bend right=45] node[tr,swap,pos=0.8] {$11111100|11000000$} (ngb);
\path[trans] (nba) edge node[tr,swap] {$1|0$} (ngb);
\path[trans] (ndc) edge[bend left=45] node[tr,pos=0.8] {$1111|0111$} (ngb);
\path[trans] (nch) edge node[tr,swap,pos=0.6] {$0001|0000$} (nba);
\path[trans] (ndc) edge node[tr] {$00|00$} (ncj);
 \end{tikzpicture}
\subcaption{$\sc{\ws_\texttt{aa}}$}
\label{fig:wsaa}
\end{subfigure}
\hspace{.3cm}
\begin{subfigure}[b]{.45\linewidth}
\center
\begin{tikzpicture}[auto,scale=0.85]
\tikzstyle{every state}=[shape=ellipse,inner sep=0.5mm,font=\footnotesize,scale=0.6]
\tikzstyle{trans}=[->,thin]
\tikzstyle{tr}=[inner sep=0.2mm,font=\footnotesize,scale=0.6]
\node[state] (nLaO) at (5.8,6) {LaO};
\node[state] (nMbK) at (2,8) {MbK};
\node[state] (nMbR) at (7.,5) {MbR};
\node[state] (nNeR) at (3.2,6) {NeR};
\node[state] (nNcL) at (2,5) {NcL};
\node[state] (nPbN) at (5.8,7) {PbN};
\node[state] (nQcO) at (7.,8) {QcO};
\node[state] (nRcO) at (3.2,7) {RcO};
\path[trans] (nPbN) edge node[tr] {$1|1$} (nRcO);
\path[trans] (nRcO) edge node[tr] {$10|11$} (nNeR);
\path[trans] (nQcO) edge node[tr,swap] {$000000011|001111111$} (nMbK);
\path[trans] (nQcO) edge node[tr,swap] {$0000|1100$} (nMbR);
\path[trans] (nNcL) edge node[tr,swap] {$00000|11111$} (nNeR);
\path[trans] (nPbN) edge node[tr] {$0|1$} (nQcO);
\path[trans] (nMbR) edge node[tr] {$0|0$} (nNcL);
\path[trans] (nMbK) edge node[tr] {$0|1$} (nNcL);
\path[trans] (nNeR) edge node[tr] {$000000|111111$} (nLaO);
\path[trans] (nLaO) edge node[tr] {$0|0$} (nMbR);
\path[trans] (nLaO) edge node[tr] {$0000|1111$} (nPbN);
\path[trans] (nRcO) edge node[tr,swap] {$1100|1111$} (nMbK);
 \end{tikzpicture}
\subcaption{$\sc{\ws_{\texttt{bab}}}$}
\label{fig:wsbab}
\end{subfigure}
\caption{$\sc{\ws_\texttt{aa}}$ and $\sc{\ws_{\texttt{bab}}}$.}
\end{figure}

\paragraph{Case of $\ws_\texttt{aa}$.}
The transducer $\sc{\ws_\texttt{aa}}$ is depicted in Figure~\ref{fig:wsaa} in a compact form.
In this transducer, every path eventually goes to the state ``eb''. Then $\sc{\ws_\texttt{aa}}$ is equivalent to the following transducer (written in a compact form):

\begin{center}
\begin{tikzpicture}[->,node distance=5cm,auto,initial text=]
        
  \tikzstyle{every state}=[minimum size=10mm]
  
  \node[state] (A) at (0,0)                   {eb};
  \path[ultra thick] (A) edge[loop right]   node[right] {
    $ \small
 \begin{array}{l@{{}|{}}l}
      11111111&11000000\\
1111111111111&0000011100000\\
1110001111111&0000000000000\\
11110011&00000000\\
1111111110011111&0001100000000000\\
     \end{array}$
  } (A);
\end{tikzpicture}
\end{center}

This transducer is clearly equivalent to $T_1$, recalled here for
convenience:

\begin{center}
\begin{tikzpicture}[->,node distance=5cm,auto,initial text=,scale=0.3]
        
        \tikzstyle{every state}=[minimum size=10mm]

\node[state] (A) at (0,0)                   {};
\node[state] (F) at (10,0){};
\path[ultra thick] (A) edge[bend left]   node[above] {$1^{5-3}|0^{5-3}$} (F);
\path[ultra thick] (F) edge[bend left]   node[below] {
$  \begin{array}{l@{|}l}
 1^{3+3}&(110)0^{3}\\
1^{8+3}&0^{5}(111)0^{3}\\
1^{3}(000)1^{5}&0^{8+3)}\\
1^{3}(100)&0^{3+3}\\
1^{8}(100)1^{3}&0^{3}(110)0^{8}
\end{array}$} (A);
	
\end{tikzpicture}
\end{center}

\paragraph{Case of $\ws_{\texttt{bab}}$.}
The transducer $\sc{\ws_{\texttt{bab}}}$ is depicted in Figure~\ref{fig:wsbab}.

In this transducer, every path eventually goes to the state ``NeR''. Then $\sc{\ws_{\texttt{bab}}}$ is equivalent to the following transducer (written in compact form):

\begin{center}
\begin{tikzpicture}[->,node distance=5cm,auto,initial text=]
        
  \tikzstyle{every state}=[minimum size=10mm]
  
  \node[state] (A) at (0,0)                   {NeR};
  \path[ultra thick] (A) edge[loop right]   node[right] {
    $      \small
    \begin{array}{l@{{}|{}}l}
      0000000000000&1111110011111\\
000000000000000000000&111111111111100011111\\
000000000011100000000&111111111111111111111\\
0000000000110&1111111111111\\
00000000000000000011000000&11111111111001111111111111\\
     \end{array}$
  } (A);
\end{tikzpicture}
\end{center}

This transducer is clearly equivalent to $T_2$, which we recall here for the
reader's convenience:

\begin{center}
\begin{tikzpicture}[->,node distance=5cm,auto,initial text=,scale=0.3]
        
        \tikzstyle{every state}=[minimum size=10mm]

\node[state] (A) at (0,0)                   {};
\node[state] (F) at (10,0) {};

\path[ultra thick] (A) edge[bend left]   node[above] {$0^{8-3}|1^{8-3}$} (F);
\path[ultra thick] (F) edge[bend left]   node[below] {
$  \begin{array}{l@{|}l}
0^{5+3}&(100)1^{5}\\
0^{13+3}&1^{8}(000)1^{5}\\
0^{5}(111)0^{8}&1^{13+3}\\
0^{5}(110)&1^{5+3}\\
0^{13}(110)0^{5}&1^{5}(100)1^{13}\\
\end{array}$} (A);
\end{tikzpicture}
\end{center}

\section{From \texorpdfstring{$T_n,T_{n+1},T_{n+2}$ to $T_{n+1},T_{n+2}, T_{n+3}$}{Tn,Tn+1,Tn+2 to Tn+1,Tn+2,Tn+3}}
\label{sec:recur}

In this section, we prove:
\begin{theorem}
   For all words $u,v$ we have  $u T_{n+3} v \iff u T_{n+1} T_n T_{n+1} v$.
\end{theorem}  
For the reader's convenience, we recall the definition of the family of
transducers, and we introduce notations for the transitions.

$T_n$ for $n$ even:
\begin{center}
\begin{tikzpicture}[->,node distance=5cm,auto,initial text=]
        
        \tikzstyle{every state}=[minimum size=10mm]

\node[state] (A) at (0,0)                   {$0$};
\node[state] (F) at (8,0) {$5$};

\path[ultra thick] (A) edge[bend left]   node[above] {$\alpha:   0^{g(n+2)-3}|1^{g(n+2)-3}$} (F);
\path[ultra thick] (F) edge[bend left]   node[below] {
$  \begin{array}{l@{\,:\,}l@{|}l}
\beta& 0^{g(n+1)+3}&(100)1^{g(n+1)}\\
\gamma& 0^{g(n+3)+3}&1^{g(n+2)}(000)1^{g(n+1)}\\
\delta& 0^{g(n+1)}(111)0^{g(n+2)}&1^{g(n+3)+3}\\
\epsilon& 0^{g(n+1)}(110)&1^{g(n+1)+3}\\
\omega& 0^{g(n+3)}(110)0^{g(n+1)}&1^{g(n+1)}(100)1^{g(n+3)}\\
\end{array}$} (A);
\end{tikzpicture}
\end{center}

$T_{n+1}$ for $n$ even:
\begin{center}
\begin{tikzpicture}[->,node distance=5cm,auto,initial text=]
        
        \tikzstyle{every state}=[minimum size=10mm]

\node[state] (A) at (0,0)                   {$0$};
\node[state] (F) at (8,0) {$5$};
\path[ultra thick] (A) edge[bend left]   node[above] {$\mathbb{A}:  1^{g(n+3)-3}|0^{g(n+3)-3}$} (F);
\path[ultra thick] (F) edge[bend left]   node[below] {
$  \begin{array}{l@{\,:\,}l@{|}l}
\mathbb{B}& 1^{g(n+2)+3}&(110)0^{g(n+2)}\\
\mathbb{C}& 1^{g(n+4)+3}&0^{g(n+3)}(111)0^{g(n+2)}\\
\mathbb{D}& 1^{g(n+2)}(000)1^{g(n+3)}&0^{g(n+4)+3}\\
\mathbb{E}& 1^{g(n+2)}(100)&0^{g(n+2)+3}\\
\mathbb{O}& 1^{g(n+4)}(100)1^{g(n+2)}&0^{g(n+2)}(110)0^{g(n+4)}
\end{array}$} (A);
	
\end{tikzpicture}
\end{center}

Before going through the proof, some remarks:

\begin{itemize}
	\item
	  $T_n$ for $n$ even and $n$ odd are essentially similar. This means it is sufficient to
	  prove the result for $n$ even.
	\item Apply the following transformations to $T_n$: exchange input
	  and output, reverse the direction of the edge, reverse (i.e., take the mirror) 
	  the words, and exchange symbols $0$ and $1$. 
	  Then we obtain $T_n$ again (for $n$ even, with $\beta$ playing the role of
	  $\epsilon$, $\delta$ the role of $\gamma$, and $\alpha$ and
	  $\omega$ their own role). This internal symmetry will be used heavily in the proofs.
	\item All transitions are symmetric and easy to understand, except the
	  self-symmetric tiles $\omega$ and $\mathbb{O}$. These
	  transitions actually cannot occur in the tiling of the plane,
	  but a transition of shape $\omega$ or $\mathbb{O}$ large enough can appear in a
	  large enough finite strip. 
	  Therefore, it is not possible to accomplish the proof without speaking
	  about these transitions, even though they cannot appear in a tiling
	  of the plane.
  \end{itemize}

  We now proceed to prove the result.
As said before, we suppose that $n$ is even, and we will look at
the sequence of transducers $T_{n+1} \circ T_n \circ T_{n+1}$.

Note that the output of $T_n$ consists essentially of long sequences
of the symbol $1$, and a few occurrences of $100$ and
$000$ interspersed. We call these two words ``markers''.
Because the output of $T_n$ should be fed to $T_{n+1}$, the distance
between the markers that $T_n$ produces should be within what
$T_{n+1}$ can read.

\newmuskip{\medmuskipsave}
\medmuskipsave=\medmuskip
\setlength{\medmuskip}{0mu}

The following table represents the possible distance between two
consecutive markers (i.e., 000 and 100) as inputs of $T_{n+1}$.
\[\begin{array}{l|l|ll}
\text{First Marker} & \text{Second Marker} & \text{Distance}\\
\hline
(000) \text{ from } \mathbb{D} & (000) \text{ from } \mathbb{D} & g(n+5) & \rdelim\}{9}{3.5cm}[$\begin{matrix}+ag(n+4) + bg(n+5)\\ a,b \in \mathbb{N} \end{matrix}$]\\
(000) \text{ from } \mathbb{D} & (100) \text{ from } \mathbb{E} & g(n+5)\\
(000) \text{ from } \mathbb{D} & (100) \text{ from } \mathbb{O} & g(n+5) + g(n+3)\\
(100) \text{ from } \mathbb{E} & (000) \text{ from } \mathbb{D} & g(n+4) \\
(100) \text{ from } \mathbb{E} & (100) \text{ from } \mathbb{E} & g(n+4) \\
(100) \text{ from } \mathbb{E} & (100) \text{ from } \mathbb{O} & g(n+5) \\
(100) \text{ from } \mathbb{O} & (000) \text{ from } \mathbb{D} & g(n+4) + g(n+2) \\
(100) \text{ from } \mathbb{O} & (100) \text{ from } \mathbb{E} & g(n+4) + g(n+2)\\
(100) \text{ from } \mathbb{O} & (100) \text{ from } \mathbb{O} & 2g(n+4) \\
\hline
\end{array}\]

For example, between the marker $000$ from $\mathbb{D}$ and $100$ from $\mathbb{O}$, one has $3$ (the size of the first marker) plus $g(n+3)$ (the letters in $\mathbb{D}$ after the marker), plus $g(n+3)-3$ (the letters in $\mathbb{A}$) plus $g(n+4)$ (the letters of $\mathbb{O}$ before the marker). That is $3+g(n+3)+g(n+3)-3+g(n+4)=g(n+3)+g(n+5)$.

\setlength{\medmuskip}{\medmuskipsave}

We mean by distance the absolute value between the positions of the first letter of each marker.
To prove the main result, we will prove that the transitions in the
transducer $T_n$ (when surrounded by transducers $T_{n+1}$) must be done in a certain order.

In the following, we deliberately omit the transition $\alpha$:
when we say that $\gamma\beta$ cannot appear, we mean that it is
impossible to see the transitions $\gamma$, followed by $\alpha$ and then
$\beta$ in a run of the transducer $T_n$ (when surrounded by
transducers $T_{n+1}$).
\begin{lemme}
The following words cannot appear:
\begin{itemize}
\item
  $\gamma\omega$,$\gamma\gamma$,$\gamma\beta,\beta\omega,\beta\beta,\beta\epsilon\beta,\gamma\epsilon\beta$,
$  \beta\delta\epsilon\beta$,
$\gamma\delta\epsilon\beta$
\item
  $\omega\delta,\delta\delta,\epsilon\delta,\omega\epsilon,\epsilon\epsilon,\epsilon\beta\epsilon,\epsilon\beta\delta$,
  $\epsilon\beta\gamma\epsilon$, $\epsilon\beta\gamma\delta$
\end{itemize}  
\end{lemme}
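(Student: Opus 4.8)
The plan is to read off, for each of the listed words, the positions of the \emph{markers} that the corresponding run of $T_n$ forces, and to compare the distance between consecutive markers against the distances that $T_{n+1}$ can actually realize, as recorded in the table above. First I would record the bookkeeping data for $T_n$: every transition reads and writes words of \emph{equal} length (one checks this for each of $\alpha,\beta,\gamma,\delta,\epsilon,\omega$), so the read position and the write position stay in lockstep and all markers can be placed on a single horizontal axis. On the write side the markers $000$ and $100$ are produced only by $\gamma$, $\beta$ and $\omega$ (the others write a block of $1$'s), with $000$ at offset $g(n+2)$ inside $\gamma$, $100$ at the start of $\beta$, and $100$ at offset $g(n+1)$ inside $\omega$; the output lengths of the periods $\alpha\beta,\alpha\gamma,\alpha\delta,\alpha\epsilon,\alpha\omega$ are $g(n+3),g(n+4),g(n+4),g(n+3),2g(n+3)$. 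This is enough to compute, for any candidate word, the distance between its first and last marker.

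For almost all the words this distance falls strictly below the smallest realizable distance for the relevant pair of marker types (or into a gap of the realizable set), which closes the case at once: $\gamma\gamma$ gives a $000$--$000$ distance $g(n+4)<g(n+5)$, $\gamma\beta$ a $000$--$100$ distance $g(n+3)<g(n+5)$, and $\beta\beta,\beta\omega,\gamma\omega,\gamma\epsilon\beta$ behave the same way. The only point requiring care is that the realizable distances form sets of the shape $\{\,d_0+a\,g(n+4)+b\,g(n+5):a,b\in\mathbb{N}\,\}$, so to rule out a computed distance that sits between two base distances (as happens for $\beta\epsilon\beta$ and $\beta\delta\epsilon\beta$) I would invoke the uniqueness of Zeckendorf-type representations, using the recurrence $g(n+2)=g(n+1)+g(n)$ repeatedly. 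All of this I would carry out on the write side for the first list; the second list then comes for free from the internal symmetry noted in the remarks (reverse the order of transitions and apply $\beta\leftrightarrow\epsilon$, $\gamma\leftrightarrow\delta$, with $\alpha$ and $\omega$ fixed), which is a symmetry of $T_{n+1}\circ T_n\circ T_{n+1}$ and hence carries each write-side exclusion to the mirror read-side exclusion. One checks directly that this involution maps the first list exactly onto the second, so no separate computation is needed for it.

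The genuine obstacle is the single borderline word $\gamma\delta\epsilon\beta$ (and its mirror $\epsilon\beta\gamma\delta$). Here the $000$--$100$ distance equals $g(n+5)+g(n+3)$, which \emph{is} realizable, but \emph{only} as the base $D$--$O$ distance with no intervening $\mathbb{A}\mathbb{B}$ or $\mathbb{A}\mathbb{C}$ blocks, since the $D$--$E$ alternative would require $g(n+3)=a\,g(n+4)+b\,g(n+5)$, which is impossible. Thus the word forces the surrounding $T_{n+1}$ to use the self-symmetric transition $\mathbb{O}$, and the matching read-side distance ($\mathbb{C}$--$\mathbb{B}$) is likewise realizable, so the naive distance count is inconclusive on both sides. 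The exclusion must therefore come from the special status of the self-symmetric transitions $\mathbb{O}$ and $\omega$ singled out in the remarks: they cannot occur in a tiling of the plane, and a transition forced to have shape $\mathbb{O}$ rigidly determines its continuation. I expect essentially all the difficulty of the lemma to concentrate here, in treating $\mathbb{O}$/$\omega$ correctly and in checking each borderline distance against the right marker-type pair; the remaining cases are routine distance arithmetic once the lockstep bookkeeping and the symmetry reduction are set up.
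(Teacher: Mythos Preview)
Your plan is exactly the paper's: compute the marker-to-marker distance on the output side of $T_n$ for each word in the first list, compare against the table of admissible $T_{n+1}$-input distances, and get the second list from the involution $\beta\leftrightarrow\epsilon$, $\gamma\leftrightarrow\delta$. For all cases except the last one in each list this is a routine check, and your bookkeeping (equal-length transitions, marker offsets, period lengths) is set up correctly.

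You are also right that $\gamma\delta\epsilon\beta$ is genuinely different, and in fact you have spotted something the paper glosses over. The distance you compute, $2g(n+4)+g(n+1)$, is the same number the paper lists, but it equals $g(n+5)+g(n+3)$, which \emph{is} the base $D$--$\mathbb{O}$ distance in the table; so the bare distance argument does not exclude it, and the mirror case $\epsilon\beta\gamma\delta$ is likewise not excluded on the read side. Your diagnosis---that this case forces an $\mathbb{O}$ in the adjacent $T_{n+1}$---is correct.

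The only thing to be careful about is circularity: the fact that $\mathbb{O}$ cannot occur is the paper's \emph{next-but-one} lemma, and its proof uses several cases of the present lemma. Fortunately it uses only $\beta\beta$, $\beta\omega$, $\beta\epsilon\beta$, $\epsilon\delta$, $\epsilon\epsilon$, $\delta\delta$ (to enumerate the continuations of $\beta$), never $\gamma\delta\epsilon\beta$ itself. So the clean fix is to prove all the other cases here, then the $\omega$-lemma, then the $\mathbb{O}$-lemma, and only then return to $\gamma\delta\epsilon\beta$: once $\mathbb{O}$ is gone the $D$--$\mathbb{O}$ row drops out of the table and your distance argument goes through. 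Equivalently, and closer to what you sketch, you can argue directly: $\gamma\delta\epsilon\beta$ forces the terminal $(100)$ to be read by an $\mathbb{O}$ in the upper $T_{n+1}$, and then one checks (this is exactly the case analysis in the $\mathbb{O}$-lemma) that no continuation of that $\beta$ produces a next marker at an admissible $\mathbb{O}$-to-something distance.
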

\begin{proof}
All the following successions of transitions are impossible due to the
input constraints on $T_{n+1}$:

\begin{center}	
\begin{tabular}{l|l}
Case & What it would produce (which cannot be fed to $T_{n+1}$) \\
\hline
$\gamma\omega$ & (000) and (100) separated by $g(n+1)+g(n+3)$\\
$\gamma\gamma$ & (000) and (000) separated by $g(n+4)$\\
$\gamma\beta$ & (000) and (100) separated by $g(n+3)$\\
$\beta\omega$ &(100) and (100) separated by $g(n+1)+g(n+3)$\\
$\beta\beta$ &(100) and (100) separated by $g(n+3)$\\
$\beta\epsilon\beta$ &(100) and (100) separated by $2g(n+3)$\\
$\gamma\epsilon\beta$ &(000) and (100) separated by $2g(n+3)$\\
$\beta\delta\epsilon\beta$ & (100) and (100) separated by $2g(n+4) + g(n+1)$\\
$\gamma\delta\epsilon\beta$ & (000) and (100) separated by $2g(n+4) + g(n+1)$\\
\hline
\end{tabular}
\end{center}
All other cases follow by symmetry.
\end{proof}

\begin{lemme}
	$\omega$ cannot appear.
\end{lemme}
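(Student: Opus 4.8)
The plan is to pin down the possible neighbours of $\omega$ using the previous lemma, the internal symmetry, and a few more marker-distance computations of exactly the kind already used, and then to follow the input markers immediately after an occurrence of $\omega$ until their spacing violates the distance table. Throughout I rely on the observation that between two consecutive markers \emph{read} by $T_n$ the intervening transitions $\alpha,\beta,\gamma$ read only the symbol $0$, so such markers are genuinely consecutive output markers of the enclosing $T_{n+1}$ and the distance table applies verbatim.

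First I would enlarge the list of forbidden words around $\omega$. The distance between the markers read by $\delta$ and by $\omega$ is $g(n+2)+g(n+4)$, which is smaller than $g(n+5)$, the least admissible distance starting from a $111$ (a $\mathbb{C}$-marker); hence $\delta\omega$ cannot appear. Likewise the two $110$'s read by consecutive $\omega$'s are separated by $2g(n+3)$, which lies strictly between $g(n+4)$ and $g(n+5)$ and is not admissible; hence $\omega\omega$ cannot appear. By the internal symmetry (which fixes $\omega$ and swaps $\gamma\leftrightarrow\delta$, $\beta\leftrightarrow\epsilon$ while reversing words), $\delta\omega$ impossible yields $\omega\gamma$ impossible. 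Combined with $\beta\omega,\gamma\omega,\omega\delta,\omega\epsilon$ from the previous lemma, this forces the transition following an $\omega$ (ignoring the omitted $\alpha$) to be $\beta$.

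Next I would track the first marker read after the $110$ read by $\omega$. Since $\beta$ must follow, and after $\beta$ only $\gamma$, $\delta$ or $\epsilon$ is possible ($\beta\beta$ and $\beta\omega$ being forbidden), there are two cases. If $\delta$ or $\epsilon$ follows $\beta$, its marker sits at distance $g(n+1)+2g(n+3)$ from $\omega$'s marker. If instead $\gamma$ follows $\beta$, then $\gamma$ reads only zeros and the next marker — forced to come from the $\delta$ or $\epsilon$ that must follow $\gamma$ — sits at distance $4g(n+3)$. In both cases the markers are consecutive, so their separation must be an admissible spacing.

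Finally I would check that neither distance is admissible. The marker read by $\omega$ is of $\mathbb{B}$-type: by the symmetric version of the argument the transition preceding $\omega$ must be $\epsilon$, and $\epsilon$ and $\omega$ then read two $110$'s at distance $g(n+4)$, which is possible only for two $\mathbb{B}$-markers. For a $\mathbb{B}$-type starting marker the admissible spacings are $\{g(n+4),g(n+5)\}+a\,g(n+4)+b\,g(n+5)$ with $a,b\in\mathbb{N}$, a set containing no value strictly between $g(n+4)$ and $g(n+5)$, nor between $2g(n+4)$ and $g(n+4)+g(n+5)$. Since $g(n+1)+2g(n+3)$ falls in the first gap and $4g(n+3)$ in the second, both are excluded, giving the desired contradiction, so $\omega$ cannot appear. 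The main obstacle is exactly this bookkeeping: one must locate the two candidate distances inside gaps of the admissible set and justify both that the enclosing markers are consecutive and that the marker read by $\omega$ is of the right type, so that the correct row of the distance table is the one being applied.
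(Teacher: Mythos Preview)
Your argument has a genuine gap at the very first step, where you claim that $\delta\omega$ is forbidden because the distance $g(n+2)+g(n+4)$ between the $(111)$ read by $\delta$ and the $(110)$ read by $\omega$ is below ``$g(n+5)$, the least admissible distance starting from a $111$''. That minimum is wrong. The markers read by $T_n$ are the \emph{outputs} of the lower $T_{n+1}$, and on the output side of $T_{n+1}$ the marker $(111)$ (coming from $\mathbb{C}$) has forward distance $g(n+4)$ to the next $(110)$ from $\mathbb{B}$, and exactly $g(n+4)+g(n+2)$ to the next $(110)$ from $\mathbb{O}$. So $g(n+4)+g(n+2)$ is admissible, and $\delta\omega$ is not excluded by your argument. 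The confusion is a direction issue: under the paper's symmetry, $(111)/\mathbb{C}$ on the output side corresponds to $(000)/\mathbb{D}$ on the input side, but the symmetry \emph{reverses} the run, so the input-side bound ``from $(000)$ the next marker is at least $g(n+5)$'' translates to ``the previous marker before $(111)$ is at least $g(n+5)$ away'', not the forward statement you use. Consequently your deduction ``after $\omega$ must be $\beta$'' fails: $\omega\gamma$ is not ruled out (indeed the output-side distance in $\omega\gamma$ is also $g(n+4)+g(n+2)$, which is admissible when $\omega$'s $(100)$ is an $\mathbb{O}$-marker), and your later claim that $\omega$'s read marker is of $\mathbb{B}$-type rests on the same flawed step.

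The paper's proof avoids this by working entirely on the \emph{output} side of $T_n$ (which feeds the explicitly tabulated input side of the upper $T_{n+1}$) and doing a straight case analysis on what precedes $\omega$. Since $\beta\omega,\gamma\omega$ are already excluded and $\delta\delta,\varepsilon\delta,\omega\delta,\varepsilon\varepsilon,\omega\varepsilon$ force any $\delta$ or $\varepsilon$ before $\omega$ to be itself preceded by $\beta$ or $\gamma$ (with $\delta\varepsilon$ the only two-letter option), one lands on the eight patterns $\{\beta,\gamma\}\{\emptyset,\delta,\varepsilon,\delta\varepsilon\}\omega$, and each gives an inadmissible $(100)$–$(100)$ or $(000)$–$(100)$ output distance. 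Your approach could be repaired into the mirror of this: keep your correct exclusion of $\omega\omega$, drop the bogus $\delta\omega$ step, and instead handle the successor patterns $\omega\{\beta,\gamma\}\{\delta,\varepsilon\}$ and $\omega\beta\gamma\{\delta,\varepsilon\}$ directly; but that is essentially the paper's argument reflected, not a shortcut.
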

\begin{proof}
Case disjunction on what appears before:
\begin{center}	
\begin{tabular}{r|l}
Case & What it would produce (which cannot be fed to $T_{n+1}$)\\
\hline
$\beta\omega$ & see above\\
$\gamma\omega$ & see above \\
$\beta\delta\omega$ & (100) and (100) separated by \\&~~$g(n+4) + g(n+3)+g(n+1)$\\
$\gamma\delta\omega$ & (000) and (100) separated by \\&~~$g(n+4) + g(n+3)+g(n+1)$\\
$\beta\epsilon\omega$ & (100) and (100) separated by \\&~~$g(n+4) + 2g(n+1)$\\
$\gamma\epsilon\omega$ & (000) and (100) separated by \\&~~$g(n+4) + 2g(n+1)$\\
$\beta\delta\epsilon\omega$ & (100), (100) separated by \\&~~$g(n+5) + g(n+3) + g(n+1) = 2g(n+4) +  2g(n+1)$\\
$\gamma\delta\epsilon\omega$ & (000), (100) separated by \\&~~$g(n+5) + g(n+3) + g(n+1) = 2g(n+4) +  2g(n+1)$\\
\hline
\end{tabular}
\end{center}
\end{proof}

\begin{lemme}
	$\mathbb{O}$ cannot appear.
\end{lemme}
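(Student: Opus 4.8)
The plan is to argue exactly as in the previous lemma, replacing the self-symmetric transition $\omega$ of the middle copy $T_n$ by the self-symmetric transition $\mathbb{O}$ of the surrounding copies $T_{n+1}$, and exploiting the internal symmetry of the construction. First I would reduce the question to the placement of markers. The transition $\mathbb{O}$ reads the block $1^{g(n+4)}(100)1^{g(n+2)}$, so using $\mathbb{O}$ amounts to reading a $(100)$-marker surrounded by a run of at least $g(n+4)$ ones on its left and $g(n+2)$ ones on its right, with no other marker inside this window. Since every $(100)$-marker occurring in the output of $T_n$ is produced either by $\beta$ or by $\omega$, and since the previous lemma already forbids $\omega$, I may assume that the marker read by $\mathbb{O}$ is the one produced by some occurrence of $\beta$ in $T_n$.

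Next I would perform a case disjunction on the $T_n$-context immediately preceding this $\beta$, mirroring the table used for $\omega$. The only marker-free transitions that can sit between two consecutive markers are $\delta$ and $\epsilon$, and the forbidden words of the first lemma ($\delta\delta$, $\epsilon\delta$, $\epsilon\epsilon$, together with $\omega\delta$, $\omega\epsilon$) restrict the admissible insertions to $\beta$ or $\gamma$ followed by one of $\delta$, $\epsilon$, $\delta\epsilon$ (no other order being possible). For each such prefix I would compute, using $g(n+2)=g(n)+g(n+1)$ and the fact that $\alpha$, $\delta$, $\epsilon$ contribute only runs of ones of lengths $g(n+2)-3$, $g(n+3)+3$, $g(n+1)+3$, the exact distance between the $(100)$ read by $\mathbb{O}$ and the preceding marker, and compare it with the values allowed by the distance table for a second marker ``$(100)$ from $O$'', namely $g(n+5)+g(n+3)$ (after a $(000)$ read by $\mathbb{D}$), $g(n+5)$ (after a $(100)$ read by $\mathbb{E}$) and $2g(n+4)$ (after a $(100)$ read by $\mathbb{O}$), up to the additive term $a g(n+4)+b g(n+5)$. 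In each case the computed distance either falls short of the minimal admissible value or is not of the required form, so $\mathbb{O}$ cannot read that marker; where the raw length of ones would suffice (typically after a $\delta$), the preceding marker turns out to be too close, i.e. the clean window of $g(n+4)$ ones on the left of the $(100)$ would have to swallow an earlier $(000)$ or $(100)$. The remaining cases follow from the internal symmetry (reverse the words, exchange $0$ and $1$, and swap $\mathbb{B}\leftrightarrow\mathbb{E}$, $\mathbb{C}\leftrightarrow\mathbb{D}$ while fixing $\mathbb{O}$), which halves the work just as the symmetry between the two bullet lists does in the first lemma.

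I expect the main obstacle to be the distance bookkeeping rather than any conceptual point: one must handle the $\pm 3$ corrections coming from the three-symbol markers and, more delicately, the fact that $\mathbb{O}$ \emph{reads} a window overlapping two marker slots of $T_{n+1}$, so that the naive count of ones produced by $\delta\alpha\beta$ can already reach $g(n+4)$ on the left and appear to admit $\mathbb{O}$. The crux is therefore to show that in every admissible $T_n$-context the nearest preceding marker lies strictly inside the window $\mathbb{O}$ would require to be clean, or equivalently that the genuine consecutive-marker distance realised by $T_n$ never equals one of the three $\mathbb{O}$-distances above; this is exactly where the constraints of the first lemma, which pin down which short words over $\beta,\gamma,\delta,\epsilon$ may occur, have to be used in full.
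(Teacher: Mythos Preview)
Your plan has a concrete gap that the paper's approach avoids. You propose to do the case disjunction on what \emph{precedes} the $\beta$ whose $(100)$ is read by $\mathbb{O}$ (mirroring the $\omega$-lemma, where the paper indeed looked at predecessors). But looking backward to $\mathbb{O}$ is strictly weaker than looking forward from it, because $\mathbb{O}$ reads $1^{g(n+4)}(100)\,1^{g(n+2)}$: the window on the right of its marker has only $g(n+2)$ ones, while the window on the left has $g(n+4)$. This asymmetry is exactly what distinguishes the ``\,$(100)$ from $O$ as first marker'' row of the distance table (forward distances $g(n+4)+g(n+2)$ and $2g(n+4)$) from the ``\,$(100)$ from $O$ as second marker'' column (backward distances $g(n+5)$, $g(n+5)+g(n+3)$, $2g(n+4)$).

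Now, using the first lemma, the only $T_n$-prefixes that can lead into $\beta$ are $\beta\delta\beta$ and $\gamma\delta\beta$ (all of $\beta\beta$, $\gamma\beta$, $\beta\epsilon\beta$, $\gamma\epsilon\beta$, $\beta\delta\epsilon\beta$, $\gamma\delta\epsilon\beta$ are already forbidden, and $\epsilon,\delta$ cannot repeat). In both cases the distance from the previous marker to the $(100)$ of the final $\beta$ is $g(n+5)$. For $\gamma\delta\beta$ this is indeed incompatible with the required $g(n+5)+g(n+3)$, but for $\beta\delta\beta$ the value $g(n+5)$ matches the admissible distance ``\,$(100)$ from $E$ $\to$ $(100)$ from $O$''. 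Your fallback (``the preceding marker lies inside the clean window'') does not help either, since $g(n+5)>g(n+4)$. So your backward case analysis does not reach a contradiction on $\beta\delta\beta$.

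The paper instead looks \emph{forward} from $\beta$: the six suffixes $\beta\gamma$, $\beta\delta\beta$, $\beta\delta\gamma$, $\beta\delta\epsilon\beta$, $\beta\delta\epsilon\gamma$, $\beta\epsilon\gamma$ give distances $g(n+4)$, $g(n+4)+g(n+3)$, $g(n+4)+g(n+1)+g(n+3)$, $2g(n+4)+g(n+1)$, $2g(n+4)+g(n+3)$, $g(n+5)$, none of which match $g(n+4)+g(n+2)$ or $2g(n+4)$ modulo $g(n+4)\mathbb{N}+g(n+5)\mathbb{N}$. The symmetry is then used to pass from the $T_{n+1}$ that reads $T_n$'s output to the $T_{n+1}$ that writes $T_n$'s input (swapping $\beta\leftrightarrow\epsilon$, looking backward from $\epsilon$), not to halve the case list within one side. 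To fix your write-up, switch the direction of the case analysis to what follows $\beta$.
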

\begin{proof}
Suppose that $\mathbb{O}$ appears in the top transducer (i.e., the
transducers with input $T_n$).
This means the $(100)$ marker is generated, the only possibility being by $\beta$.

We prove there is no possibility to find transitions after this $\beta$.

\begin{center}	
\begin{tabular}{r|l}
Case & Why it is impossible to start from $\mathbb{O}$\\
\hline
$\beta\gamma$ & (100) and (000) separated by $g(n+4)$\\
$\beta\delta\beta$ & (100) and (100) separated by $g(n+4)+g(n+3)$\\
$\beta\delta\gamma$ & (100) and (000) separated by $g(n+4) + g(n+1) + g(n+3)$\\
$\beta\delta\epsilon\beta$ & (100) and (100) separated by $2g(n+4) + g(n+1) $\\
$\beta\delta\epsilon\gamma$ & (100) and (000) separated by $2g(n+4) + g(n+3)$\\
$\beta\epsilon\gamma$ & (100) and (000) separated by $g(n+5)$\\
\hline
\end{tabular}
\end{center}	

By symmetry, $\mathbb{O}$ cannot appear in the bottom transducer.
\end{proof}	

Now that $\mathbb{O}$ has disappeared, the possible distances between the
markers are greatly simplified.
\[
	  \begin{array}{l|l|ll}
\text{First Marker} & \text{Second Marker} & \text{Distance}\\
\hline
(000)  & (000) & g(n+5) & \rdelim\}{4}{4cm}[$\begin{matrix}+ag(n+4)+bg(n+5) \\ a,b\in \mathbb{N}\end{matrix}$]\\
(000)  & (100) & g(n+5)\\
(100)  & (000) & g(n+4) \\
(100)  & (100) & g(n+4) \\
\hline
\end{array}
\]

\begin{lemme}
The following words do not appear: $\beta\epsilon$, $\epsilon\beta$
 $\beta\delta\beta$, $\delta\gamma\delta$,
 as well as $\epsilon\gamma\epsilon$ and $\gamma\delta\gamma$
\end{lemme}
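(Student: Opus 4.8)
The plan is to reuse, almost verbatim, the marker-distance bookkeeping of the three preceding lemmas: read off the distance between two consecutive markers produced by the middle transducer $T_n$ and check it against the simplified table of admissible distances (the one obtained once $\mathbb{O}$, and hence $\omega$, have been discarded). Recall that on the output side the markers are the blocks $(100)$ written by $\beta$ and $(000)$ written by $\gamma$, whereas $\alpha,\delta,\epsilon$ write only $1$'s; dually, on the input side the markers are $(110)$ read by $\epsilon$ and $(111)$ read by $\delta$, whereas $\alpha,\beta,\gamma$ read only $0$'s. The internal symmetry (swap $\beta\leftrightarrow\epsilon$ and $\gamma\leftrightarrow\delta$, reverse the order, exchange $0\leftrightarrow1$) interchanges the two pictures, so it is enough to treat one representative of each symmetric pair.

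First I would dispatch the four words that carry two markers of the same type. For $\beta\delta\beta$ the two output $(100)$'s are separated by $3+g(n+1)+(g(n+2)-3)+(g(n+3)+3)+(g(n+2)-3)=g(n+5)$ positions, and $g(n+5)$ is not of the admissible form $g(n+4)+a\,g(n+4)+b\,g(n+5)$ $(a,b\in\mathbb{N})$ for a $(100)\to(100)$ pair; hence $\beta\delta\beta$ cannot appear. For $\gamma\delta\gamma$ the two output $(000)$'s are separated by $2g(n+4)$, which again fails to be admissible for a $(000)\to(000)$ pair. By the internal symmetry these two computations yield at once $\epsilon\gamma\epsilon$ (two input $(110)$'s at distance $g(n+5)$) and $\delta\gamma\delta$ (two input $(111)$'s at distance $2g(n+4)$), so no separate calculation is needed for them.

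The words $\beta\epsilon$ and $\epsilon\beta$ are the genuine difficulty, because each writes a single marker on the output and reads a single marker on the input, so no same-type pair is available and the one-shot distance test does not apply. Here I would instead show that the word admits no legal continuation in a bi-infinite run. For $\epsilon\beta$, the symbol $\beta$ must have a successor; $\beta\beta$ is already excluded, and each surviving choice creates a word forbidden by the previous lemma, namely $\epsilon\beta\delta$, $\epsilon\beta\epsilon$, and (using that $\gamma$ can only be followed by $\delta$ or $\epsilon$) $\epsilon\beta\gamma\delta$ or $\epsilon\beta\gamma\epsilon$; thus $\epsilon\beta$ cannot be extended to the right. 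For $\beta\epsilon$ I would look instead at the successor of $\epsilon$: the only admissible successors are $\beta$ and $\gamma$, giving $\beta\epsilon\beta$ (already impossible) and $\beta\epsilon\gamma$, whose output markers $(100)$ and $(000)$ sit at distance $2g(n+1)+3g(n+2)=g(n+5)$, not admissible for a $(100)\to(000)$ pair; so once more there is no legal continuation and bi-infiniteness gives the contradiction.

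The only genuinely new number to produce is the $(100)\to(000)$ distance $g(n+5)$ occurring in $\beta\epsilon\gamma$; everything else is a direct replay of the earlier distance arguments or a citation of the forbidden words of the previous lemma. I expect the main obstacle to be exactly the case analysis for $\beta\epsilon$ and $\epsilon\beta$: one must be careful that, after throwing away $\omega$ and the pairs already eliminated, the list of admissible successors is complete, and that each of them falls into an already-established impossibility, so that the argument really closes.
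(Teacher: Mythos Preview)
Your proposal is correct and follows essentially the same route as the paper: distance bookkeeping for the four three-letter words, and a ``no legal continuation'' argument for $\beta\epsilon$ and $\epsilon\beta$. Your distances $g(n+5)$ and $2g(n+4)$ for $\beta\delta\beta$ and $\gamma\delta\gamma$ are the paper's $g(n+4)+g(n+3)$ and $g(n+5)+g(n+2)$ rewritten; the only real difference is that for $\epsilon\beta$ the paper looks \emph{backwards} (only $\delta$ can precede $\epsilon\beta$, and then nothing can precede that $\delta$), whereas you look \emph{forwards} through the successors of $\beta$---both case analyses close for the same reason, via the forbidden list of the previous lemma.
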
	
\begin{proof}
$\beta\epsilon$ should be followed by $\gamma$ which leads to $(100)$ and $(000)$ separated by $g(n+5)$.

$\epsilon\beta$ should be preceded by a $\delta$, which cannot be
preceded by anything.

\begin{center}	
\begin{tabular}{r|l}
Case & Why it is impossible\\
\hline
$\beta\delta\beta$ & $(100),(100)$ separated by $g(n+4)+g(n+3)$\\
$\gamma\delta\gamma$ & $(000),(000)$ separated by $g(n+5)+g(n+2)$\\
\hline
\end{tabular}
\end{center}	

The last two follow by symmetry.

\end{proof}

\begin{lemme}
	Every biinfinite path on the transducer $T_n$, when it is
        surrounded by transducers $T_{n+1}$, can be written as
    paths on the following graph:
\begin{center}
\begin{tikzpicture}[->,node distance=5cm,auto,initial text=]
        
        \tikzstyle{every state}=[minimum size=10mm]

\node[state] (A) at (0,0)                   {};
\node[state] (F) at (8,0) {};
\path[ultra thick] (A) edge[bend left]   node[above] {$\gamma\delta$} (F);
\path[ultra thick] (F) edge[bend left]   node[below]
{$\beta,\epsilon,\beta\delta\gamma\epsilon,   \beta\gamma\epsilon,\beta\delta\epsilon$} (A);
\end{tikzpicture}
\end{center}	
\end{lemme}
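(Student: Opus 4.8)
The plan is to read off, from the forbidden words established in the preceding lemmas, the complete local structure of an admissible sequence over $\{\beta,\gamma,\delta,\epsilon\}$ (recall that $\alpha$ is suppressed by convention and that $\omega$ cannot appear by the previous lemma), and then to parse any infinite path by anchoring on the blocks $\gamma\delta$. First I would record the one-step neighbourhoods forced by the length-$2$ forbidden factors $\gamma\gamma,\gamma\beta,\beta\beta,\delta\delta,\epsilon\epsilon,\epsilon\delta,\beta\epsilon,\epsilon\beta$. These give: $\beta$ is always immediately preceded by $\delta$ and is followed by $\gamma$ or $\delta$; $\epsilon$ is always immediately followed by $\gamma$ and is preceded by $\gamma$ or $\delta$; $\gamma$ is followed by $\delta$ or $\epsilon$; and $\delta$ is preceded by $\beta$ or $\gamma$ and followed by $\beta$, $\gamma$, or $\epsilon$. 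In particular $\gamma\epsilon$ forces the continuation $\gamma\epsilon\gamma\delta$, since $\epsilon$ must be followed by $\gamma$ and the only competing continuation $\gamma\epsilon\gamma\epsilon$ contains the forbidden factor $\epsilon\gamma\epsilon$.

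Next I would show that the block $\gamma\delta$ necessarily occurs, so that the parse has an anchor. If $\gamma\delta$ never occurred, every $\gamma$ would be followed by $\epsilon$, forcing the pattern $\ldots\gamma\epsilon\gamma\epsilon\ldots$ and hence the forbidden $\epsilon\gamma\epsilon$; and if $\gamma$ itself never occurred the alphabet would reduce to $\{\beta,\delta\}$, where $\delta\beta$ and $\beta\delta$ are both forced and produce the forbidden $\beta\delta\beta$. Thus every infinite path contains some $\gamma\delta$, which I take as the anchor. The two-state graph then amounts to the claim that the word strictly between two consecutive such anchors is exactly one of the five return blocks.

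The core is a short case analysis on what follows a forward block $\gamma\delta$. Since $\gamma\delta\gamma$ is forbidden, the next letter is $\beta$ or $\epsilon$. If it is $\epsilon$, the forced continuation gives $\gamma\delta\,\epsilon\,\gamma\delta$, i.e. the return block $\epsilon$. If it is $\beta$, the letter after $\beta$ is $\gamma$ or $\delta$: following $\beta\gamma$ splits, via $\gamma\delta$ or the forced $\gamma\epsilon\gamma\delta$, into the return blocks $\beta$ and $\beta\gamma\epsilon$; following $\beta\delta$, the forbidden $\beta\delta\beta$ excludes a further $\beta$, and the two remaining continuations $\beta\delta\gamma$ (which forces an $\epsilon$, because $\delta\gamma\delta$ is forbidden, and then $\gamma\epsilon\gamma\delta$) and $\beta\delta\epsilon$ (which forces $\epsilon\gamma\delta$) yield the blocks $\beta\delta\gamma\epsilon$ and $\beta\delta\epsilon$. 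In every branch the path returns precisely to a fresh $\gamma\delta$, so the decomposition repeats and the entire infinite path is read as an alternation of $\gamma\delta$ with a return block, which is exactly a path in the displayed two-state graph.

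The main obstacle is not depth but bookkeeping: each branch closes only because the length-$3$ forbidden factors $\gamma\delta\gamma$, $\delta\gamma\delta$, $\epsilon\gamma\epsilon$ and $\beta\delta\beta$ are invoked at exactly the right step to prune the otherwise-plausible continuation, so the delicate part is checking that each of the five return blocks is both reachable and the unique admissible completion of its branch. The remaining care is to confirm that the anchoring $\gamma\delta$ recurs in both directions of a bi-infinite path, so that the alternation covers the whole sequence with no leftover letters and the parse is genuinely exhaustive.
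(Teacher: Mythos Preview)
Your proposal is correct and follows precisely the route the paper has in mind; the paper's own proof is the one-liner ``Clear: all other words are forbidden by the previous lemmas,'' and you have simply written out the case analysis it leaves to the reader. Your anchoring on $\gamma\delta$, the derivation of the one-step neighbourhood constraints from the length-$2$ forbidden factors, and the exhaustive enumeration of the five return words using the length-$3$ exclusions $\gamma\delta\gamma$, $\delta\gamma\delta$, $\epsilon\gamma\epsilon$, $\beta\delta\beta$ are exactly what is needed and are all sound.
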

\begin{proof}
Clear: all other words are forbidden by the previous lemmas.
\end{proof}	
Recall that in this picture, words $\alpha$ have been forgotten.
We now rewrite it adding the transitions $\alpha$.
\begin{center}
\begin{tikzpicture}[->,node distance=5cm,auto,initial text=]
        
        \tikzstyle{every state}=[minimum size=10mm]

\node[state] (A) at (0,0)                   {};
\node[state] (F) at (8,0) {};
\path[ultra thick] (A) edge[bend left]   node[above] {$\gamma\alpha\delta$} (F);
\path[ultra thick] (F) edge[bend left]   node[below]
{$\alpha\beta\alpha,\alpha\epsilon\alpha,\alpha\beta\alpha\delta\alpha\gamma\alpha\epsilon\alpha,   \alpha\beta\alpha\gamma\alpha\epsilon\alpha,\alpha\beta\alpha\delta\alpha\epsilon\alpha$} (A);
\end{tikzpicture}
\end{center}	

All transitions in the picture will be called \emph{meta-transitions}.

We now have a more accurate description of the behavior of the
transducer $T_n$ when surrounded by transducers $T_{n+1}$.
This will be sufficient to prove the results.
We will see indeed that each of the six meta-transitions
depicted can be completed in only one way by transitions of $T_{n+1}$.
This will give us six tiles, which (almost) correspond to the
transitions of $T_{n+3}$.

We will use drawings to prove the result. Let's first draw all tiles. 
The bottom corresponds to the input, and the top to the output.
The colors indicate the markers:
the blue (resp. black, red, green) corresponds to $111$ (resp. $110$, $100$ and $000$).

\newcommand\Aalpha[1]{\begin{scope}[xshift=#1cm]
  \draw (0,0) rectangle (2.9,1);
  \node at (1.5,0.5) {$\alpha$};
\end{scope}
}

\newcommand\Abeta[1]{\begin{scope}[xshift=#1cm]
  \draw (0,0) rectangle (2.1,1);
  \draw[fill=red] (0,0.5) rectangle (0.1,1);
  \node at (1,0.5) {$\beta$};
\end{scope}
}

\newcommand\Agamma[1]{\begin{scope}[xshift=#1cm]
  \draw (0,0) rectangle (5.1,1);
  \draw[fill=green] (3,0.5) rectangle (3.1,1);
  \node at (2.5,0.5) {$\gamma$};    
\end{scope}
}

\newcommand\Adelta[1]{\begin{scope}[xshift=#1cm]
  \draw (0,0) rectangle (5.1,1);
  \draw[fill=blue] (2,0) rectangle (2.1,0.5);
  \node at (2.5,0.5) {$\delta$};    
\end{scope}
}

\newcommand\Aepsilon[1]{\begin{scope}[xshift=#1cm]
  \draw (0,0) rectangle (2.1,1);
  \draw[fill=black] (2.1,0) rectangle (2,.5);
  \node at (1,0.5) {$\epsilon$};    
\end{scope}
}

\renewcommand\AA[1]{\begin{scope}[xshift=#1cm]
  \draw (0,0) rectangle (4.9,1);
  \node at (2.5,0.5) {$\mathbb{A}$};
\end{scope}
}

\newcommand\AB[1]{\begin{scope}[xshift=#1cm]
  \draw (0,0) rectangle (3.1,1);
  \draw[fill=black] (0,0.5) rectangle (0.1,1);
  \node at (1.5,0.5) {$\mathbb{B}$};
\end{scope}
}

\newcommand\AC[1]{\begin{scope}[xshift=#1cm]
  \draw (0,0) rectangle (8.1,1);
  \draw[fill=blue] (5,0.5) rectangle (5.1,1);
  \node at (4,0.5) {$\mathbb{C}$};
\end{scope}
}

\newcommand\AD[1]{\begin{scope}[xshift=#1cm]
  \draw (0,0) rectangle (8.1,1);
  \draw[fill=green] (3,0) rectangle (3.1,0.5);
  \node at (4,0.5) {$\mathbb{D}$};    
\end{scope}
}

\renewcommand\AE[1]{\begin{scope}[xshift=#1cm]
  \draw (0,0) rectangle (3.1,1);
  \draw[fill=red] (3,0) rectangle (3.1,.5);
  \node at (1.5,0.5) {$\mathbb{E}$};
\end{scope}
}

First, the transitions of $T_n$, seen as tiles:

\begin{center}
\begin{tikzpicture}[scale=0.6]
	\Aalpha{0}
	\Abeta{4}
	\Agamma{7}
\begin{scope}[yshift=2cm]
	\Adelta{0}
\end{scope}
\begin{scope}[yshift=4cm]
	\Aepsilon{0}
\end{scope}
\end{tikzpicture}	
\end{center}

Then the transitions of $T_{n+1}$:

\begin{center}
\begin{tikzpicture}[scale=0.6]
	\AA{0}
	\AB{7}
\begin{scope}[yshift=2cm]
	\AC{0}
	\AD{9}
\end{scope}
\begin{scope}[yshift=4cm]
	\AE{0}
\end{scope}
\end{tikzpicture}	
\end{center}

We now first look at $\gamma\delta$. By necessity, the following
transitions of $T_{n+1}$ should surround it:

\begin{center}
\begin{tikzpicture}[scale=0.6]
	\Agamma{0}
	\Aalpha{5.1}
	\Adelta{8}
	\begin{scope}[yshift=-1cm]
		  \AA{0.1}
		  \AC{5}
	\end{scope}
	\begin{scope}[yshift=1cm]
		  \AD{0}
		  \AA{8.1}
	\end{scope}	
\end{tikzpicture}	
\end{center}

Note that the three transducers are aligned (up to a shift of $\pm 3$) when $\gamma\alpha\delta$ is
present. 
As all other meta-transitions are enclosed by the meta-transition
$\gamma\alpha\delta$, this means that in an execution
of $T_{n+1} \circ T_n \circ T_{n+1}$, every other meta-transition should
be surrounded above and below by transitions of $T_{n+1}$ that almost
align with it. Moreover, the transitions of $T_{n+1}$ below should
begin by $\mathbb{A}$ and the transitions of $T_{n+1}$ above should
end with $\mathbb{A}$. It turns out that there is only one way to do this for
any of the other meta-transitions.

This gives for $\epsilon$ and $\beta$:
\begin{center}
\begin{tikzpicture}[scale=0.6]
	\Aalpha{0}
	\Aepsilon{2.9}
	\Aalpha{5}
	\begin{scope}[yshift=-1cm]
		  \AA{0}
		  \AB{4.9}
	\end{scope}
	\begin{scope}[yshift=1cm]
		  \AB{-0.1}
		  \AA{3}
	\end{scope}
\end{tikzpicture}
\end{center}

\begin{center}
\begin{tikzpicture}[scale=0.6]
	\Aalpha{0}
	\Abeta{2.9}
	\Aalpha{5}
	\begin{scope}[yshift=-1cm]
		  \AA{0}
		  \AE{4.9}
	\end{scope}
	\begin{scope}[yshift=1cm]
		  \AE{-0.1}
		  \AA{3}
	\end{scope}
\end{tikzpicture}
\end{center}

This gives for $\beta\gamma\epsilon$ and $\beta\delta\epsilon$:

\begin{center}
\begin{tikzpicture}[scale=0.5]
	\Aalpha{0}
	\Abeta{2.9}
	\Aalpha{5}
	\Adelta{7.9}
	\Aalpha{13}
	\Aepsilon{15.9}
	\Aalpha{18}
	\begin{scope}[yshift=-1cm]
		  \AA{0}
		  \AC{4.9}
		  \AA{13}
		  \AB{17.9}
	\end{scope}
	\begin{scope}[yshift=1cm]
		  \AE{-0.1}
		  \AA{3}
		  \AC{7.9}
		  \AA{16}
	\end{scope}
\end{tikzpicture}
\end{center}

\begin{center}
\begin{tikzpicture}[scale=0.5]
	\Aalpha{0}
	\Abeta{2.9}
	\Aalpha{5}
	\Agamma{7.9}
	\Aalpha{13}
	\Aepsilon{15.9}
	\Aalpha{18}
	\begin{scope}[yshift=-1cm]
		  \AA{0}
		  \AD{4.9}
		  \AA{13}
		  \AB{17.9}
	\end{scope}
	\begin{scope}[yshift=1cm]
		  \AE{-0.1}
		  \AA{3}
		  \AD{7.9}
		  \AA{16}
	\end{scope}
\end{tikzpicture}
\end{center}

And the piece de resistance $\beta\delta\gamma\epsilon$: 

\begin{center}
\begin{tikzpicture}[scale=0.4]
	\Aalpha{0}
	\Abeta{2.9}
	\Aalpha{5}
	\Adelta{7.9}
	\Aalpha{13}
	\Agamma{15.9}
	\Aalpha{21}
	\Aepsilon{23.9}
	\Aalpha{26}
	\begin{scope}[yshift=-1cm]
		  \AA{0}
		  \AC{4.9}
		  \AA{13}
		  \AE{17.9}
		  \AA{21}
		  \AB{25.9}
	\end{scope}
	\begin{scope}[yshift=1cm]
		  \AE{-0.1}
		  \AA{3}
		  \AB{7.9}
		  \AA{11}
		  \AD{15.9}
		  \AA{24}
	\end{scope}
\end{tikzpicture}
\end{center}
We now look at the transducer $T'$ we obtained with the preceding six pieces.
Note that $T' = T_n \circ T_{n+1}\circ T_n \circ \sigma^3$ where
$\sigma$ is the shift:
\begin{center}
	
\begin{tikzpicture}[->,node distance=5cm,auto,initial text=]
        
        \tikzstyle{every state}=[minimum size=10mm]

\node[state] (A) at (0,0)                   {};
\node[state] (F) at (8,0) {};
\path[ultra thick] (A) edge[bend left]   node[above] {$1^{g(n+5)}|0^{g(n+5)}$} (F);

\path[ultra thick] (F) edge[bend left]   node[below] {
$  \begin{array}{l@{|}l}
 1^{g(n+4)}&(110)0^{g(n+4)-3}\\
 1^{g(n+4)-3}(100)&0^{g(n+4)}\\
 1^{g(n+6)}&0^{g(n+5)}(111)0^{g(n+4)-3}\\
 1^{g(n+4)-3}(111)1^{g(n+5)}&0^{g(n+6)}\\
 1^{g(n+6)-3}(100)1^{g(n+4)}&0^{g(n+4)}(110)0^{g(n+6)-3}\\
\end{array}$} (A);
\end{tikzpicture}
\end{center}	

We recognize $T_{n+3}$ up to a shift of $3$, which proves the Theorem.

\section{End of the proof}
\label{sec:end}
\subsection{Aperiodicity of \texorpdfstring{$\ws$}{T}}
\begin{proposition}
  There are no words $u, v$ s.t.
$ u (T_{n+1} \circ T_{n} \circ T_{n+1}
\circ T_{n} \circ T_{n+1}) v$
\end{proposition}
\begin{proof}
By the previous section, $T_n$, when bordered by $T_{n+1}$ on both
sides, can be rewritten as concatenations of blocks of the following
five types:
$\beta\gamma\delta$, $\epsilon\gamma\delta$, $\beta\delta\gamma\epsilon\gamma\delta$,
$\beta\gamma\epsilon\gamma\delta$ and $\beta\delta\epsilon\gamma\delta$.

However, as $T_{n+1} \circ T_{n} \circ T_{n+1} \circ T_{n} \circ
T_{n+1} = T_{n+3} \circ T_n \circ T_{n+1}$, the block
$\epsilon\gamma\delta$  (and any block containing
it) cannot appear in the execution of the transducer $T_n$, as $T_{n+3}$ does not produce any input where
$100$ and $000$ are close enough.
So the only possible block remaining is $\beta\gamma\delta$.
But $T_{n+3}$ does not produce any input where $000$ and $000$ are
at distance $g(n+6)$.
\end{proof}

\begin{proposition}
Let $n \geq -2$. Any tiling of the plane by $\ws_D$ can be divided into strips of
vertical width $g(n), g(n+1)$ or $g(n+2)$ so that each strip is a tiling by
$\ws_{u_n},
          \ws_{u_{n+1}}$ or $\ws_{u_{n+2}}$.
        \end{proposition}  

\begin{proof}[Proof of the Proposition]
  The proof is by induction on $n$. The result is trivial for $n = -2,
  -1$, and true for $n = 0$ by Section~\ref{ssec:seq}.

Now suppose the result holds true for $n$. Consider a tiling of the plane by
$\ws_D$.
This tiling can be divided into strips that correspond to tilings by 
$\ws_{u_n},\ws_{u_{n+1}}$ or $\ws_{u_{n+2}}$.

By Proposition~\ref{prop:010}, the words in each row are elements of
$W$.
We can therefore replace each strip $\ws_{u_i}$ by $T_i$ to obtain a
tiling of the plane by $T_n \cup T_{n+1} \cup T_{n+2}$.
It is easy to see, given the inputs of these transducers that, in such
a tiling, each row corresponding to the transducer $T_n$ is surrounded
by rows corresponding to the transducer $T_{n+1}$.
As a consequence, each strip corresponding to $\ws_{u_{n}}$ is
surrounded by strips corresponding to $\ws_{u_{n+1}}$.

By the previous proposition, there are no words $u,v \in W$ s.t.
$u (T_{n+1} \circ T_n \circ T_{n+1} \circ T_n \circ T_{n+1}) v$.
As a consequence, there are no words $u,v \in W$ s.t.
$u (\ws_{u_{n+1}} \circ \ws_{u_{n}} \circ \ws_{u_{n+1}} \circ
\ws_{u_{n}} \circ \ws_{u_{n+1}}) v$.
Therefore, in the dividing of the plane by strips, we do not have 5
consecutive  strips of the words $\ws_{u_{n+1}}, \ws_{u_n},
\ws_{u_{n+1}}, \ws_{u_{n}}, \ws_{u_{n+1}}$

We can therefore replace each occurrence of 3 consecutive strips
$\ws_{u_{n+1}},$ $\ws_{u_{n}},$ $\ws_{u_{n+1}}$ by $\ws_{u_{n+3}}$ as no
occurrences overlap. After doing this, no occurrence of $\ws_{u_{n}}$
remains, which ends the proof.
\end{proof}

\begin{corollary}
The Wang set	$\ws_D = \ws_\texttt{a} \cup \ws_\texttt{b}$ is
aperiodic.

Furthermore, the set of words $u \in \{\texttt{a},\texttt{b}\}^\star$, s.t. the sequence of
transducers $\ws_u$ appears in a tiling of the plane, is exactly the set
of factors of the Fibonacci word, i.e., the set of factors of sturmian words of slope $1/\phi$, for $\phi$ the golden mean.

Biinfinite words $u \in \{a,b\}^\mathbb{Z}$, s.t $\ws_u$
which represents a valid tiling of the plane, are exactly the sturmian words of slope $1/\phi$. \end{corollary}	
See~\cite{berstelseebold} for some references on sturmian words.

\begin{proof}
First, note that, for all $n$, the transducer $T_n$ contains a
biinfinite path. In particular, there exists $u,v \in W$ s.t $uT_n v$ and
therefore s.t. $u \ws_{u_{n}} v$. We have then, for all $n$, a 
tiling of $g(n)$ consecutive rows by $\ws_D$. By compactness, there
exists a tiling of the plane by $\ws_D$.

Now consider any tiling by $\tau_D$. Let $v$ be the word over the
alphabet $\{\texttt{a},\texttt{b}\}$ s.t. $v_i = \texttt{a}$ if the $i$-th row of the tiling
corresponds to $\ws_\texttt{a}$  and $v_i = \texttt{b}$ otherwise.

By the previous proposition, any tiling by $\tau_D$ can be decomposed
into tilings by $\tau_{u_{n}}, \tau_{u_{n+1}}, \tau_{u_{n+2}}$ for all
$n$, which implies that the word $v$ can be written as a concatenation
of $u_{n}$, $u_{n+1}$ and $u_{n+2}$.

The sequence of words $u_n$ we defined is the sequence of singular
factors of the Fibonacci word (see for example~\cite{Wen}). Thus, $v$ has the same set of factors as the Fibonacci word. In
particular, $v$ is not periodic.
\end{proof}

\def\slope{1/(\phi+2)} \begin{corollary}
The Wang set $\ws$ is aperiodic.
Furthermore, the set of words $u \in \{0,1\}^\star$ s.t. the sequence of
transducers $\ws_u$ appears in a tiling of the plane is exactly the set
of factors of sturmian words of slope $\slope$, for $\phi$ the golden mean.

The set of biinfinite words $u \in \{0,1\}^\mathbb{Z}$ s.t $\ws_u$
which represents a valid tiling of the plane  are exactly the sturmian words  of slope $\slope$. \end{corollary}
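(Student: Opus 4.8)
The plan is to transport aperiodicity and the combinatorial description through the chain of equivalences already built in Section~\ref{ssec:D}, and then to translate the $\{a,b\}$-layer into a statement about $\{0,1\}$-words by the obvious block morphism, invoking only standard facts about Sturmian words.

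First I would settle aperiodicity. Recall that tilings by $\ws$ are in bijection with tilings by $\ws_A$, then $\ws_B$, $\ws_C$, and finally $\ws_D=\ws_a\cup\ws_b$, each step preserving periodicity. Since the preceding corollary asserts that $\ws_D$ tiles the plane and is aperiodic, $\ws$ tiles the plane as well. For aperiodicity it suffices to see that a periodic $\ws$-tiling would yield a periodic $\ws_D$-tiling. The decomposition of a $\ws$-tiling into horizontal strips of type $1000$ and $10000$ is \emph{forced}: the emptiness of $\sc{\ws_{11}},\sc{\ws_{101}},\sc{\ws_{1001}}$ and $\sc{\ws_{00000}}$ confines the vertical distance between two consecutive rows of type $\ws_1$ to $\{4,5\}$. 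Hence any translation leaving a $\ws$-tiling invariant respects this strip structure and descends to a translation of the associated $\ws_D$-tiling; by Lemma~\ref{lm:per} we may take it with independent horizontal and vertical components, contradicting the aperiodicity of $\ws_D$.

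Next I would set up the morphism $\mu:\{a,b\}^{*}\to\{0,1\}^{*}$ with $\mu(a)=1000$ and $\mu(b)=10000$, $a$ being the frequent Fibonacci letter (the $4$-row strip). Under the bijection above, the vertical $\{0,1\}$-word of a $\ws$-tiling is exactly the image under $\mu$ of the vertical $\{a,b\}$-word of the corresponding $\ws_D$-tiling, and by the preceding corollary these $\{a,b\}$-words are precisely the Sturmian words of slope $1/\phi$ (their factors being the factors of the Fibonacci word $F$). The crux is then to identify $\mu(F)$. In $\mu(F)$ the symbols $1$ sit exactly at the strip boundaries, so the gaps between consecutive $1$'s take the two \emph{consecutive} values $4$ and $5$, and the gap sequence is $F$ itself (via $a\mapsto4,\ b\mapsto5$), hence Sturmian of slope $1/\phi$. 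A biinfinite binary word whose $1$-gaps take two consecutive integer values forming a Sturmian sequence is itself Sturmian, the gap sequence being the return-word recoding that preserves balance (see~\cite{berstelseebold}). Its slope is the reciprocal of the mean gap, $\bigl(4\cdot\tfrac1\phi+5\cdot\tfrac1{\phi^{2}}\bigr)^{-1}=\phi^{2}/(4\phi+5)=\phi/(5\phi-1)$, the last equality being $\phi^{2}=\phi+1$. Thus $\mu(F)$ is Sturmian of slope $\phi/(5\phi-1)$.

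Finally I would conclude both assertions. Since all Sturmian words of a fixed irrational slope share one factor set, the factors of $\mu(F)$ are exactly those of the Sturmian words of slope $\phi/(5\phi-1)$; as every $\{0,1\}$-word appearing in a $\ws$-tiling is such a factor, and conversely every such factor is realised (minimality of the Sturmian subshift together with Lemma~\ref{lm:comp}), the first statement follows. For the biinfinite statement, the set of valid vertical words is closed, shift-invariant, and contains $\mu(v)$ for every Sturmian $v$ of slope $1/\phi$; minimality of the slope-$\phi/(5\phi-1)$ subshift makes this orbit closure the whole subshift, while the forced strip decomposition shows every valid word lies inside it, giving equality. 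The single nontrivial ingredient is the transfer in the third paragraph---that the $\mu$-image of a Sturmian word is again Sturmian with the announced slope; everything else is bookkeeping along the reductions already proved.
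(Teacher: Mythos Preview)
Your approach is essentially the paper's: carry aperiodicity back through the chain $\ws\to\ws_A\to\ws_B\to\ws_C\to\ws_D$, then push the Sturmian description of the $\{a,b\}$-layer forward through a block morphism to the $\{0,1\}$-layer. The paper simply names this morphism $\psi$, cites \cite[Corollary~2.2.19]{berstelseebold} to recognise it as a Sturmian morphism (writing it as $\tilde G^{3}D$), and declares the slope computation routine; you instead derive the slope directly from the mean block length, which is a legitimate and arguably more self-contained alternative. Your aperiodicity paragraph and the closing minimality/compactness argument are also in line with how the paper uses the earlier reductions.

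One concrete slip: your morphism is reversed. By the paper's definitions (the proof sketch in Section~\ref{sec:11} and the captions in Section~\ref{ssec:D}), $\ws_a$ is the simplification of $\ws_{10000}$ and $\ws_b$ of $\ws_{1000}$, so the paper's $\psi$ sends $a\mapsto 10000$ and $b\mapsto 1000$; the frequent letter $a$ is the \emph{five}-row strip, not the four-row one. With the correct assignment and $\operatorname{freq}(a)=1/\phi$, the mean block length is $4+1/\phi$ and the density of $1$'s in the image is $\phi/(4\phi+1)$ rather than $\phi/(5\phi-1)$; your computation lands on the stated value only because of the swap. You should align your $\mu$ with the paper's $\psi$ and redo the arithmetic---or, if you believe the stated slope is the intended one under some convention, flag the discrepancy explicitly rather than absorbing it by mislabelling the strips.
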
	
\begin{proof}
Let $\psi$ be the morphism $\texttt{a} \mapsto 10000, \texttt{b} \mapsto 1000$.
 The set of all words $u \in \{0,1\}^\mathbb{Z}$ that can appear in a
 tiling of the whole plane are exactly the image by $\psi$ of the
 sturmian words over the alphabet $\{a,b\}$ of slope $1/\phi$.
 
 It is well known that the image of a sturmian word by $\psi$ is again
 a sturmian word, see~\cite[Corollary 2.2.19]{berstelseebold}, where  $\psi = \tilde G^3 D$ (with $\{a,b\}$ instead of $\{0,1\}$ as  input alphabet).
 The derivation of the slope is routine. 
\end{proof}

\subsection{Aperiodicity of \texorpdfstring{$\ws'$}{T'}}
Recall that $\ws'$ is the Wang set from Figure~\ref{fig:ws4}.
This Wang set is obtained from $\ws$, by merging two vertical colors:
0 and 4 in $\ws$ become 0 in $\ws'$. Thus, every tiling of $\ws$ can be
turned into a tiling of $\ws'$, and therefore $\ws'$ tiles the plane. We will show below that every tiling of $\ws'$ can be turned into a tiling of $\ws$, and thus every tiling of $\ws'$ is aperiodic.

$\ws'$ is the union of two Wang sets $\ws'_0$ and $\ws'_1$ of respectively 9 and 2 tiles.
The following facts can be easily checked by computer.
For $w\in \{0,1\}^*\setminus\{\epsilon\}$, let $\ws'_{w} = {\ws'_{w[1]} \circ \ws'_{w[2]} \circ \ldots \ws'_{w[\vert w\vert]}}$.
\begin{fact}
\sloppy The transducers $\sc{\ws'_{111}}$, $\sc{\ws'_{101}}$, $\sc{\ws'_{1001}}$, $\sc{\ws'_{1000001}}$, $\sc{\ws'_{10000001}}$, $\sc{\ws'_{100000001}}$, $\sc{\ws'_{000000000}}$, $\sc{\ws'_{000011}}$, $\sc{\ws'_{110000}}$ and $\sc{\ws'_{1100011}}$ are empty.
\end{fact}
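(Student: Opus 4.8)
The plan is to reduce each claimed emptiness to an acyclicity check on a finite graph, which is then a bounded, mechanical computation. Recall from the definition of the simplification operator that $\sc{\ws'_w}$ is empty if and only if there are no bi-infinite words $u,u'$ over the vertical alphabet with $u\,\ws'_w\,u'$, that is, if and only if the transducer $\ws'_w$ admits no bi-infinite run. Unfolding the definition of composition, a state of $\ws'_w = \ws'_{w[1]}\circ\cdots\circ\ws'_{w[|w|]}$ is a $|w|$-tuple of horizontal colors of $\ws'$ (the west edges of a vertical stack of $|w|$ tiles whose $i$-th tile has type $w[i]$ and whose internal north/south colors match), and each transition reads the south color of the bottom tile and writes the north color of the top tile. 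Thus a bi-infinite run of $\ws'_w$ is exactly a tiling of the bi-infinite horizontal strip $\mathbb{Z}\times\{1,\dots,|w|\}$ in which row $i$ is tiled by $\ws'_{w[i]}$.

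Since $\ws'_w$ is a finite transducer, such a bi-infinite run exists if and only if its underlying (multi)graph contains a directed cycle: a bi-infinite path must repeat a state by the pigeonhole principle and hence traverse a cycle, and conversely any directed cycle can be looped in both directions to yield a bi-infinite run. Equivalently, the iterated deletion of sources and sinks that defines $\sc{\cdot}$ removes every state precisely when the graph is acyclic, because a finite directed acyclic graph always has a source and is thereby emptied, whereas the vertices of any cycle have both an incoming and an outgoing edge and so are never trimmed. Hence proving the Fact amounts to checking, for each of the ten words $w$ in the list, that the graph of $\ws'_w$ is a directed acyclic graph.

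I would carry this out exactly as in Section~\ref{sec:no10}. First build $\ws'_0$ and $\ws'_1$ from Figure~\ref{fig:ws4} ($9$ and $2$ tiles respectively). Then, for each $w$, construct $\ws'_w$ by repeated composition, interleaving each composition step with trimming of sources and sinks so that only reachable-and-co-reachable states are retained; this keeps the intermediate transducers small even though the a priori state space $\{0,1,2,3\}^{|w|}$ is large for the length-$9$ words ($100000001$, $000000000$, and so on). Finally I would run a topological sort (or a depth-first cycle test) on the trimmed graph: the Fact asserts that in each of these ten cases no cycle survives, so the trimmed transducer is empty.

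The computation itself is routine and can be done by hand for the short words and by computer for the rest; the only mild obstacle is the bookkeeping for the longer words, where the full composition over $\{0,1,2,3\}^{9}$ would be unwieldy if it were built naively. This is precisely the issue avoided by simplifying after each composition step, as explained in Section~\ref{sec:no10} (and, if one wishes to shrink the transducers further, by collapsing bisimilar states), after which each $\ws'_w$ is handled as a small graph whose acyclicity is immediate to verify.
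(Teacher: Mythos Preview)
Your proposal is correct and matches the paper's approach: the paper offers no proof beyond the sentence ``The following facts can be easily checked by computer,'' and the method you describe---iteratively building $\ws'_w$ by composition with trimming at each step, then checking that no cycle survives---is exactly the procedure laid out in Section~\ref{sec:no10} for such checks. Your reduction of emptiness of $\sc{\ws'_w}$ to acyclicity of the underlying graph is the right characterization, and your plan to interleave simplification with composition is precisely the optimization the paper emphasizes.
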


Thus, if $t$ is a tiling by $\ws'$, then there exists a biinfinite binary word $w\in \{1000,10000,100011000,$ $100000000\}^{\mathbb{Z}}$ such that $t(x,y)\in T(\ws'_{w[y]})$ for every $x,y\in\mathbb{Z}$.

Let $\ws'_A = \sc{\ws'_{1000} \cup \ws'_{10000} \cup {\ws'_{100000000} \cup {\ws'_{100011000}}}}$.
As before, $\ws'_A$ has unused transitions (those which write $2$ or $3$). Once deleted, and then once having deleted states which cannot appear in a tiling of a row, we obtain $\ws'_B$.
$\ws'_B$ has 4 connected components: two were already present in $\ws$: $\ws_\texttt{a}$ and $\ws_\texttt{b}$, the third one $\ws_c$ is a subset of $\ws'_{100000000}$, and the last one $\ws_d$ is a subset of $\ws'_{100011000}$.

\begin{proposition}\label{fa:cdab}
$\ws'_{11}$ is isomorphic to a subset of $\ws'_{01}$, and $\ws'_{100000}$ is isomorphic to a subset of $\ws'_{100001}$.
\end{proposition}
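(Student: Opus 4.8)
The plan is to prove both inclusions by a finite inspection of the composite transducers, organised so that in each case only a single row has to be compared. First I would fix the partition $\ws'=\ws'_0\cup\ws'_1$ and record the one observation that drives everything: the two tiles of $\ws'_1$ are the self-loops at the marker state reading/writing $1|0$ and $0|2$, and these same two labels occur again among the self-loops of $\ws'_0$, namely at the loop $1|0$ on one state and the loop $0|2$ on another. Thus each tile of $\ws'_1$ has a ``twin'' in $\ws'_0$ with identical vertical (input/output) behaviour, differing only in its horizontal colour. Replacing a tile by its twin is exactly the move used to build both embeddings.

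For the first inclusion I would use that $\ws'_{11}=\ws'_1\circ\ws'_1$ and $\ws'_{01}=\ws'_0\circ\ws'_1$ share the top factor $\ws'_1$. Gluing the bottom factor to it (north of the bottom tile $=$ south of the top tile) leaves a single surviving composite in $\ws'_{11}$: the bottom $1|0$ loop glued under the top $0|2$ loop, a self-loop reading $1$ and writing $2$. Substituting the bottom tile by its $\ws'_0$-twin produces the identical composite inside $\ws'_{01}$, and the induced injection of states embeds $\ws'_{11}$ as a sub-transducer of $\ws'_{01}$. This case is immediate.

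For the second inclusion I would write $\ws'_{100000}=\ws'_{10000}\circ\ws'_0$ and $\ws'_{100001}=\ws'_{10000}\circ\ws'_1$, which share the bottom factor $\ws'_{10000}$; the task reduces to replacing the single top row $\ws'_0$ by $\ws'_1$ without changing the composite. The key point is that the top row reads the north boundary of $\ws'_{10000}$, which consists of words over $\{0,1\}$ (this is exactly the content of the reduction of $\ws'_{10000}$ to $\ws_a$ carried out in Section~\ref{ssec:D}). Over a binary input the state-$3$ tiles of $\ws'_0$ are unavailable, since entering state $3$ requires reading a $2$ or a $3$; the row therefore stays within states $0$ and $1$, where the only usable tiles are the two twin self-loops $1|0$ and $0|2$. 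Each such tile is then replaced by its $\ws'_1$-counterpart, yielding the embedding into $\ws'_{100001}$.

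The main obstacle is precisely this last verification for the second inclusion: one must genuinely control the north boundary of $\sc{\ws'_{10000}}$ (that it emits only $0$ and $1$, in the restricted, essentially constant patterns forced above) and then check that the twin-substitution map stays injective after trimming sources and sinks, so that one truly obtains an isomorphism onto a sub-transducer rather than a mere quotient. Both are finite checks of the same flavour as the Facts stated above, and are best confirmed by computing the four products $\ws'_{11},\ws'_{01},\ws'_{100000},\ws'_{100001}$ directly; the conceptual twin argument tells us in advance what that computation must return and why.
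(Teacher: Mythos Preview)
Your twin observation is exactly the mechanism behind both embeddings, and your argument for $\ws'_{11}\hookrightarrow\ws'_{01}$ is correct and coincides with the paper's: $\ws'_{11}$ is the single self-loop $1|2$ at state $(2,2)$, and replacing the bottom $\ws'_1$-tile by its state-$0$ twin in $\ws'_0$ lands on the loop $(02,02,1,2)$ inside $\ws'_{01}$.

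For the second inclusion, however, there is a real gap in the justification. You assert that the north boundary of $\ws'_{10000}$ is over $\{0,1\}$ and cite Section~\ref{ssec:D}. That section treats $\ws$, not $\ws'$; more importantly, the passage from $\ws_A$ to $\ws_B$ there deletes the transitions that write $2$ or $3$ \emph{because the next row up is again a $\ws_A$-row, which never reads $2$ or $3$}. In $\ws'_{100000}$ the row sitting on top of $\ws'_{10000}$ is a bare $\ws'_0$, and $\ws'_0$ \emph{does} read $2$ and $3$, so that deletion is not a priori justified. Concretely, $\sc{\ws_{10000}}$ (and likewise $\sc{\ws'_{10000}}$) still has transitions writing $2$ and $3$ before that step. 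So the binary-input hypothesis for the top row is exactly what needs to be proved, and your reference does not supply it. You also need to know that in $\sc{\ws'_{100000}}$ the top row never uses the state-$0$ twin and the state-$1$ twin in the same connected component (otherwise the map ``last coordinate $\mapsto 2$'' could collide); this too is only visible after the computation.

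The paper's proof does not attempt a conceptual shortcut here: it simply computes $\ws'_{100000}$ and $\ws'_{100001}$, displays them, and points to the bold subgraph of $\ws'_{100001}$ isomorphic to $\ws'_{100000}$ (the isomorphism is exactly your twin map, sending the top-row state $1$ to $2$). Your final paragraph already concedes that this direct computation is what actually settles the matter; the issue is only that the preceding heuristic, as written, leans on a claim it has not earned.
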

\begin{proof}
$\ws'_{11}$ is the transducer with one state, which reads $1$ and writes $2$. $\ws'_{01}$ also has a loop that reads $1$ and writes $2$: the transition $(02,02,1,2)$.
$\ws'_{100000}$ and $\ws'_{100001}$ are depicted in Figure~\ref{fig:w4100001} (in a compact form). $\ws'_{100000}$ is isomorphic to the subset of $\ws'_{100001}$ drawn in bold.
\end{proof}

\begin{figure}
\centering
\raisebox{-0.5\height}{
\begin{tikzpicture}[auto,scale=1.6]
\tikzstyle{every state}=[shape=ellipse,inner sep=0mm,font=,scale=0.7]
\tikzstyle{trans}=[->,thin]
\tikzstyle{tr}=[inner sep=.5mm,font=,scale=0.7]
\node[state] (n210302) at (2.912000,4.704000) {210302};
\node[state] (n210332) at (7.168000,7.168000) {210332};
\node[state] (n211032) at (3.808000,6.496000) {211032};
\node[state] (n211302) at (6.048000,5.600000) {211302};
\node[state] (n213002) at (7.168000,4.704000) {213002};
\node[state] (n213102) at (2.912000,7.168000) {213102};
\node[state] (n213302) at (6.048000,6.496000) {213302};
\path[trans] (n210302) edge[thick] node[tr] {$1|2$} (n213002);
\path[trans] (n210332) edge node[tr] {$1|0$} (n213002);
\path[trans] (n210302) edge node[tr] {$1|2$} (n213102);
\path[trans] (n210332) edge node[tr] {$1|0$} (n213102);
\path[trans] (n211032) edge node[tr] {$1|0$} (n210302);
\path[trans] (n211302) edge[bend left=25,thick] node[tr] {$1|2$} (n213302);
\path[trans] (n213002) edge[thick] node[tr] {$1|2$} (n211302);
\path[trans] (n213102) edge node[tr] {$1|0$} (n211032);
\path[trans] (n213302) edge node[tr] {$11|00$} (n210332);
\path[trans] (n211302) edge[thick] node[tr] {$000|222$} (n210302);
\path[trans] (n213302) edge[bend left=25,thick] node[tr] {$00|22$} (n211302);
\path[trans] (n213302) edge node[tr] {$00|20$} (n211032);
\end{tikzpicture}
}
\hspace{1cm}
\raisebox{-0.5\height}{
\begin{tikzpicture}[auto,scale=1.6]
\tikzstyle{every state}=[shape=ellipse,inner sep=0mm,font=,scale=0.7]
\tikzstyle{trans}=[->,thin]
\tikzstyle{tr}=[inner sep=.5mm,font=,scale=0.7]
\node[state] (n211301) at (0.000000,0.000000) {211301};
\path[trans] (n211301) edge[loop above] node[tr] {$100|222$} (n211301);
\path[trans] (n211301) edge[loop below] node[tr] {$00011|22222$} (n211301);
\end{tikzpicture}
}
\caption{$\ws'_{100001}$ (left) and $\ws'_{100000}$ (right).}
\label{fig:w4100001}
\end{figure}

\begin{corollary}\label{cor:cdab}
$\ws_c$ and $\ws_d$ are both isomorphic to a subset of $\ws_\texttt{a} \circ \ws_\texttt{b}$.
\end{corollary}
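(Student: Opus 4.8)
The plan is to deduce the corollary from Proposition~\ref{fa:cdab} by pushing the two isomorphisms through the composition operator, using the elementary fact that composition of transducers is monotone for the relation ``is isomorphic to a subset of'': if $\mathcal{U}$ is isomorphic to a subset of $\mathcal{V}$, then $\mathcal{X}\circ\mathcal{U}\circ\mathcal{Y}$ is isomorphic to a subset of $\mathcal{X}\circ\mathcal{V}\circ\mathcal{Y}$ for any $\mathcal{X},\mathcal{Y}$. This holds because an isomorphism of Wang sets only relabels states (horizontal colors) while preserving the vertical input/output labels, and those labels are exactly what governs the matching $n=s'$ in the composition; hence the state-isomorphism extends coordinatewise to the product states, and a subset of transitions yields a subset of composed transitions.

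First I would record the word factorizations that expose where Proposition~\ref{fa:cdab} applies. Since $\ws'_w=\ws'_{w[1]}\circ\cdots\circ\ws'_{w[\vert w\vert]}$ reads $w$ from bottom to top, one has $\ws'_{100000000}=\ws'_{100000}\circ\ws'_{000}$ and $\ws'_{100011000}=\ws'_{1000}\circ\ws'_{11}\circ\ws'_{000}$. Applying the two statements of Proposition~\ref{fa:cdab} ($\ws'_{100000}$ isomorphic to a subset of $\ws'_{100001}$, and $\ws'_{11}$ isomorphic to a subset of $\ws'_{01}$) together with the monotonicity above, I obtain that $\ws'_{100000000}$ is isomorphic to a subset of $\ws'_{100001}\circ\ws'_{000}=\ws'_{100001000}$, and that $\ws'_{100011000}$ is isomorphic to a subset of $\ws'_{1000}\circ\ws'_{01}\circ\ws'_{000}=\ws'_{100001000}$. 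Since $\ws_c\subseteq\ws'_{100000000}$ and $\ws_d\subseteq\ws'_{100011000}$, restricting these isomorphisms to $\ws_c$ and $\ws_d$ shows that both $\ws_c$ and $\ws_d$ are isomorphic to subsets of $\ws'_{100001000}$.

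It then remains to identify the relevant part of $\ws'_{100001000}$ with $\ws_a\circ\ws_b$. Here $100001000=10000\cdot 1000$, so $\ws'_{100001000}=\ws'_{10000}\circ\ws'_{1000}$, and the two isomorphisms above land precisely among the transitions that do not exploit the merge of colors $0$ and $4$ (the tiles drawn in bold in Fig.~\ref{fig:w4100001}); these are exactly the image of the genuine $\ws$-tiles, i.e. of $\ws_a=\sc{\ws_{10000}}$ composed with $\ws_b=\sc{\ws_{1000}}$. This yields $\ws_c$ and $\ws_d$ as isomorphic to subsets of $\ws_a\circ\ws_b$, as claimed. The only delicate point, and the step I expect to require the most care, is this last verification that the images avoid the newly created (merge-only) transitions and therefore sit inside $\ws_a\circ\ws_b$ rather than merely inside the larger transducer $\ws'_{100001000}$; everything else is the routine bookkeeping of tracking the state-isomorphisms through the compositions.
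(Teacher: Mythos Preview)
Your approach is correct and is exactly the deduction the paper has in mind: it states the corollary without proof, as an immediate consequence of Proposition~\ref{fa:cdab}, and your monotonicity-of-composition argument together with the factorizations $100000000=100000\cdot000$ and $100011000=1000\cdot11\cdot000$ is the intended unpacking. Two minor points of precision: first, $\ws_a$ and $\ws_b$ are not literally $\sc{\ws_{10000}}$ and $\sc{\ws_{1000}}$ but their further simplifications via bisimulation and the $010/101$ pruning (though the paper itself is loose here, and for the application---turning tilings of $\ws'_B$ into tilings of $\ws_B$---landing inside $\ws_{10000}\circ\ws_{1000}$ is what matters); second, the bold transitions in Fig.~\ref{fig:w4100001} illustrate the embedding of $\ws'_{100000}$ into $\ws'_{100001}$, not directly the avoidance of the $0/4$ merge in $\ws'_{100001000}$, so your parenthetical conflates two separate checks.
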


A tiling of $\ws'_B$ can thus be turned into a tiling of $\ws_B$, by substituting every tile from $\ws_c$ (resp. $\ws_d$) by two tiles, one from $\ws_\texttt{a}$ and one from $\ws_\texttt{b}$.

\begin{theorem}
The Wang set $\ws'$ is aperiodic.
\end{theorem}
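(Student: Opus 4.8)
The plan is to assemble the structural results already established and transport aperiodicity from $\ws$ to $\ws'$. Since every tiling of $\ws$ gives a tiling of $\ws'$ by merging the colors $0$ and $4$, the set $\ws'$ tiles the plane, so it only remains to show that every tiling of $\ws'$ is aperiodic. For this I would show that an arbitrary tiling $t$ of $\ws'$ can be \emph{unmerged} into a genuine tiling $\hat t$ of $\ws$, and then argue that periodicity of $t$ would force periodicity of $\hat t$, contradicting the aperiodicity of $\ws$ proved above.

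The first step is the row decomposition. Using the Fact that $\sc{\ws'_{111}}$, $\sc{\ws'_{101}}$, $\dots$ are empty, every tiling $t$ of $\ws'$ is sliced into horizontal blocks indexed by a bi-infinite word $w \in \{1000, 10000, 100011000, 100000000\}^{\mathbb{Z}}$; that is, $t$ is (the image of) a tiling by $\ws'_A$, hence by $\ws'_B$ once the transitions writing $2$ or $3$---never read by $\ws'_A$---and the sources and sinks are deleted. At this point $t$ is encoded by a bi-infinite sequence of super-rows over the four connected components $\ws_a, \ws_b, \ws_c, \ws_d$ of $\ws'_B$.

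The second step is the substitution, where I would invoke Corollary~\ref{cor:cdab}: each of $\ws_c$ and $\ws_d$ is isomorphic to a subset of $\ws_a \circ \ws_b$. Concretely, every super-row carried by a $\ws_c$- or $\ws_d$-transition (nine physical rows, with $4$ identified to $0$) can be replaced, column by column through the isomorphism, by the stack of a $\ws_a$-super-row and a $\ws_b$-super-row; since the isomorphism preserves inputs and outputs, the new super-rows still match their neighbours above and below. Performing this replacement on every $\ws_c$ and $\ws_d$ block turns the super-row sequence into one over $\{\ws_a, \ws_b\}$, i.e. a tiling $\hat t$ by $\ws_D = \ws_a \cup \ws_b$, which through the bijections $\ws \leftrightarrow \ws_A \leftrightarrow \ws_B \leftrightarrow \ws_C \leftrightarrow \ws_D$ of Section~\ref{ssec:D} corresponds to a genuine tiling of $\ws$; as $\ws$ is aperiodic, $\hat t$ is aperiodic.

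The final step, and the main obstacle, is transporting non-periodicity. The delicate point is that the substitution changes the vertical blocking (a nine-row $\ws_c$ or $\ws_d$ block becomes two super-rows), so one cannot simply read off a period of $\hat t$ from a period of $t$. I would handle this through Lemma~\ref{lm:per}: if $\ws'$ were periodic it would admit a tiling $t$ with independent horizontal and vertical translation vectors. The isomorphism of Corollary~\ref{cor:cdab} is column-preserving, so a horizontal period of $t$ survives in $\hat t$; and the super-row sequence of a vertically periodic $t$ is itself periodic, so its image under the fixed, finitely-described substitution $\ws_a \mapsto \ws_a$, $\ws_b \mapsto \ws_b$, $\ws_c, \ws_d \mapsto \ws_a \ws_b$ is again vertically periodic. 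Hence $\hat t$ would be doubly periodic, contradicting the aperiodicity of $\ws$. The only remaining care is to verify that the chosen isomorphisms really preserve columns without introducing a horizontal shift, and that the markers (the symbols $1$ of $w$) are recovered unambiguously so that the blocking is canonical; both follow from the explicit descriptions of $\ws'_{100001}$ and $\ws'_{100000}$ (Figure~\ref{fig:w4100001}) used in Proposition~\ref{fa:cdab}.
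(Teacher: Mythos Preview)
Your proposal is correct and follows essentially the same route as the paper: reduce from $\ws'$ to $\ws'_B$, use Corollary~\ref{cor:cdab} to replace every $\ws_c$- or $\ws_d$-super-row by a pair $\ws_a\ws_b$, and obtain a tiling by $\ws_B$ (equivalently $\ws_D$, hence $\ws$), deriving a contradiction from the aperiodicity already proved. The paper's own proof is one terse paragraph that simply asserts a periodic tiling of $\ws'_B$ ``can be turned into'' a periodic tiling of $\ws_B$; your explicit appeal to Lemma~\ref{lm:per} to separate horizontal and vertical periods, and your check that the column-preserving isomorphism of Corollary~\ref{cor:cdab} carries each across the substitution, spell out precisely the step the paper leaves implicit.
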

\begin{proof}
The Wang set $\ws'$ is aperiodic if and only if $\ws'_B$ is aperiodic. 
Suppose that $\ws'_B$ is not aperiodic. We know that $\ws'$, and thus
$\ws'_B$ tile the plane. Take a periodic tiling by $\ws'_B$. This
tiling can be turned into a tiling of $\ws_B$ by the Corollary~\ref{cor:cdab}. Thus $\ws_B$ has a periodic tiling, contradiction.
\end{proof}

\subsection{A third aperiodic set \texorpdfstring{$\ws''$}{T''}}\label{sec:third}
During our research, we also find a third aperiodic set $\ws''$ of 11 Wang tiles (Figure~\ref{fig:ws4third}).
As for the two others, $\ws''$ is the union of two Wang sets, $\ws''_0$ and $\ws''_1$, of respectively 9 and 2 tiles.
For $w\in \{0,1\}^*\setminus\{\epsilon\}$, let $\ws''_{w} = {\ws''_{w[1]} \circ \ws''_{w[2]} \circ \ldots \ws''_{w[\vert w\vert]}}$.

\begin{fact}
The transducers $\sc{\ws''_{11}}$, $\sc{\ws''_{101}}$, $\sc{\ws''_{1001}}$ and $\sc{\ws''_{00000}}$ are empty.
Therefore, if $t$ is a tiling by $\ws''$, there exists a biinfinite binary word $w\in \{1000,10000\}^{\mathbb{Z}}$ such that $t(x,y)\in T(\ws''_{w[y]})$ for every $x,y\in\mathbb{Z}$.
\end{fact}

\begin{figure}
\center

\begin{tikzpicture}[scale=0.9]
\tikzstyle{every state}=[]
\tikzstyle{tr}=[inner sep=.5mm]
\tikzstyle{trans}=[->,thick,auto]
\node[state] (n0) at (1.600000,0.000000) {0};
\node[state] (n1) at (1.600000,3.200000) {1};
\node[state] (n2) at (3.200000,1.600000) {2};
\node[state] (n3) at (0.000000,1.600000) {3};
\path[trans] (n0) edge[loop right] node[tr] {$1|0$} (n0);
\path[trans] (n0) edge[bend left=10] node[tr] {$2|1$} (n3);
\path[trans] (n1) edge[bend left=10] node[tr] {$2|2$} (n0);
\path[trans] (n1) edge[bend left=10] node[tr] {$2|3$} (n3);
\path[trans] (n3) edge[bend left=10] node[tr] {$1|1$} (n0);
\path[trans] (n3) edge[bend left=10] node[tr,align=left] {$1|1$\\$2|2$} (n1);
\path[trans] (n3) edge[loop left] node[tr,align=left] {$3|1$\\$4|2$} (n3);
\path[trans] (n2) edge[loop right] node[tr,align=left] {$1|4$\\$0|2$} (n2);
 \end{tikzpicture}

\medskip

\center
\begin{tikzpicture}
\draw (0.000000,0.000000) -- (1.000000,0.000000) ;
\draw (0.000000,0.000000) -- (1.000000,1.000000) ;
\draw (0.000000,0.000000) -- (0.000000,1.000000) ;
\draw (1.000000,0.000000) -- (1.000000,1.000000) ;
\draw (0.000000,1.000000) -- (1.000000,1.000000) ;
\draw (0.000000,1.000000) -- (1.000000,0.000000) ;
\draw (0.420000,0.500000) node[left]{$0$} ;
\draw (0.570000,0.500000) node[right]{$0$} ;
\draw (0.500000,0.540000) node[above]{$0$} ;
\draw (0.500000,0.460000) node[below]{$1$} ;
\draw (1.300000,0.000000) -- (2.300000,0.000000) ;
\draw (1.300000,0.000000) -- (2.300000,1.000000) ;
\draw (1.300000,0.000000) -- (1.300000,1.000000) ;
\draw (2.300000,0.000000) -- (2.300000,1.000000) ;
\draw (1.300000,1.000000) -- (2.300000,1.000000) ;
\draw (1.300000,1.000000) -- (2.300000,0.000000) ;
\draw (1.720000,0.500000) node[left]{$0$} ;
\draw (1.870000,0.500000) node[right]{$3$} ;
\draw (1.800000,0.540000) node[above]{$1$} ;
\draw (1.800000,0.460000) node[below]{$2$} ;
\draw (2.600000,0.000000) -- (3.600000,0.000000) ;
\draw (2.600000,0.000000) -- (3.600000,1.000000) ;
\draw (2.600000,0.000000) -- (2.600000,1.000000) ;
\draw (3.600000,0.000000) -- (3.600000,1.000000) ;
\draw (2.600000,1.000000) -- (3.600000,1.000000) ;
\draw (2.600000,1.000000) -- (3.600000,0.000000) ;
\draw (3.020000,0.500000) node[left]{$1$} ;
\draw (3.170000,0.500000) node[right]{$0$} ;
\draw (3.100000,0.540000) node[above]{$2$} ;
\draw (3.100000,0.460000) node[below]{$2$} ;
\draw (3.900000,0.000000) -- (4.900000,0.000000) ;
\draw (3.900000,0.000000) -- (4.900000,1.000000) ;
\draw (3.900000,0.000000) -- (3.900000,1.000000) ;
\draw (4.900000,0.000000) -- (4.900000,1.000000) ;
\draw (3.900000,1.000000) -- (4.900000,1.000000) ;
\draw (3.900000,1.000000) -- (4.900000,0.000000) ;
\draw (4.320000,0.500000) node[left]{$1$} ;
\draw (4.470000,0.500000) node[right]{$3$} ;
\draw (4.400000,0.540000) node[above]{$3$} ;
\draw (4.400000,0.460000) node[below]{$2$} ;
\draw (5.200000,0.000000) -- (6.200000,0.000000) ;
\draw (5.200000,0.000000) -- (6.200000,1.000000) ;
\draw (5.200000,0.000000) -- (5.200000,1.000000) ;
\draw (6.200000,0.000000) -- (6.200000,1.000000) ;
\draw (5.200000,1.000000) -- (6.200000,1.000000) ;
\draw (5.200000,1.000000) -- (6.200000,0.000000) ;
\draw (5.620000,0.500000) node[left]{$3$} ;
\draw (5.770000,0.500000) node[right]{$0$} ;
\draw (5.700000,0.540000) node[above]{$1$} ;
\draw (5.700000,0.460000) node[below]{$1$} ;
\draw (6.500000,0.000000) -- (7.500000,0.000000) ;
\draw (6.500000,0.000000) -- (7.500000,1.000000) ;
\draw (6.500000,0.000000) -- (6.500000,1.000000) ;
\draw (7.500000,0.000000) -- (7.500000,1.000000) ;
\draw (6.500000,1.000000) -- (7.500000,1.000000) ;
\draw (6.500000,1.000000) -- (7.500000,0.000000) ;
\draw (6.920000,0.500000) node[left]{$3$} ;
\draw (7.070000,0.500000) node[right]{$1$} ;
\draw (7.000000,0.540000) node[above]{$1$} ;
\draw (7.000000,0.460000) node[below]{$1$} ;
\draw (0.000000,1.300000) -- (1.000000,1.300000) ;
\draw (0.000000,1.300000) -- (1.000000,2.300000) ;
\draw (0.000000,1.300000) -- (0.000000,2.300000) ;
\draw (1.000000,1.300000) -- (1.000000,2.300000) ;
\draw (0.000000,2.300000) -- (1.000000,2.300000) ;
\draw (0.000000,2.300000) -- (1.000000,1.300000) ;
\draw (0.420000,1.800000) node[left]{$3$} ;
\draw (0.570000,1.800000) node[right]{$1$} ;
\draw (0.500000,1.840000) node[above]{$2$} ;
\draw (0.500000,1.760000) node[below]{$2$} ;
\draw (1.300000,1.300000) -- (2.300000,1.300000) ;
\draw (1.300000,1.300000) -- (2.300000,2.300000) ;
\draw (1.300000,1.300000) -- (1.300000,2.300000) ;
\draw (2.300000,1.300000) -- (2.300000,2.300000) ;
\draw (1.300000,2.300000) -- (2.300000,2.300000) ;
\draw (1.300000,2.300000) -- (2.300000,1.300000) ;
\draw (1.720000,1.800000) node[left]{$3$} ;
\draw (1.870000,1.800000) node[right]{$3$} ;
\draw (1.800000,1.840000) node[above]{$1$} ;
\draw (1.800000,1.760000) node[below]{$3$} ;
\draw (2.600000,1.300000) -- (3.600000,1.300000) ;
\draw (2.600000,1.300000) -- (3.600000,2.300000) ;
\draw (2.600000,1.300000) -- (2.600000,2.300000) ;
\draw (3.600000,1.300000) -- (3.600000,2.300000) ;
\draw (2.600000,2.300000) -- (3.600000,2.300000) ;
\draw (2.600000,2.300000) -- (3.600000,1.300000) ;
\draw (3.020000,1.800000) node[left]{$3$} ;
\draw (3.170000,1.800000) node[right]{$3$} ;
\draw (3.100000,1.840000) node[above]{$2$} ;
\draw (3.100000,1.760000) node[below]{$4$} ;
\draw (3.900000,1.300000) -- (4.900000,1.300000) ;
\draw (3.900000,1.300000) -- (4.900000,2.300000) ;
\draw (3.900000,1.300000) -- (3.900000,2.300000) ;
\draw (4.900000,1.300000) -- (4.900000,2.300000) ;
\draw (3.900000,2.300000) -- (4.900000,2.300000) ;
\draw (3.900000,2.300000) -- (4.900000,1.300000) ;
\draw (4.320000,1.800000) node[left]{$2$} ;
\draw (4.470000,1.800000) node[right]{$2$} ;
\draw (4.400000,1.840000) node[above]{$4$} ;
\draw (4.400000,1.760000) node[below]{$1$} ;
\draw (5.200000,1.300000) -- (6.200000,1.300000) ;
\draw (5.200000,1.300000) -- (6.200000,2.300000) ;
\draw (5.200000,1.300000) -- (5.200000,2.300000) ;
\draw (6.200000,1.300000) -- (6.200000,2.300000) ;
\draw (5.200000,2.300000) -- (6.200000,2.300000) ;
\draw (5.200000,2.300000) -- (6.200000,1.300000) ;
\draw (5.620000,1.800000) node[left]{$2$} ;
\draw (5.770000,1.800000) node[right]{$2$} ;
\draw (5.700000,1.840000) node[above]{$2$} ;
\draw (5.700000,1.760000) node[below]{$0$} ;
\end{tikzpicture}
 
\caption{The aperiodic Wang set $\ws''$.} \label{fig:ws4third}
\end{figure}

$\ws''_{1000}$ (resp. $\ws''_{10000}$) does not act exactly as $\ws_{1000}$ (resp. $\ws_{10000}$).
However, if we compose them with the shift transducer $\mathcal{S}$ (Figure~\ref{shift}), we get transducers equivalent to $\ws_C$.
Let $\ws''_A=\sc{(\ws''_{1000} \cup \ws''_{10000})\circ S}$.
It is easy to see that the composition with $\mathcal{S}$ does not change the aperiodic status.
$\ws''_A$ never reads $2$, $3$ nor $4$. 
Thus the transitions that write $2$, $3$ or $4$ are never used in a tiling by $\ws''_A$.
Let $\ws''_B$  (Figure~\ref{fig:2_wsb}) be the transducer $\ws''_A$ after removing these unused
transitions, and deleting states that cannot appear in a tiling of a row (i.e., sources and sinks).
Some states are bisimilar in $\ws''_B$. If we contract these states, we got $\ws''_C$ (Figure~\ref{fig:2_wsc}), which is isomorphic to $\ws_C$. Thus $\ws''$ is aperiodic.

\begin{figure}[h!]
\center

\begin{tikzpicture}[scale=0.9]
\tikzstyle{every state}=[]
\tikzstyle{tr}=[inner sep=.5mm]
\tikzstyle{trans}=[->,thick,auto]
\node[state] (n0) at (0,0) {0};
\node[state] (n1) at (2,0) {1};
\path[trans] (n0) edge[loop left] node[tr]  {$0|0$} (n0);
\path[trans] (n1) edge[bend left] node[tr]  {$1|0$} (n0);
\path[trans] (n0) edge[bend left] node[tr]  {$0|1$} (n1);
\path[trans] (n1) edge[loop right] node[tr]  {$1|1$} (n1);
 \end{tikzpicture}
\caption{The shift transducer $\mathcal{S}$.} 
\label{shift}
\end{figure}

\begin{figure}[h!]
\center

\begin{subfigure}[b]{.999\linewidth}
\begin{minipage}{.44\linewidth}
\center
\begin{tikzpicture}[auto,scale=.75]
\tikzstyle{every state}=[shape=ellipse,inner sep=0mm,font=\scriptsize,scale=0.6]
\tikzstyle{trans}=[->,thin]
\tikzstyle{tr}=[inner sep=0.2mm,font=\footnotesize,scale=0.6]
\node[state] (n203300) at (7.168000,4.928000) {203300};
\node[state] (n211300) at (5.824000,4.928000) {211300};
\node[state] (n213300) at (3.584000,2.240000) {213300};
\node[state] (n230300) at (8.412000,4.032000) {230300};
\node[state] (n230331) at (8.412000,3.332000) {230331};
\node[state] (n231000) at (8.808000,5.128000) {231000};
\node[state] (n231031) at (9.708000,4.528000) {231031};
\node[state] (n231131) at (7.182000,3.284000) {231131};
\node[state] (n231300) at (4.480000,3.284000) {231300};
\node[state] (n233000) at (3.584000,4.928000) {233000};
\node[state] (n233101) at (9.408000,2.240000) {233101};
\node[state] (n233301) at (5.824000,3.284000) {233301};
\path[trans] (n203300) edge node[tr]   {$0|0$} (n231000);
\path[trans] (n211300) edge node[tr]   {$0|0$} (n203300);
\path[trans] (n213300) edge node[tr]   {$0|0$} (n233000);
\path[trans] (n213300) edge node[tr]  {$0|1$} (n233101);
\path[trans] (n230300) edge[bend left=5] node[tr,swap]   {$1|0$} (n233000);
\path[trans] (n230300) edge[bend left=10] node[tr]  {$1|1$} (n233101);
\path[trans] (n230331) edge[bend right=5] node[tr]  {$1|0$} (n233000);
\path[trans] (n230331) edge node[tr,swap]  {$1|1$} (n233101);
\path[trans] (n231000) edge node[tr,swap]   {$1|0$} (n230300);
\path[trans] (n231031) edge node[tr]  {$1|0$} (n230300);
\path[trans] (n231131) edge node[tr,swap]  {$1|1$} (n230331);
\path[trans] (n231300) edge node[tr]   {$0|0$} (n213300);
\path[trans] (n231300) edge node[tr]   {$1|1$} (n233301);
\path[trans] (n233000) edge node[tr]   {$0|0$} (n211300);
\path[trans] (n233000) edge node[tr]   {$1|0$} (n231300);
\path[trans] (n233101) edge node[tr,swap]  {$1|1$} (n231031);
\path[trans] (n233301) edge node[tr]   {$1|1$} (n231131);
 \end{tikzpicture}
\end{minipage}
\begin{minipage}{.55\linewidth}
\center 
\begin{tikzpicture}[auto,scale=0.55]
\tikzstyle{every state}=[shape=ellipse,inner sep=0mm,font=\scriptsize,scale=0.6]
\tikzstyle{trans}=[->,thin]
\tikzstyle{tr}=[inner sep=0.2mm,font=\footnotesize,scale=0.6]
\node[state] (n20300) at (7.492000,4.732000) {20300};
\node[state] (n20331) at (7.492000,3.432000) {20331};
\node[state] (n21031) at (9.632000,4.032000) {21031};
\node[state] (n21131) at (5.824000,4.032000) {21131};
\node[state] (n21300) at (3.040000,6.920000) {21300};
\node[state] (n21331) at (2.140000,6.220000) {21331};
\node[state] (n23000) at (5.824000,5.920000) {23000};
\node[state] (n23031) at (5.824000,6.920000) {23031};
\node[state] (n23101) at (9.632000,6.920000) {23101};
\node[state] (n23131) at (2.240000,4.032000) {23131};
\node[state] (n23301) at (4.032000,4.032000) {23301};
\path[trans] (n20300) edge node[tr,swap]   {$0|0$} (n23000);
\path[trans] (n20300) edge node[tr]  {$0|1$} (n23101);
\path[trans] (n20331) edge node[tr]  {$0|0$} (n23000);
\path[trans] (n20331) edge node[tr,swap]  {$0|1$} (n23101);
\path[trans] (n21031) edge[pos=.3] node[tr]   {$0|0$} (n20300);
\path[trans] (n21131) edge node[tr,swap]  {$0|1$} (n20331);
\path[trans] (n21300) edge node[tr]   {$0|1$} (n23301);
\path[trans] (n21331) edge node[tr,swap]  {$0|1$} (n23301);
\path[trans] (n23000) edge node[tr]   {$0|0$} (n21300);
\path[trans] (n23000) edge node[tr]   {$1|1$} (n23301);
\path[trans] (n23031) edge node[tr,swap]  {$0|0$} (n21300);
\path[trans] (n23031) edge[bend right=5] node[tr,swap]  {$1|1$} (n23301);
\path[trans] (n23101) edge node[tr]   {$0|1$} (n21031);
\path[trans] (n23101) edge node[tr,swap]  {$1|1$} (n23031);
\path[trans] (n23131) edge node[tr]  {$0|1$} (n21331);
\path[trans] (n23301) edge node[tr]   {$0|1$} (n21131);
\path[trans] (n23301) edge node[tr,swap]   {$1|1$} (n23131);
 \end{tikzpicture}
\end{minipage}
\caption{Wang set $\ws''_B$.} 
\label{fig:2_wsb}
\vspace{.5cm}
\end{subfigure}

\begin{subfigure}[b]{.999\linewidth}
\begin{minipage}{.44\linewidth}
\center
\begin{tikzpicture}[auto,scale=.75]
\tikzstyle{every state}=[shape=ellipse,inner sep=0mm,font=\scriptsize,scale=0.6]
\tikzstyle{trans}=[->,thin]
\tikzstyle{tr}=[inner sep=0.2mm,font=\footnotesize,scale=0.6]
\node[state] (n203300) at (7.168000,4.928000) {203300};
\node[state] (n211300) at (5.824000,4.928000) {211300};
\node[state] (n213300) at (3.584000,2.240000) {213300};
\node[state] (n230300) at (8.512000,3.584000) {230300};
\node[state] (n231000) at (9.408000,4.928000) {231000};
\node[state] (n231131) at (7.182000,3.584000) {231131};
\node[state] (n231300) at (4.480000,3.584000) {231300};
\node[state] (n233000) at (3.584000,4.928000) {233000};
\node[state] (n233101) at (9.408000,2.240000) {233101};
\node[state] (n233301) at (5.824000,3.584000) {233301};
\path[trans] (n203300) edge node[tr]         {$0|0$} (n231000);
\path[trans] (n211300) edge node[tr]         {$0|0$} (n203300);
\path[trans] (n213300) edge node[tr]         {$0|0$} (n233000);
\path[trans] (n213300) edge node[tr]        {$0|1$} (n233101);
\path[trans] (n230300) edge[out=150,in=-20] node[tr,swap,pos=.15]         {$1|0$} (n233000);
\path[trans] (n230300) edge node[tr]        {$1|1$} (n233101);
\path[trans] (n231000) edge node[tr,swap]         {$1|0$} (n230300);
\path[trans] (n231131) edge node[tr,swap]         {$1|1$} (n230300);
\path[trans] (n231300) edge node[tr]         {$0|0$} (n213300);
\path[trans] (n231300) edge node[tr]         {$1|1$} (n233301);
\path[trans] (n233000) edge node[tr]         {$0|0$} (n211300);
\path[trans] (n233000) edge node[tr]         {$1|0$} (n231300);
\path[trans] (n233101) edge node[tr]         {$1|1$} (n231000);
\path[trans] (n233301) edge node[tr]         {$1|1$} (n231131);
 \end{tikzpicture}
\end{minipage}
\begin{minipage}{.55\linewidth}
\center 
\begin{tikzpicture}[auto,scale=0.55]
\tikzstyle{every state}=[shape=ellipse,inner sep=0mm,font=\scriptsize,scale=0.6]
\tikzstyle{trans}=[->,thin]
\tikzstyle{tr}=[inner sep=0.2mm,font=\footnotesize,scale=0.6]
\node[state] (n20300) at (7.700000,4.032000) {20300};
\node[state] (n21031) at (9.632000,4.032000) {21031};
\node[state] (n21131) at (5.824000,4.032000) {21131};
\node[state] (n21300) at (2.240000,6.720000) {21300};
\node[state] (n23000) at (5.824000,6.720000) {23000};
\node[state] (n23101) at (9.632000,6.720000) {23101};
\node[state] (n23131) at (2.240000,4.032000) {23131};
\node[state] (n23301) at (4.032000,4.032000) {23301};
\path[trans] (n20300) edge node[tr]   {$0|0$} (n23000);
\path[trans] (n20300) edge node[tr]  {$0|1$} (n23101);
\path[trans] (n21031) edge[pos=.3] node[tr,swap]   {$0|0$} (n20300);
\path[trans] (n21131) edge[swap] node[tr,swap]   {$0|1$} (n20300);
\path[trans] (n21300) edge node[tr]   {$0|1$} (n23301);
\path[trans] (n23000) edge[pos=.3] node[tr]   {$0|0$} (n21300);
\path[trans] (n23000) edge node[tr]   {$1|1$} (n23301);
\path[trans] (n23101) edge[swap] node[tr]   {$0|1$} (n21031);
\path[trans] (n23101) edge node[tr,pos=.7]   {$1|1$} (n23000);
\path[trans] (n23131) edge[pos=.75] node[tr]   {$0|1$} (n21300);
\path[trans] (n23301) edge node[tr]   {$0|1$} (n21131);
\path[trans] (n23301) edge node[tr,swap]   {$1|1$} (n23131);
 \end{tikzpicture}
\end{minipage}
\caption{Wang set $\ws''_C$, the simplification of $\ws''_B$ by bisimulation.} 
\label{fig:2_wsc}
\vspace{.5cm}
\end{subfigure}

\end{figure}

\subsection{Remarks}

The reader may regret that our substitutive system starts
from $\ws_\texttt{b} \cup \ws_\texttt{aa} \cup \ws_{\texttt{bab}}$ and not from
$\ws_\texttt{a} \cup \ws_\texttt{b} \cup \ws_\texttt{aa}$, or even from $\ws_\texttt{a} \cup
\ws_\texttt{b}$. We do not know if this is possible. Our definition of
$T_n$ certainly does not work for $n = -1$, and the natural
generalization of it is not equivalent to $\ws_\texttt{a}$.
This is somewhat obvious, as $T_n$ (for $n \geq 0$) cannot be
composed with itself, whereas $\ws_\texttt{a}$ should be composed with
itself to obtain $\ws_\texttt{aa}$.

$\ws_\texttt{a}$ and $\ws_\texttt{b}$ both have the property of being
 time symmetric: if we reverse the directions of all edges,
exchange inputs and outputs, and exchange 0 and 1, we obtain an
equivalent transducer (it is obvious for $\ws_\texttt{b}$ and becomes
obvious for $\ws_\texttt{a}$ if we write it in a compact form without the states
$h$ and $g$). This property was used to simplify the proof that
the sequence $(T_n)$ is a recursive sequence, but we do not know
whether it can be used to simplify the entire proof.

While we gave a sequence of transducers $T_n$, it is, of
  course, possible to give another sequence of transducers, say $U_n$,
  which are equivalent to $T_n$, and have therefore the same properties.
  Our sequence $T_n$ has nice properties, in particular the
  symmetry explained above and its short number of transitions,
  but it has the drawback that the substitution, once seen
  geometrically, has small bumps due to the fact that the tiles are
  aligned only up to $\pm 3$.
  It is possible to find a sequence $U_n$ for which this does not
  appear, by splitting some transitions of $T_n$ into transitions
  of size $g(k)$ and transitions of size exactly $3$. However,
   this complicates the proof that the sequence is recursive. We think our
  sequence $T_n$ reaches an acceptable compromise.

  We do not know whether it is possible to obtain the result
  directly on the original tileset $\ws$ rather than $\ws_D$.
  A difficulty for this approach would be that $\ws$ is not purely substitutive (due 
  in part to the fact that no sturmian word of slope
  $\slope$ is purely morphic). At best, we could obtain that
   tilings by $\ws$ are images by some map $\phi$ of 
  some substitutive tilings (which is more or less what we obtain in
  our proof).

\section{Conclusion}\label{sec:conclusion}

We have shown that there is an aperiodic set of 11 Wang tiles, and that it is the smallest possible. Moreover, the set uses only 4 colors, and this is also the minimum possible among all aperiodic Wang sets.

During our research, we also obtained a large number of Wang sets with 11
tiles which are candidates for aperiodicity. These candidates are available on the repository.
The reader might ask
  why we choose to investigate this particular set, $\ws$. The reason is
  that it is very easy for a
  computer to produce the transducer for $\ws^k$, even for large
  values of $k$ ($k \sim 1000$).
  In contrast, for almost all other tilesets, we were not able to
  reach even $k = 30$. This suggested this tileset had some
  particular structure.
We will not give here more details on all our candidates, but
  we will say that a large number of them
  are tilesets corresponding to Kari's method, with one or more tiles
  omitted. With the method we described, we were
  able to prove that some of them do not tile the plane, but the
  method did not work on all of them.
  For now, we have found only three tilesets: the ones presented in this article, which were likely to be
  substitutive or nearly substitutive.

  Experimental results tend to support the following conjecture
  \begin{conjecture}
    Let $f(n)$ be the smallest $k$ s.t. every Wang set of size $n$ that
    does not tile the plane does not tile a square of size $k$.
    Let $g(n)$ be the smallest $k$ s.t. every Wang set of size $n$ that
    tiles the plane periodically does so with a period $p \leq k$.

    Then $g(n) \leq f(n)$ for all $n$.
  \end{conjecture}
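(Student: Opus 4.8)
The plan is to prove the inequality in its equivalent contrapositive form. By minimality of $f(n)$, at the value $k=f(n)$ the defining property already holds: no size-$n$ Wang set tiles the $f(n)\times f(n)$ square without tiling the plane, so tiling that square forces a tiling of the plane. Dually, since $g(n)$ is the \emph{maximal} minimal period among size-$n$ periodic sets, the statement $g(n)\le f(n)$ is equivalent to: every size-$n$ Wang set that tiles the plane periodically does so with period at most $f(n)$. Hence it suffices to show that if a size-$n$ Wang set $\ws$ tiles the plane periodically with minimal period $p$, then $p\le f(n)$. I would attack this by assuming $p\ge f(n)+1$ and deriving a contradiction with the defining property of $f(n)$.

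The core step is a \emph{rigidity-to-deception} construction. Starting from a size-$n$ periodic set $\ws$ realizing the maximal period $g(n)=p$, I would fix one of its doubly periodic tilings (which exists by Lemma~\ref{lm:per}) and read off the forced vertical cycle $w_0\,\ws\,w_1\,\ws\,\cdots\,\ws\,w_p=w_0$ on horizontally periodic rows. The aim is to perturb $\ws$ into a set $\ws'$ of at most $n$ tiles that still admits every proper initial segment of this cycle, hence tiles a $(p-1)\times(p-1)$ square, but in which the cycle can no longer \emph{close}. Padding with colour-disjoint dummy tiles to reach exactly $n$ tiles (which cannot create new tilings), $\ws'$ would then be a size-$n$ Wang set that tiles an $f(n)$-square, because $p-1\ge f(n)$. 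Provided $\ws'$ does not tile the plane, this contradicts the definition of $f(n)$, giving $p\le f(n)$ and therefore $g(n)\le f(n)$.

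I expect the main obstacle to be exactly this construction. The requirement on $\ws'$ is threefold: (i) keep the tile count bounded by $n$; (ii) guarantee that $\ws'$ has \emph{no} biinfinite run at all, i.e. that $\sc{(\ws')^k}$ is empty for some $k$ -- note that merely breaking periodicity is not enough, since the perturbed set could still admit an \emph{aperiodic} tiling; and (iii) retain a genuine $(p-1)$-square tiling. Condition (ii) is the delicate one: a periodic tileset may support many distinct cycles, and destroying the canonical one can leave other, shorter periodic tilings intact, so one must kill every biinfinite run simultaneously. The reason the statement remains only a conjecture is that there appears to be no uniform recipe achieving (i)--(iii) for every worst-case periodic set at once; moreover $f$ is non-computable (it dominates every computable function by undecidability of the domino problem), so the bound $p\le f(n)$ cannot be certified effectively and must follow from a purely structural comparison between the mechanisms that produce long periods and those that produce large deceptive squares. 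A sensible intermediate target, mirroring the experimental evidence in this paper (such as the $10$-tile set that tiles a $212\times212$ square but not the plane), is to first establish the bound for the Kari-type families, where both the period and the largest deceptive square are governed by the same continued-fraction arithmetic, and only then attempt the general case.
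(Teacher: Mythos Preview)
The statement you are addressing is stated in the paper as a \emph{conjecture}, not a theorem; the authors offer no proof and say only that ``experimental results tend to support'' it. So there is no argument in the paper to compare your proposal against, and your task was, in effect, to settle an open problem.

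Your write-up is not a proof but a strategy with an explicitly missing centerpiece. The logical skeleton is fine: if for the extremal periodic set $\ws$ with minimal period $p=g(n)$ one could manufacture a size-$n$ set $\ws'$ that (i) still tiles a $(p-1)\times(p-1)$ square, and (ii) does not tile the plane, then $p-1<f(n)$ and hence $g(n)\le f(n)$. But the entire content of the conjecture lies in producing such a $\ws'$, and you give no construction at all---``perturb $\ws$'' is a placeholder, not a step. You yourself identify the obstruction: killing one long cycle need not kill all biinfinite runs, and nothing in the setup prevents $\ws'$ from acquiring an aperiodic tiling or a different periodic one. Any concrete perturbation (deleting a tile, recolouring an edge) will generically either destroy the large-square tiling you need for (i) or leave some other tiling intact, violating (ii); there is no reason to expect a uniform recipe across all extremal $\ws$. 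Your final paragraph is an honest diagnosis of exactly this gap, which means the document is a discussion of why the conjecture is hard rather than a proof of it.

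Two smaller remarks. The padding step (``colour-disjoint dummy tiles \dots\ cannot create new tilings'') is not automatic: a dummy tile with all four edges equal tiles the plane by itself. You need each dummy tile to carry four \emph{fresh, pairwise distinct} colours so that it cannot sit next to any tile, including itself; then it is indeed inert. Also, invoking Lemma~\ref{lm:per} to get a doubly periodic tiling is fine, but the ``minimal period $p$'' in the definition of $g(n)$ is not a priori the same as the side of the fundamental domain you obtain that way, so if you ever make the strategy precise you will need to be careful about which period you are bounding.
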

\newbox\toto
\newcommand\ACA{

\end{center}

	\caption{A fragment of a tiling by $\ws'$, with (0,1,2,3)=(white,red,blue,green).}
\end{figure}

\section*{Aknowledgements}
We gratefully acknowledge support from the PSMN (Pôle Scientifique de Modélisation Numérique) of the ENS de Lyon for the computing resources.
We also thank Jorge Pereda and anonymous reviewers for their careful reading of the manuscript and for their comments.





\bibliographystyle{plain}
\bibliography{../../../Biblio/biblio,bib}

\begin{aicauthors}
\begin{authorinfo}[ej]
  Emmanuel Jeandel\\
  Université de Lorraine, CNRS, Inria, LORIA\\
  Nancy, France\\
  emmanuel.jeandel\imageat{}loria\imagedot{}fr\\
  \url{https://members.loria.fr/EJeandel/}
\end{authorinfo}
\begin{authorinfo}[mr]
  Micha\"el Rao\\
  CNRS, ENS de Lyon, LIP\\
  Lyon, France\\
  michael.rao\imageat{}ens-lyon\imagedot{}fr \\
  \url{https://perso.ens-lyon.fr/michael.rao/}
\end{authorinfo}
\end{aicauthors}

\end{document}